\newtheorem{theorem}{Theorem}
\newtheorem{lemma}{Lemma}
\newtheorem{definition}{Definition}
\newtheorem{assumption}{Assumption}
\newtheorem{proposition}{Proposition}
\newtheorem{corollary}{Corollary}
\newtheorem{remark}{Remark}
\theoremstyle{definition}
\newcommand{\RR}{\mathbb{R}}
\newcommand{\argmax}{\operatorname*{argmax}}
\begin{document}
\title {Eigenvector-based sparse canonical correlation analysis: Fast computation for estimation of multiple canonical vectors}
\author{Wenjia Wang\\Hong Kong University of Science and Technology\\
Clear Water Bay, Kowloon, Hong Kong \\ \and Yi-Hui Zhou\\Department of Biological Sciences\\
North Carolina State University, Raleigh, NC, U.S.A. } 
\date{}
\maketitle
\vspace{-5mm}

\begin{abstract}
Classical canonical correlation analysis (CCA) requires matrices to be low dimensional, i.e. the number of features cannot exceed the sample size. Recent developments in CCA have mainly focused on the high-dimensional setting, where the number of features in both matrices under analysis greatly exceeds the sample size. These approaches impose penalties in the optimization problems that are needed to be solve iteratively, and estimate multiple canonical vectors sequentially. In this work, we provide an explicit link between sparse multiple regression with sparse canonical correlation analysis, and an efficient algorithm that can estimate multiple canonical pairs simultaneously rather than sequentially. Furthermore, the algorithm naturally allows parallel computing. These properties make the algorithm much efficient. We provide theoretical results on the consistency of canonical pairs. The algorithm and theoretical development are based on solving an eigenvectors problem, which significantly differentiate our method with existing methods. Simulation results support the improved performance of the proposed approach. We apply eigenvector-based CCA to analysis of the GTEx thyroid histology images, analysis of SNPs and RNA-seq gene expression data, and a microbiome study. The real data analysis also shows improved performance compared to traditional sparse CCA.
\end{abstract}

\section{Introduction}\label{secIntro}

Canonical correlation analysis (CCA) is a widely used method to determine the relationship between two sets of variables. In CCA, the objective is to find linear combinations of variables from each set of variables such that the correlation is maximized. The vectors consisting of coefficients from each linear combination are called canonical pairs. Originally proposed by \cite{hotelling1936relations}, CCA has been applied to numerous problems, including those of large scale. In large scale problems, including genomic studies \cite{chen2012structure,wang2015inferring}, medicine \cite{samarov2011local,yazici2010application}, natural language processing \cite{vinokourov2003inferring, haghighi2008learning}, and multimodal signal processing \cite{gao2017discriminative, sargin2007audiovisual}, researchers are often faced with high dimensional data. Projects such as GTEx \cite{aguet2019gtex} also provide rich datasets (and image data) for which CCA might be used to identify important genetic modules relevant to disease. In these works, classical CCA cannot be used to analyze the high dimensional data, where the number of variables exceeds the number of observations. 

To study the relationship between two sets of high dimensional variables, many extensions of classical CCA have been proposed. One popular approach, sparse canonical correlation analysis, imposes sparse structure on the canonical vectors. An incomplete list of sparse CCA methods is \cite{parkhomenko2009sparse,witten2009extensions,waaijenborg2008quantifying,le2009sparse,witten2009penalized,chen2012structure}, and references therein. In the sparse canonical correlation analysis, the canonical pairs are estimated sequentially. Recent works on PCA-CCA \cite{song2016canonical} and decomposition-based CCA \cite{shu2020d} allow to estimate multiple correlation pairs simultaneously, which is more efficient, yet under different assumptions named as ``low-rank plus noise''. 

In this work, we propose \textit{eigenvector-based sparse canonical correlation analysis} (E-CCA). Specifically, we link sparse multiple regression and Lasso regularization with sparse CCA, and solve an eigenvector problem to obtain the canonical pairs. We propose an efficient algorithm to provide $K$ canonical pairs simultaneously for $K\geq 1$. This advantage significantly differentiates our method from other sparse CCA methods, which usually estimate multiple canonical pairs sequentially. The algorithm allows parallel computing, which, together with estimating multiple canonical pairs simultaneously, makes the computation very fast. We also provide theoretical guarantees on the consistency of estimated canonical pairs under the assumptions similar to those in sparse CCA. 

We note that the relationship between multiple regression and CCA has been considered previously. In \cite{glahn1968canonical}, multiple regression was considered as be a special case of CCA, but the high dimensional situation was not considered. \cite{lutz1994relationship} analyzed the relationship between multiple regression and CCA via eigenstructure. \cite{yamamoto2008canonical} applied CCA to multivariate regression. \cite{song2016canonical} assumed that the responses have a linear relationship with some underlying signals. However, we are not aware of any works that apply sparse multiple regression to canonical correlation analysis.

The rest of this paper is arranged as follows. In Section \ref{secbackground}, we introduce classical canonical correlation analysis, sparse canonical correlation analysis, and other canonical correlation analysis methods. We propose an eigenvector-based sparse canonical correlation analysis approach, with attendant theoretical properties in Section \ref{seconelasso}. In Section \ref{secnumeric}, we conduct numeric simulation studies. In Section \ref{seccasestudy}, we apply eigenvector-based sparse canonical correlation analysis and competing methods to three applied problems, including GTEx thyroid imaging/expression data, GTEx liver genotype/expression data, and human gut microbiome data. A discussion with conclusions is provided in Section \ref{secconclu}. Proof of main results are presented in Section \ref{apppfonelasso}.

\section{Preliminaries}\label{secbackground}
In this section, we provide a brief introduction to related canonical correlation analysis methods, including classical canonical correlation analysis, sparse canonical correlation analysis, and other canonical correlation analysis.

\subsection{Classical canonical correlation analysis}\label{secclasscca}
Suppose we are interested in studying the correlation between two sets of random variables $x=(x_1,...,x_p)^\top \in \RR^p$ and $y=(y_1,...,y_d)^\top \in \RR^d$. Given $1\leq K\leq \min\{p,d\}$, the goal of canonical correlation analysis (CCA) is to find $a_1,...,a_K\in \RR^d$ and $b_1,...,b_K\in \RR^p$ such that $(a_1,b_1)$ is the solution to
\begin{align}\label{CCAformpre}
    \max_{a\in \RR^d,b\in \RR^p} &\mbox{ Corr}(a^\top y,b^\top x),
\end{align}
and $(a_k,b_k)$ is the solution to
\begin{align}\label{CCAformpreK}
    \max_{a\in \RR^d,b\in \RR^p} &\mbox{ Corr}(a^\top y,b^\top x),\nonumber\\
   {\rm s.t.} & \mbox{ Corr}(a^\top y,a_l^\top y)=\mbox{ Corr}(b^\top y,b_l^\top y)=0, \forall 1\leq l\leq k-1,
\end{align}
for $k\in\{2,...,K\}$.
Without loss of generality, we assume $x$ and $y$ have mean zero, for otherwise we can shift the mean. Let $\Sigma_{xx}$ and $\Sigma_{yy}$ be the covariance matrix of $x$ and $y$, respectively. Let $\Sigma_{xy}$ be the covariance matrix between $x$ and $y$. The optimization problem \eqref{CCAformpre} is the same as
\begin{align}\label{CCAform}
    \max_{a\in \RR^d,b\in \RR^p} & \frac{a^\top \Sigma_{yx}b}{\sqrt{a^\top \Sigma_{yy}a}\sqrt{b^\top \Sigma_{xx}b}}.
\end{align}
The solution to \eqref{CCAform}, denoted by $a_1$ and $b_1$, are called the first pair of canonical vectors, and the new variables $x_1' = a_1^\top x$ and $y_1'=b_1^\top y$ are called  the first pair of canonical variables \cite{mardia1979multivariate}. Once the $(k-1)$ pairs of canonical vectors $a_1,...,a_{k-1}$ and $b_1,...,b_{k-1}$ are obtained, the $k$-th pair of canonical vectors is the solution to the optimization problem \eqref{CCAformpreK}, which is the same as
\begin{align}\label{CCAformkpair}
    \max_{a\in \RR^d,b\in \RR^p} & \frac{a^\top \Sigma_{yx}b}{\sqrt{a^\top \Sigma_{yy}a}\sqrt{b^\top \Sigma_{xx}b}}\nonumber\\
    {\rm s.t.} & \quad a^\top \Sigma_{yy}a_l = 0, b^\top \Sigma_{xx}b_l = 0, 1\leq l \leq k-1,
\end{align}
for $k\in\{2,...,K\}$.

By basic matrix computation, one can obtain that the solution $a_k$ to the optimization problem \eqref{CCAformkpair} is the $k$-th eigenvector of
\begin{align}\label{traforma}
     \Sigma_{yy}^{-1}\Sigma_{yx}\Sigma_{xx}^{-1}\Sigma_{xy},
\end{align}
and $b_k$ is proportional to 
\begin{align}\label{traformb}
    \Sigma_{xx}^{-1}\Sigma_{xy}a_k.
\end{align}
Note that the solution to \eqref{CCAformkpair} is not unique, because for any constant $C\in\RR$ and $C\neq 0$, if $(a_k,b_k)$ is the solution to \eqref{CCAformkpair}, then so is $(Ca_k,Cb_k)$. Therefore, we restrict the norms of $a_k$ and $b_k$ such that $\|a_k\|_2 = \|b_k\|_2 = 1$, and the first nonzero element of $a_k$ ($b_k$) is positive to make the solution to \eqref{CCAformkpair} unique, where $\|\cdot\|_2$ is the Euclidean norm. This restriction is not essential because one can always scale the canonical vectors $a_k$ and $b_k$ such that $a_k$ and $b_k$ satisfy other constraints, for example, $a_k^\top \Sigma_{yy}a_k=b_k^\top \Sigma_{xx}b_k=1$.

Let $X_i,Y_i$, $i\in\{1,...,n\}$ be observations, where $X_i = (x_{i1},...,x_{ip})^\top $ and $Y_i = (y_{i1},...,y_{id})^\top $. Let $X = (X_1,...,X_n)$ and $Y = (Y_1,...,Y_n)$ be the sample matrices. In classical CCA, the covariance matrices $\Sigma_{xx}$, $\Sigma_{yy}$, and $\Sigma_{yx}$ are replaced by $\Sigma_{XX}=XX^\top /n$, $\Sigma_{YY}=YY^\top /n$, and $\Sigma_{YX}=YX^\top /n$, respectively \cite{gonzalez2008cca}. Then the estimated canonical pairs are the solutions to the following optimization problems
\begin{align}\label{estcor}
    \max_{a\in \RR^d,b\in \RR^p} & \frac{a^\top \Sigma_{YX}b}{\sqrt{a^\top \Sigma_{YY}a}\sqrt{b^\top \Sigma_{XX}b}}\nonumber\\
    {\rm s.t.} & \quad a^\top \Sigma_{YY}a_l = 0, b^\top \Sigma_{XX}b_l = 0, 1\leq l \leq k-1,
\end{align}
for $k\in\{1,...,K\}$. If $k=1$, then \eqref{estcor} becomes a unconstrained optimization problem.

If the dimension of $x$ or $y$ is larger than the sample size $n$, the classical CCA does not work because $XX^\top /n$ or $YY^\top /n$ is singular. One naive method to estimate the canonical vectors is to add diagonal matrices $\mu_Y I_d$ and $\mu_X I_p$ with $\mu_Y,\mu_X>0$ such that the estimated covariance matrix $\Sigma_{YY}+ \mu_Y I_d$ and $\Sigma_{XX}+ \mu_X I_p$ are invertible, where $I_d$ and $I_p$ are two identity matrices of size $d$ and $p$, respectively. Following the terminology in spatial statistics \cite{stein2012interpolation} and computer experiments \cite{peng2014choice}, we call $\mu_X$ and $\mu_Y$ ``nugget" parameters, and call the corresponding method CCA with a nugget parameter. CCA with a nugget parameter provides the first canonical vector $a_{\mu}$ as an eigenvector of
\begin{align}\label{ccamuforma}
     (\Sigma_{YY} + \mu_Y I_d)^{-1}\Sigma_{YX}(\Sigma_{XX} + \mu_X I_p)^{-1} \Sigma_{XY},
\end{align}
and the second canonical vector $b_{\mu}$ is proportional to 
\begin{align}\label{ccamuformb}
    (\Sigma_{XX} + \mu_X I_p)^{-1}\Sigma_{XY}a,
\end{align}
where $\Sigma_{XY} = XY^\top /n$. Although using a nugget parameter enables the matrix inverse, it may produce non-sparse canonical vectors, which may hard to interpret. Also, we are not aware of any theoretical guarantees on the consistency of estimated canonical vectors by CCA with a nugget parameter.

\subsection{Sparse canonical correlation analysis}\label{secsparsecca}
As mentioned in Section \ref{secclasscca}, if the dimension of $x$ or $y$ is larger than the sample size $n$, the classical CCA does not work because $XX^\top /n$ or $YY^\top /n$ is singular. To address the case when $p$ or $d$ is larger than $n$, many other approaches to generalize classical CCA to high dimensional settings have been proposed. In these works, thresholding or regularization is introduced into the optimization problem \eqref{CCAformpre}. For example, \cite{parkhomenko2009sparse,parkhomenko2007genome,waaijenborg2008quantifying} introduced a soft-thersholding for each element of canonical vectors. Therefore, elements with small absolute value are forced to be zero, and a sparse solution is obtained. \cite{chen2013sparse} introduced iterative thresholding to estimate the canonical vectors, and showed that the consistency of estimated canonical vectors holds under the assumptions that $\Sigma_{xx}$ and $\Sigma_{yy}$ (or the inverses of them) are sparse. \cite{tenenhaus2011regularized} proposed a regularized generalized CCA, where the constraint on canonical vectors are changed to be $\tau_1 a^\top \Sigma_{xx}a + (1-\tau_1)\|a\|_2 = 1$ and $\tau_2 b^\top \Sigma_{yy}b + (1-\tau_2)\|b\|_2 = 1$, where $\tau_1,\tau_2\in [0,1]$ are two tuning parameters.

Regularization-based sparse CCA usually estimates the $k$-th pair of canonical vectors which are obtained by solving
\begin{align}\label{spccaform}
    \max_{a,b} &\quad \frac{1}{n} a^\top YX^\top b\nonumber\\
    {\rm s.t.} & \quad\|a\|_2\leq 1, \|b\|_2\leq 1, P_1(a)\leq c_1, P_2(b) \leq c_2,\nonumber\\
    & a^\top YY^\top a_l = 0, b^\top XX^\top b_l = 0, 1\leq l \leq k-1,
\end{align}
where $P_1$ and $P_2$ are two convex penalty functions, and $c_1$ and $c_2$ are two constants. The sparsity is imposed on the canonical vectors by using different penalty functions. This method was proposed by \cite{witten2009penalized}, and has been extended by \cite{witten2009extensions}. There is no theoretical guarantees on \eqref{spccaform}, as is pointed out by \cite{chen2013sparse}. An algorithm based on \cite{witten2009extensions} has been proposed by \cite{lee2011sparse}. In \cite{waaijenborg2008quantifying}, the elastic net was also used to obtain sparsity of the estimated canonical vectors. \cite{chen2012structure} modified sparse CCA as in \cite{witten2009extensions} by adding a structure based constraint to the canonical vectors. \cite{gao2017sparse} proposed a method called convex program with group-Lasso refinement, which is a two-stage method based on group Lasso, and they proved the consistency of estimated canonical variables.

Another type of sparse CCA methods is via a reformulation approach. In \cite{hardoon2011sparse}, it was shown that based on a primal-dual framework, \eqref{estcor} with respect to the first canonical pair is equivalent to the following problem
\begin{align}\label{refcca}
    \min_{w,e}\|Y^\top w - X^\top Xe\|_2^2,
\end{align}
subject to $\|X^\top Xe\|_2^2=1$, in the sense that $(w,e)$ is the solution to \eqref{refcca} if and only if there exists $\mu,\gamma$ such that $(\mu w, \gamma Xe)$ is the solution to \eqref{estcor}. Then by imposing $l_1$ regularization on $w$ and $e$, sparse canonical vectors can be obtained. Recent work by \cite{mai2019iterative} reformulated \eqref{spccaform} into a constrained quadratic optimization problem, and proposed an iterative penalized least squares algorithm to solve the optimization problem. Theoretical guarantees on the consistency of the estimated canonical vectors were also presented in \cite{mai2019iterative}.

\subsection{Other canonical correlation analysis methods}\label{secdcca}

If one is interested in obtaining multiple canonical pairs using sparse CCA, then \eqref{spccaform} has to be solved \textit{sequentially}. In particular, in order to get $k$-th pair of canonical vectors, one must know all $l$-th pairs of canonical vectors for all $l<k$. This sequential solving problems makes sparse CCA inefficient when researchers need to estimate a relatively large number of pairs of canonical vectors. Recent works on CCA provide an alternative way, which can estimate multiple canonical pairs \textit{simultaneously}. These works include PCA-CCA \cite{song2016canonical} and decomposition-based CCA \cite{shu2020dGcca,shu2020d}. In PCA-CCA approach, a principal component analysis (PCA) rank-reduction preprocessing step is performed before applying classical CCA, yet no theoretical guarantees on this method. In decomposed-based CCA (D-CCA), the assumptions of sparsity on the canonical vectors are removed, but the random vectors $x$ and $y$ are assumed to have an ``low-rank plus noise'' model. That is, there exist random vectors $x_{r_1}\in \RR^{r_1}$ and $y_{r_2}\in\RR^{r_2}$ such that $x$ and $y$ can be written as a linear combination of $x_{r_1}$ and $y_{r_2}$ plus a noise vector. Based on this assumption, \cite{shu2020d} showed the consistency results on the canonical correlation estimators. However, unlike sparse CCA, D-CCA requires that the dimensions of both $x$ and $y$ are larger than $n$ ($\min\{p,d\}\geq cn$ for some constant $c>0$) but smaller than $n\lambda_{r_1}(\Sigma_{xx})$ and $n\lambda_{r_2}(\Sigma_{yy})$ ($\max\{p,d\}=O(n\lambda_{r_1}(\Sigma_{xx}))$ and $\max\{p,d\}=O(n\lambda_{r_2}(\Sigma_{yy}))$), where $\lambda_{i}(A)$ is the $i$-th largest eigenvalue of a matrix $A$. In particular, if $\lambda_{\max}(\Sigma_{xx})$ and $\lambda_{\max}(\Sigma_{yy})$ are bounded above by a constant (for example, $\Sigma_{xx}$ and $\Sigma_{yy}$ are identity matrices), which is a typical condition for the sparse CCA and high dimensional analysis \cite{ning2014general,gao2017sparse,chen2013sparse}, this assumption is violated. Furthermore, D-CCA does not provide sparse canonical vectors, which may be difficult to interpret.

\section{Eigenvector-based sparse canonical correlation analysis}\label{seconelasso}
In this section, we introduce the proposed method called eigenvector-based sparse canonical correlation analysis, and study its theoretical properties.

\subsection{Methodology}
In sparse canonical correlation analysis methods, the dimensions of both $x$ and $y$ can be larger than $n$, and theoretical guarantees have been provided based on the assumption that the canonical vectors are sparse. However, one needs to solve $k-1$ optimization problems \textit{sequentially} in order to obtain the $k$-th pair of canonical vectors; see \cite{mai2019iterative,waaijenborg2008quantifying} for example. These sequential algorithms cannot provide $K$ pairs of canonical vectors simultaneously. Nor do they naturally enable parallel computing \cite{jordan2013statistics,suchard2010understanding,park2012gplp}. If one is interested in estimating a relatively large number of pairs of canonical vectors, these methods may be hard to be used. On the other hand, although the D-CCA method can estimate multiple canonical pairs simultaneously, D-CCA places more restrictions on the dimensions and the covariance matrices, and does not assume sparse canonical vectors. In some cases, predictor involving all input variables is difficult to interpret, thus sparse canonical vectors are more desirable.

In this work, we consider an intermediate case, where the dimensions of $x$ and $y$ are very different. Without loss of generality for our application domain, we assume the dimension of $x$ is much larger than the sample size, while the dimension of $y$ is relatively small. We propose an \textit{eigenvector-based sparse canonical correlation analysis} (E-CCA), which can be used to estimate the canonical vectors under the the setting $p\gg n > d$. The E-CCA enjoys both advantages from the sparse CCA, where one can estimate ultrahigh-dimensional set (the dimension $p$ can increase exponentially with respect to the sample size $n$) and obtain sparse canonical vectors, and from D-CCA, where one can estimate $K>1$ pairs of canonical vectors simultaneously. Furthermore, the E-CCA naturally enables parallel computing in the algorithm, which can substantially decrease the computation time.

Unlike existing sparse CCA methods \eqref{spccaform}, we do not approach the problem directly as a ``correlation maximization" problem. First, we establish a relationship between multivariate regression and CCA, and then use this understanding to motivate our solution. This relationship allows us to apply existing methodologies from regression, which makes the algorithm of estimating canonical vectors more efficient. 
Consider a multiple linear regression on $y$ with variables $x$, 
\begin{align}\label{linearreg}
    y=B_* x+\epsilon_y,
\end{align}
where $B_*\in \RR^{d\times p}$ is the coefficient matrix. The coefficient matrix can be obtained by the projection of $y$ onto $x$. The variable $\epsilon_y$ is the projection residual, and satisfies ${\rm E}(\epsilon_y^\top B_*x) = 0$. With the relationship \eqref{linearreg}, we can compute the covariance matrices $\Sigma_{xy}$ and $\Sigma_{yy}$ by
\begin{align}\label{covarlr}
\Sigma_{xy}  = \Sigma_{xx}B_*^\top , \quad \Sigma_{yy}  = B_*\Sigma_{xx}B_*^\top  + \Sigma_{\epsilon_y\epsilon_y},
\end{align}
where the second equality follows from ${\rm E}(\epsilon_y^\top B_*x) = 0$. By the results in classical CCA, the first canonical vector of the $k$-th pair of canonical vectors $a_k$ is the $k$-th eigenvector of
\begin{align}\label{forma}
    \Sigma_{yy}^{-1}\Sigma_{yx}\Sigma_{xx}^{-1}\Sigma_{xy} & = \Sigma_{yy}^{-1}B_*\Sigma_{xx}B_*^\top ,
\end{align}
where the equality is because of \eqref{covarlr}. The second canonical vector $b_k$ is propotional to  
\begin{align}\label{formb}
    \Sigma_{xx}^{-1}\Sigma_{xy}a_k =\Sigma_{xx}^{-1}\Sigma_{xx}B_*^\top a_k = B_*^\top a_k.
\end{align}
Note that in \eqref{forma} and \eqref{formb}, we do not need to compute $\Sigma_{xx}^{-1}$. Therefore, we avoid the problem that $XX^\top /n$ is singular. Thus, if $B_*$ is known, we can replace $\Sigma_{xx}$ and $\Sigma_{yy}$ in \eqref{forma} by $XX^\top /n$ and $YY^\top /n$, respectively, to estimate the canonical vectors. 

In practice, $B_*$ is rarely known. Therefore, we need to estimate $B_*$ in order to use \eqref{forma} and \eqref{formb} to obtain the canonical vectors. Note that $B_*\in \RR^{d\times p}$ with $p\gg d$. One natural idea is to assume the coefficient matrix $B_*$ has some sparse structure, and to use the elementwise $l_1$ penalty as a regularization as in Lasso \cite{tibshirani1996regression}. If $B_*$ is sparse, the second canonical vector $b_k$ with high dimension is also sparse by \eqref{formb}. To be specific, let $\hat B$ be an estimator of $B_*$. We compute $\hat B$ by the following optimization problem
\begin{align}\label{regumodel}
    \min_{B \in \RR^{d\times p}}\sum_{i=1}^n \|Y_i - BX_i\|_2^2 + \lambda_1\|B\|_{F_1},
\end{align}
where $
    \|B\|_{F_1} = \sum_{j=1}^d \|\beta_j\|_1,
$
for $B = (\beta_1,...,\beta_d)^\top $, $\|\cdot\|_1$ is the $l_1$ norm, and $\lambda_1>0$ is a tuning parameter. Noting that \eqref{regumodel} can be rewritten as
\begin{align*}
    & \sum_{i=1}^n \|Y_i - BX_i\|_2^2 + \lambda_1\|B\|_{F_1}
    = \sum_{i=1}^n \sum_{j=1}^d (y_{ij} - \beta_j^\top X_i)^2 + \lambda_1\sum_{j=1}^d \|\beta_j\|_1  =  \sum_{j=1}^d\left(\sum_{i=1}^n  (y_{ij} - \beta_j^\top X_i)^2 + \lambda_1 \|\beta_j\|_1\right),
\end{align*}
we can decompose \eqref{regumodel} into $d$ Lasso problems,
\begin{align}\label{sapregumodel}
  \min_{\beta_j} \sum_{i=1}^n  (y_{ij} - \beta_j^\top X_i)^2 + \lambda_1 \|\beta_j\|_1
\end{align}
for $j\in\{1....,d\}$. Note that these $d$ Lasso problems are independent of each other, which allows parallel computing. Let $\hat \beta_j$ be the solution to \eqref{sapregumodel}, and $\hat B = (\hat \beta_1,...,\hat \beta_d)^\top $. Therefore, $\hat B$ is an estimator of $B_*$. By replacing $B_*$, $\Sigma_{xx}$ and $\Sigma_{yy}$ in \eqref{forma} and \eqref{formb} with $\hat B$, $XX^\top /n$ and $YY^\top /n$, respectively, we can obtain the $k$-th pair of estimated canonical vectors $\hat a_k$ and $\hat b_k$ as follows. The first estimated canonical vector $\hat a_k$ is the $k$-th eigenvector of $(YY^\top )^{-1}\hat BX(\hat BX)^\top ,$ and the second canonical vector $\hat b_k$ is proportional to $\hat B^\top \hat a_k.$ Algorithm \ref{alg:onelasso} describes the procedure to obtain the $k$-th pair of estimated canonical vectors. 

\begin{algorithm}
\caption{Eigenvector-based sparse CCA}
\begin{algorithmic}[1]
\State \textbf{Input:} Observed data $X$ and $Y$.
\State Parallelly solve \eqref{sapregumodel} for $j\in\{1....,d\}$ to obtain the estimated coefficients $\hat B$.
\State Calculate the eigenvectors of $(YY^\top )^{-1}\hat BX(\hat BX)^\top $. The $k$-th eigenvector $\hat a_k$ is the first estimated canonical vector in the $k$-th pair of estimated canonical vectors. The second estimated canonical vector is $\hat b_k'  = \hat B^\top \hat a_k$.
\State Normalize $\hat b_k'$ as $\hat b_k$ such that $\|\hat b_k\|_2 = 1$.
\State \textbf{Output: } The $k$-th pair of estimated canonical vectors $\hat a_k$ and $\hat b_k$.
\end{algorithmic}\label{alg:onelasso}
\end{algorithm}

Because $d$ is small, there is no need to impose sparsity on the first canonical vector $\hat a_k$. Similar to Lasso problem, the parameter $\lambda_1$ controls the sparsity of the estimated coefficients $\hat \beta_j$, thus the sparsity of $\hat B$. The larger $\lambda_1$ is, the more sparse $\hat B$ is. Since the second estimated canonical vector is $\hat b_k'  = \hat B^\top \hat a_k$, it can be seen that if $\hat B$ is sparse, so is $\hat b_k'$. Therefore, one can enlarge the parameter $\lambda_1$ to obtain a more sparse  $\hat b_k'$. In practice, one can use cross validation to choose the parameter $\lambda_1$. If $d$ is also large, one can apply principal component analysis (PCA) to reduce the dimension of $Y$. This approach has been applied in our real data analysis; see Section \ref{seccasestudy1}.

Clearly, Algorithm \ref{alg:onelasso} is not in the form of \eqref{spccaform} with setting $c_1$ to infinity. As we will see in Section \ref{secnumeric}, Algorithm \ref{alg:onelasso} is more efficient than \eqref{spccaform} with setting $c_1$ to infinity, i.e., not imposing penalty on the first estimated canonical vectors. In the E-CCA, the canonical vectors are derived by solving an eigenvalue problem, instead of the optimization problem \eqref{spccaform}. In Algorithm \ref{alg:onelasso}, we do not use any iteration, except in solving Lasso, which has been well studied and optimized in the literature \cite{friedman2010regularization}. Since we assume $d$ is small, the number of Lasso problems is also small. Because $d$ Lasso problems are independent with each other, one can utilize parallel computing to obtain the estimated coefficients. By solving the eigenvector problem in Step 3, we can obtain multiple pairs of estimated canonical vectors simultaneously. The parallel computing and simultaneous estimating multiple canonical pairs makes Algorithm \ref{alg:onelasso} quite efficient, as we will see in the numeric studies.

\subsection{Theoretical properties}
In this subsection, we present theoretical results of eigenvector-based sparse CCA. We mainly focus on the consistency of the estimated canonical vectors. In the rest of this work, we will use the following definitions. For notational simplicity, we will use $C,C',C_1,C_2,...$ and $K,K_1,K_2,...$ to denote the constants, of which the values can change from line to line. For two positive sequences $s_n$ and $t_n$, we write $s_n\asymp t_n$ if, for some constants $C,C'>0$, $C\leq s_n/t_n \leq C'$. Similarly, we write $s_n\gtrsim t_n$ if $s_n\geq Ct_n$ for some constant $C>0$, and $s_n\lesssim t_n$ if $s_n\leq C't_n$ for some constant $C'>0$. 

We first introduce some technical assumptions. The first assumption is the regularity conditions on the covariance matrices $\Sigma_{xx}$ and $\Sigma_{\epsilon_y\epsilon_y}$.
\begin{assumption}\label{onelassoassumcov}
Let $\lambda_{\max}(U)$ and $\lambda_{\min}(U)$ be the maximum and minimum eigenvalues of matrix $U$, respectively. Assume there exist positive constants $K_1$ and $K_2$ such that
\begin{align*}
    K_1 \leq \min \left\{\lambda_{\min}(\Sigma_{xx}), \lambda_{\min}(\Sigma_{\epsilon_y\epsilon_y}) \right\} \leq \max\{\lambda_{\max}(\Sigma_{xx}),\lambda_{\max}(\Sigma_{\epsilon_y\epsilon_y})\} \leq K_2.
\end{align*}
\end{assumption}
Assumption \ref{onelassoassumcov} assures that the eigenvalues of the covariance matrix are bounded, which is a typical condition for the high dimensional analysis; see \cite{gao2017sparse,chen2013sparse} for example. Note Assumption \ref{onelassoassumcov} does not hold in \cite{shu2020d}, where $\lambda_{\max}(\Sigma_{yy})$ is required to diverge to infinity.

The second assumption is on the coefficient matrix $B_*$.
\begin{assumption}\label{oneassumB}
Suppose $B_*$ satisfies $\sigma_{\max}(B_*) \leq K$ for some constant $K>0$, where $\sigma_{\max}(B_*)$ is the maximum singular value of $B_*$.
\end{assumption}
As a simple consequence of Assumptions \ref{onelassoassumcov} and \ref{oneassumB}, the eigenvalues of the covariance matrix $\Sigma_{yy}$ are bounded above by a constant, and bounded below from zero, as shown in the following proposition.
\begin{proposition}\label{propSigmaybound}
Suppose Assumptions \ref{onelassoassumcov} and \ref{oneassumB} hold. Then there exist positive constants $K_1$ and $K_2$ such that
\begin{align*}
    K_1 \leq \lambda_{\min}(\Sigma_{yy}) \leq \lambda_{\max}(\Sigma_{yy})\leq K_2.
\end{align*}
\end{proposition}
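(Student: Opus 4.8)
The plan is to work directly from the decomposition of $\Sigma_{yy}$ established in \eqref{covarlr}, namely $\Sigma_{yy} = B_*\Sigma_{xx}B_*^\top + \Sigma_{\epsilon_y\epsilon_y}$, and to bound the eigenvalues of this sum of two symmetric matrices using Weyl's inequality. Both summands are symmetric, and $B_*\Sigma_{xx}B_*^\top$ is positive semidefinite because $\Sigma_{xx}$ is positive definite under Assumption \ref{onelassoassumcov}; this positive semidefiniteness is exactly what will drive the lower bound. The whole argument is then a matter of controlling $\lambda_{\max}$ and $\lambda_{\min}$ of each piece separately and recombining.

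For the upper bound, I would first estimate $\lambda_{\max}(B_*\Sigma_{xx}B_*^\top)$ via its variational characterization: for any unit vector $v\in\RR^d$, setting $w = B_*^\top v$ gives $v^\top B_*\Sigma_{xx}B_*^\top v = w^\top \Sigma_{xx} w \leq \lambda_{\max}(\Sigma_{xx})\|w\|_2^2 \leq \sigma_{\max}(B_*)^2\lambda_{\max}(\Sigma_{xx})$. By Assumptions \ref{onelassoassumcov} and \ref{oneassumB} this is at most $K^2 K_2$. Combining this with $\lambda_{\max}(\Sigma_{\epsilon_y\epsilon_y})\leq K_2$ through the subadditivity half of Weyl's inequality, $\lambda_{\max}(\Sigma_{yy})\leq \lambda_{\max}(B_*\Sigma_{xx}B_*^\top)+\lambda_{\max}(\Sigma_{\epsilon_y\epsilon_y})$, yields an upper bound of the form $K_2(K^2+1)$, which is a constant.

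For the lower bound, since $B_*\Sigma_{xx}B_*^\top\succeq 0$ its minimal eigenvalue is nonnegative, so the superadditivity half of Weyl's inequality gives $\lambda_{\min}(\Sigma_{yy}) \geq \lambda_{\min}(B_*\Sigma_{xx}B_*^\top)+\lambda_{\min}(\Sigma_{\epsilon_y\epsilon_y}) \geq 0 + K_1 = K_1$. Taking the constants in the statement to be $K_1$ and $K_2(K^2+1)$ then completes the proof; these may be relabeled as $K_1,K_2$ per the convention fixed in this subsection. I do not anticipate a genuine obstacle here, as every step is a routine eigenvalue inequality; the only point requiring a moment of care is confirming that the decomposition in \eqref{covarlr} is valid (so that there are no cross terms), which rests on the orthogonality condition ${\rm E}(\epsilon_y^\top B_* x)=0$ already recorded after \eqref{linearreg}, and confirming the positive semidefiniteness of $B_*\Sigma_{xx}B_*^\top$ that makes the lower bound independent of $B_*$.
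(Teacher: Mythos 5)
Your proof is correct and follows essentially the same route as the paper: the decomposition $\Sigma_{yy} = B_*\Sigma_{xx}B_*^\top + \Sigma_{\epsilon_y\epsilon_y}$ from \eqref{covarlr}, Weyl's inequality for both bounds, positive semidefiniteness of $B_*\Sigma_{xx}B_*^\top$ for the lower bound, and the operator-norm bound $\lambda_{\max}(B_*\Sigma_{xx}B_*^\top)\leq \sigma_{\max}(B_*)^2\lambda_{\max}(\Sigma_{xx})$ (which you derive variationally and the paper states directly as $\|B_*\|_2^2\lambda_{\max}(\Sigma_{xx})$) for the upper bound. There is nothing to change.
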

\begin{proof}[\textbf{\upshape Proof:}] 
Recall in \eqref{covarlr}, we have $\Sigma_{yy}  = B_*\Sigma_{xx}B_*^\top  + \Sigma_{\epsilon_y\epsilon_y}.$

By Weyl's theorem (\cite{horn2012matrix}, Theorem 4.3.1), we have 
\begin{align*}
    \lambda_{\min}(\Sigma_{yy}) \geq \lambda_{\min}(\Sigma_{\epsilon_y\epsilon_y}) + \lambda_{\min}(B_*\Sigma_{xx}B_*^\top ) \geq  \lambda_{\min}(\Sigma_{\epsilon_y\epsilon_y}) \geq K_1,
\end{align*}
where the last inequality is because of Assumption \ref{onelassoassumcov}. Using Weyl's theorem again, we can bound the largest eigenvalue $\lambda_{\max}(\Sigma_{yy})$ by
\begin{align*}
    \lambda_{\max}(\Sigma_{yy}) \leq  \lambda_{\max}(\Sigma_{\epsilon_y\epsilon_y}) + \lambda_{\max}( B_*\Sigma_{XX}B_*^\top )
    \leq \lambda_{\max}(\Sigma_{\epsilon_y\epsilon_y}) + \|B_*\|_2^2 \lambda_{\max}(\Sigma_{XX}) \leq K_2,
\end{align*}
where the last inequality is because of Assumptions \ref{onelassoassumcov} and \ref{oneassumB}. This finishes the proof.
\end{proof}
The following assumption is on the matrix $\Sigma_{yy}^{-1}B_*\Sigma_{xx}B_*^\top $.
\begin{assumption}\label{onelassoassumA}
Let $\Gamma = \Sigma_{yy}^{-1}B_*\Sigma_{xx}B_*^\top $. Suppose the Schur decomposition of $\Gamma$ with respect to $k$-th eigenvalue and eigenvector is
\begin{align*}
    Q_k^\top \Gamma Q_k = \left[
    \begin{array}{cc}
        \lambda_k &  v_k^\top \\
        0 & T_{k}
    \end{array}\right],
\end{align*}
where $Q_k = [q_k,Q_k'] \in \RR^{d\times d}$ is orthogonal (thus, $q_k$ is the $k$-th eigenvector of $\Gamma$). Assume there exist some constants $\sigma_0>0$ and $K>0$ such that for all $k\in\{1,...,d\}$, $\sigma_k = \sigma_{\min}(T_{k} - \lambda_k I)>\sigma_0$ and $\|v_k\|_2 < K$, where $\sigma_{\min}(T_{k} - \lambda_k I)$ is the minimum singular value of $T_{k} - \lambda_k I$.
\end{assumption}
Assumption \ref{onelassoassumA} imposes the conditions on the matrix $\Sigma_{yy}^{-1}B_*\Sigma_{xx}B_*^\top $, which ensures the numerical stability of the calculation of the eigenvectors of $\Sigma_{yy}^{-1}B_*\Sigma_{xx}B_*^\top $. The singular value condition $\sigma_{\min}(T_{k} - \lambda_k I)>\sigma_0$ is necessary because if $\lambda_k$ is a nondefective, repeated eigenvalue of $\Gamma$, there exist infinitely many of eigenvectors corresponding to $\lambda_k$, thus the consistency of eigenvectors cannot hold. Roughly speaking, Assumption \ref{onelassoassumA} requires that the eigenvalues of $\Sigma_{yy}^{-1}B_*\Sigma_{xx}B_*^\top $ are well separated.

The next assumption is on the tail behaviors of variables $x$, $y$, and $\epsilon_y$. 
\begin{definition}
A vector $v = (v_1,...,v_p)^\top $ is sub-Gaussian, if there exist positive constants $K$ and $\sigma$ such that $K^2({\rm E} e^{v_i^2/K^2} - 1)\leq \sigma^2$ holds for all $i\in\{1,...,p\}$.
\end{definition}

\begin{assumption}\label{subGassum}
The random variables $x$, $y$, and $\epsilon_y$ are all sub-Gaussian. Furthermore, $\epsilon_y$ is independent of $x$.
\end{assumption}
The sub-Gaussian assumption in Assumption \ref{subGassum} is also typical in high dimensional analysis. As a simple example, $x\sim N(0,\Sigma_{xx})$ and $y\sim N(0,\Sigma_{yy})$ are sub-Gaussian, where $N(0,\Sigma)$ is a multivariate normal distribution with mean zero and covariance matrix $\Sigma$. The independence assumption of $\epsilon_y$ and $x$ is slightly stronger than ${\rm E}(\epsilon_y^\top Ax) = 0$ (which can always be done by projection), and can always be satisfied by projection of $y$ onto $x$ if $x$ and $y$ are jointly normally distributed.

Under Assumptions \ref{onelassoassumcov}--\ref{subGassum}, we have the following consistency results. 
\begin{theorem}\label{thmonelasso}
Let $B_* = (\beta_1^*,...,\beta_d^*)^\top $ with $\beta_k^* = (\beta_{k1}^*,...,\beta_{kp}^*)^\top $. Suppose Assumptions \ref{onelassoassumcov}--\ref{subGassum} hold. Furthermore, assume $\max_{k}supp(\beta_k^*) = s^*$, $n^{-1/2}s^* \log p=o(1)$, and $\lambda_1\asymp\sqrt{n\log p}$, where $supp(\beta_k^*) = {\rm card}(\{j| \beta_{kj}^* \neq 0\})$ and ${\rm card}(A)$ is the cardinality of set $A$. Then with probability at least $1 - C_1d^3/p$,
\begin{align}\label{thm1ineqrate}
    \max\{\|a_k - \hat a_k\|_2, \|b_k - \hat b_k\|_2\} \lesssim & \sqrt{d(d +s^*)\log p/n},
\end{align}
for all $k\in\{1,...,d\}$, where $C_1$ is a positive constant not depending on $n$.
\end{theorem}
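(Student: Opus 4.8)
The plan is to reduce the whole statement to two ingredients and then let the sampling error propagate through them: (i) a Lasso oracle inequality for each of the $d$ independent regressions \eqref{sapregumodel}, and (ii) an eigenvector perturbation bound for the $d\times d$ (non-symmetric) matrix $\Gamma=\Sigma_{yy}^{-1}B_*\Sigma_{xx}B_*^\top$. Since $a_k$ is the $k$-th eigenvector of $\Gamma$ and $b_k\propto B_*^\top a_k$, the estimation error of $\hat a_k,\hat b_k$ is governed by how well $\hat\Gamma=(YY^\top/n)^{-1}\hat B(XX^\top/n)\hat B^\top$ approximates $\Gamma$ and how well $\hat B^\top\hat a_k$ approximates $B_*^\top a_k$.

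\textbf{Lasso step.} Under Assumption \ref{onelassoassumcov}, $\Sigma_{xx}$ has eigenvalues in $[K_1,K_2]$, which together with the sub-Gaussian Assumption \ref{subGassum} gives, with high probability, a restricted-eigenvalue condition for the sample Gram matrix $XX^\top/n$. The standard Lasso analysis then applies to each of the $d$ problems with $\lambda_1\asymp\sqrt{n\log p}$, yielding simultaneously the estimation bound $\|\hat\beta_j-\beta_j^*\|_2\lesssim\sqrt{s^*\log p/n}$ and, crucially, the prediction bound $(\hat\beta_j-\beta_j^*)^\top(XX^\top/n)(\hat\beta_j-\beta_j^*)\lesssim s^*\log p/n$, so $\|\hat B-B_*\|_F\lesssim\sqrt{d s^*\log p/n}$.

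\textbf{Perturbation of $\Gamma$.} Here lies the main obstacle. Because $p\gg n$, the operator norm of $M:=XX^\top/n$ is of order $p/n$, so the naive bound $\|(\hat B-B_*)MB_*^\top\|_2\le\|\hat B-B_*\|_2\|M\|_2\|B_*\|_2$ is useless. Writing $\Delta=\hat B-B_*$, I expand
\[
\hat B M\hat B^\top-B_*\Sigma_{xx}B_*^\top=B_*(M-\Sigma_{xx})B_*^\top+\Delta M B_*^\top+B_* M\Delta^\top+\Delta M\Delta^\top,
\]
and route the cross term through the \emph{prediction} error rather than the estimation error: $\|\Delta M B_*^\top\|_2\le\|\Delta M^{1/2}\|_2\|M^{1/2}B_*^\top\|_2$, where $\|\Delta M^{1/2}\|_2^2=\|\Delta M\Delta^\top\|_2\le\operatorname{tr}(\Delta M\Delta^\top)\lesssim d s^*\log p/n$ by the Lasso prediction bound, while $\|M^{1/2}B_*^\top\|_2^2=\|B_* M B_*^\top\|_2=O(1)$ since $B_*x$ is $d$-dimensional sub-Gaussian with bounded covariance. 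This gives $\|\Delta M B_*^\top\|_2\lesssim\sqrt{d s^*\log p/n}$. The term $B_*(M-\Sigma_{xx})B_*^\top$ is the deviation of a $d$-dimensional sample covariance, bounded entrywise by a Bernstein inequality and summed to $\lesssim d\sqrt{\log p/n}$, and $\Delta M\Delta^\top\lesssim d s^*\log p/n$ is lower order. Combined with $\|(YY^\top/n)^{-1}-\Sigma_{yy}^{-1}\|_2\lesssim d\sqrt{\log p/n}$ (Proposition \ref{propSigmaybound} makes $\Sigma_{yy}$ invertible with eigenvalues bounded below, so the inverse is stable), I obtain $\|\hat\Gamma-\Gamma\|_2\lesssim\sqrt{d(d+s^*)\log p/n}$; note the rate splits exactly into a covariance part $\sqrt{d^2\log p/n}$ and a regression part $\sqrt{d s^*\log p/n}$.

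\textbf{Transfer to canonical vectors.} For $\hat a_k$ I invoke Stewart-type perturbation theory for the eigenvector of the non-symmetric $\Gamma$: using the Schur form of Assumption \ref{onelassoassumA}, the separation $\sigma_k>\sigma_0$ and $\|v_k\|_2<K$ give $\|\hat a_k-a_k\|_2\lesssim\|\hat\Gamma-\Gamma\|_2$ after fixing the sign by the positivity convention. For $\hat b_k$, I write $\hat B^\top\hat a_k-B_*^\top a_k=\hat B^\top(\hat a_k-a_k)+\Delta^\top a_k$; since $\|\hat B\|_2\le K+\|\Delta\|_2=O(1)$ and $\|\Delta\|_2\le\|\Delta\|_F$, both pieces are $\lesssim\sqrt{d(d+s^*)\log p/n}$, and normalization preserves the rate because $\|B_*^\top a_k\|_2$ is bounded below (a consequence of the eigen-relation $B_*\Sigma_{xx}B_*^\top a_k=\lambda_k\Sigma_{yy}a_k$ with $\lambda_k$ bounded away from zero and Assumptions \ref{onelassoassumcov}--\ref{oneassumB}). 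A union bound over the $d$ Lasso problems and the $O(d^2)$ entrywise concentration events accumulates to the stated probability $1-C_1 d^3/p$. The delicate point throughout is the second step: circumventing the uncontrolled operator norm of the high-dimensional Gram matrix via the restricted prediction error, and pairing this with a perturbation bound for a genuinely non-symmetric eigenproblem.
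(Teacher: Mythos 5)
Your proposal is correct and follows essentially the same route as the paper's proof: a Lasso oracle inequality (estimation plus prediction error) for the $d$ regressions, a perturbation bound on $\hat\Gamma-\Gamma$ whose cross term is controlled through the prediction error against the $O(1)$-bounded $d$-dimensional matrix $B_*H_XB_*^\top$ (your $M^{1/2}$ factorization is the same Cauchy--Schwarz step the paper applies to $(B_*-\hat B)H_X(B_*+\hat B)^\top$), Stewart's Schur-form eigenvector perturbation under Assumption \ref{onelassoassumA}, and a triangle-inequality transfer to $\hat b_k$ with the same $d^3/p$ union-bound accounting. The only differences are cosmetic (your four-term expansion versus the paper's telescoping $I_1+I_2+I_3$), and you in fact give slightly more justification than the paper does for why normalization of $\hat B^\top\hat a_k$ preserves the rate.
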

In Theorem \ref{thmonelasso}, it can be seen that if $d$ is small, then E-CCA can provide consistent estimators of canonical vectors, under the high dimensional settings with respect to the second random variable. Corollary \ref{Coro:1} is an immediate consequence of Theorem \ref{thmonelasso}, which shows the asymptotic results.

\begin{corollary}\label{Coro:1}
Suppose the assumptions of Theorem \ref{thmonelasso} hold. Furthermore, assume $n\rightarrow \infty$ and $d(d +s^*)\log p=o(n)$. Then $\|\hat a_k - a_k\|_2\rightarrow 0$ and $\|\hat b_k - b_k\|_2 \rightarrow 0$ for all $k=1,...,d$ with probability tending to one.
\end{corollary}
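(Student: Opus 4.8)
The plan is to derive Corollary \ref{Coro:1} directly from the finite-sample bound of Theorem \ref{thmonelasso}, reading the phrase ``with probability tending to one'' in the standard sense of convergence in probability of the estimation error to zero. Theorem \ref{thmonelasso} already supplies a single high-probability event on which the error is controlled uniformly over all $k$, so the only remaining work is to verify that, under the two additional hypotheses $n\to\infty$ and $d(d+s^*)\log p = o(n)$, both the deterministic rate and the failure probability vanish in the limit.

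First I would fix $\varepsilon>0$ and write $\delta_n = C\sqrt{d(d+s^*)\log p/n}$ for the right-hand side of \eqref{thm1ineqrate}. The hypothesis $d(d+s^*)\log p = o(n)$ gives $d(d+s^*)\log p/n\to 0$, hence $\delta_n\to 0$, so there exists $N$ with $\delta_n<\varepsilon$ for all $n\geq N$. On the good event of Theorem \ref{thmonelasso}, which has probability at least $1 - C_1 d^3/p$, the bound $\max_k\max\{\|a_k-\hat a_k\|_2,\|b_k-\hat b_k\|_2\}\leq\delta_n$ holds, so for $n\geq N$ this quantity is strictly below $\varepsilon$. Consequently
\begin{align*}
    {\rm P}\left(\max_{1\leq k\leq d}\max\{\|a_k-\hat a_k\|_2,\|b_k-\hat b_k\|_2\}\geq\varepsilon\right)\leq C_1\frac{d^3}{p}
\end{align*}
for every $n\geq N$.

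It then remains to send the right-hand side to zero, i.e.\ to argue $d^3/p\to 0$. In the regime of interest we have $p\gg n$, so $p\to\infty$ as $n\to\infty$; together with the control on $d$ implied by $d^2\log p = o(n)$ (which forces $d$ to be small relative to $n$, and bounded in the leading application where $d$ is fixed), this yields $d^3/p\to 0$. Hence the displayed probability tends to $0$ for every $\varepsilon>0$, which is precisely convergence in probability of $\max_k\max\{\|a_k-\hat a_k\|_2,\|b_k-\hat b_k\|_2\}$ to $0$. This uniform convergence in turn forces $\|\hat a_k - a_k\|_2\to 0$ and $\|\hat b_k - b_k\|_2\to 0$ for each individual $k$ with probability tending to one, which is the claim.

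The only delicate point I anticipate is this final accounting of the failure probability: the corollary lists $d(d+s^*)\log p = o(n)$ but not an explicit condition $d^3 = o(p)$, so I would invoke the standing high-dimensional assumption $p\gg n$ (with $d$ small) at exactly this step to guarantee $d^3/p\to 0$. Everything else is a mechanical substitution into Theorem \ref{thmonelasso}, since that theorem has already done the heavy lifting of bounding the error uniformly over all $d$ canonical pairs on a single event.
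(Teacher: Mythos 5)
Your proof is correct and takes essentially the same route as the paper, which offers no separate argument at all and simply declares the corollary an immediate consequence of Theorem \ref{thmonelasso} --- exactly the substitution of the hypotheses $n\rightarrow\infty$ and $d(d+s^*)\log p = o(n)$ into the rate \eqref{thm1ineqrate} that you carry out. Your flagged ``delicate point'' is a fair catch: the stated hypotheses alone do not force $C_1 d^3/p \rightarrow 0$ (e.g.\ fixed $p$ and $d$), so the claim of probability tending to one implicitly leans, as you note, on the paper's standing regime $p \gg n > d$.
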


\begin{remark}
By Theorem 1 in \cite{mai2019iterative},  and the fact that $\|a - b\|_2^2 = \|a\|_2^2 - 2\|a\|_2\|b\|_2\cos(\langle a,b\rangle) + \|b\|_2^2$, it can be shown that 
\begin{align*}
    \max\left\{\|a_k - \hat a_k\|_2^2, \|b_k - \hat b_k\|_2^2\right\} = O_{{\rm Pr}}\left(s_1^2\sqrt{\frac{\log p}{n}}\right),
\end{align*}
where $s_1 = \max\{\|a_k\|_1,\|b_k\|_1\}$. Our result shows an improved convergence rate compared to \cite{mai2019iterative} in some scenarios (for example, where $n$ is large). Although we are considering a different asymptotic regime (where $d$ is small), and our theory does not cover the scenarios in \cite{mai2019iterative}, the result illustrates the difference of our approach compared to others. 
\end{remark}

\section{Numeric simulation}\label{secnumeric}
In this section, we conduct numeric studies on the applications of the eigenvector-based sparse CCA and sparse CCA methods. We compare the eigenvector-based sparse CCA (E-CCA) with CCA with a nugget parameter (nCCA) as in \eqref{ccamuforma} and \eqref{ccamuformb}, CCA with $l_1$ penalty ($l_1$-CCA) \cite{witten2009penalized}, sCCA \cite{lee2011sparse}, and rgCCA \cite{tenenhaus2011regularized,rgccapackage}. We use R packages \texttt{PMA} \cite{pmapackage}, \texttt{sCCA} \cite{sccapackage}, \texttt{RGCCA} \cite{rgccapackage} to implement $l_1$-CCA, sCCA, and rgCCA, respectively.

\subsection{Example 1}\label{sec:num1}
As a starting point, we consider the following simple example, which is inspired by the numeric examples in \cite{witten2009penalized,witten2009extensions}. Consider two random variables
\begin{align}\label{numeg1}
    X_1 = A_1u + \epsilon_1,\quad X_2 = A_2u + \epsilon_2,
\end{align}
where $A_1\in \RR^{30\times 50}$ and $A_2\in \RR^{1000\times 50}$ are fixed matrix, and $u\in \RR^{50}$ is a random variable where each element of $u\in \RR^{50}$ is uniformly distributed on $(-0.5,0.5)$. We generate each element in $A_1\in \RR^{30\times 50}$ by Unif(0,2). We generate the sparse matrix $A_2\in \RR^{1000\times 50}$ by the following rule. For each column in $A_2$, we randomly select 50 elements, and generate each element in these 50 elements by Unif(0,2); the other 950 elements are set to be zero. The $\epsilon_1$ and $\epsilon_2$ are normally distributed random variables, with mean zero and variance 0.1. We use \eqref{CCAformpre} to compute the maximum canonical correlation, which is very close to one. In this numeric simulation example, we only consider the first canonical pair. More complicated examples with multiple canonical pairs have been considered in Section \ref{sec:num2}. 

E-CCA, $l_1$-CCA and rgCCA can provide a canonical pair in 0.2 second, while sCCA needs about 4 seconds. nCCA needs to solve a matrix inversion with size $1000\times 1000$, which is too time consuming. Therefore, we only compare the performance of E-CCA, $l_1$-CCA, rgCCA, and sCCA, and do not consider nCCA. We run 50 replicates. For each replicate, we sample the training data and the test data from the true distribution \eqref{numeg1}. Both training data set and the test data set have size 50. We estimate the canonical correlation using the training data set, and use the test data set to compute the canonical correlation. All methods provide negative correlation on the test data set sometimes. E-CCA, $l_1$-CCA, sCCA, and rgCCA provide negative correlations for 4, 23, 23, and 28 times, respectively. Even for the positive correlations,  E-CCA can provide a higher correlation on the test set. The boxplots for all canonical correlations and positive canonical correlations are shown in Figure \ref{figeg1} (a) and (b), respectively. It can be seen that in this simple example, our method can estimate the canonical correlation more accurately than other competing methods.

\begin{figure}[!ht]
\centering
\begin{subfigure}[b]{0.4\textwidth}
    \centering
    \includegraphics[width=\textwidth,height = \textwidth]{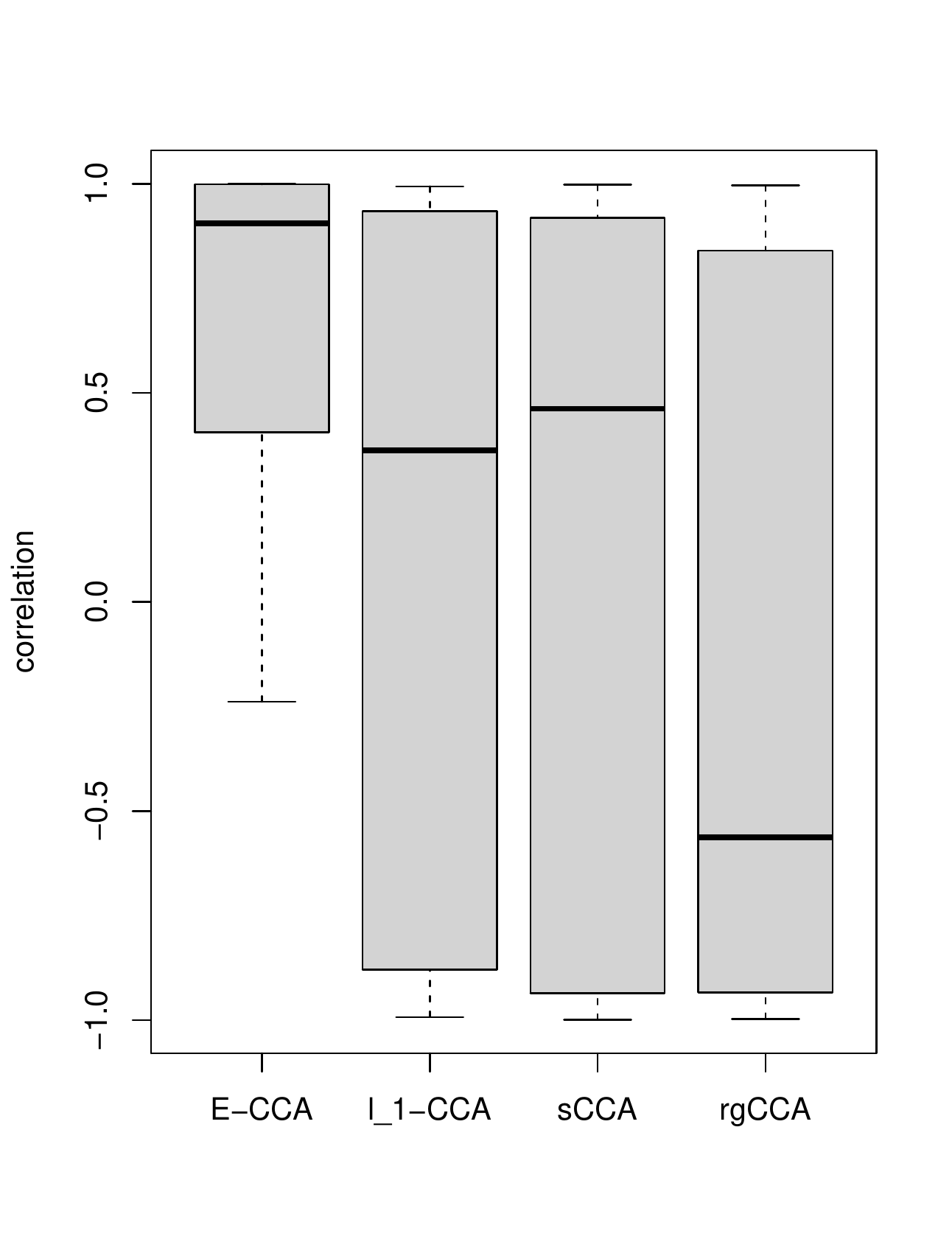}
    \caption{All canonical correlations.}
\end{subfigure}
    \begin{subfigure}[b]{0.4\textwidth}
    \centering
    \includegraphics[width=\textwidth,height = \textwidth]{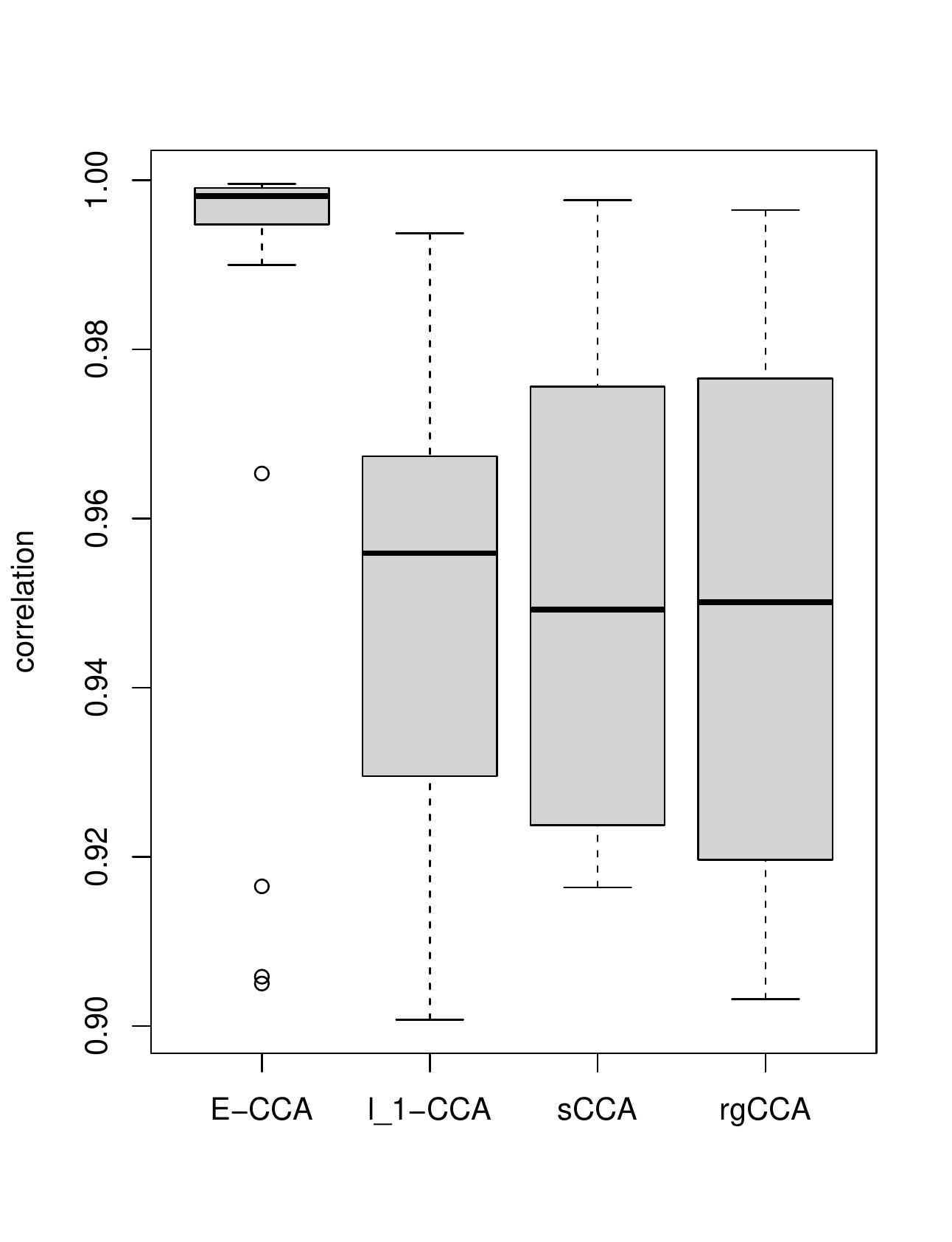}
    \caption{All positive canonical correlations.}
\end{subfigure}
    \caption{The canonical correlations for E-CCA, $l_1$-CCA, sCCA, and rgCCA in Example 1.}\label{figeg1}
\end{figure}

\subsection{Example 2}\label{sec:num2}
In this simulation study, we first simulate $X \sim N(0,\Sigma_{xx})$, $Y\sim N(0,\Sigma_{yy})$, with 
\begin{align}\label{eqnumcov}
    \Sigma_{xx} & = A_1A_1^\top  + 0.1I_p,\Sigma_{yy} = B_*\Sigma_{xx}B_*^\top  + \sigma^2I_d, \Sigma_{xy}  = \Sigma_{xx}B_*^\top ,
\end{align}
where $B_*\in \RR^{d\times p}$ is a sparse matrix, $A_1\in \RR^{p\times d}$, $\sigma>0$ is a parameter, and $I_p$ and $I_d$ are identity matrices with size $p$ and $d$, respectively.

Given $\Sigma_{xx}$, $B_*$ and $\sigma^2$, we can calculate the $k$-th true canonical vectors by \eqref{traforma} and \eqref{traformb}, denoted by $a_k$ and $b_k$, respectively. In the numeric simulation, we mainly focus on the first three pairs of canonical vectors $(a_1,b_1)$, $(a_2,b_2)$, and $(a_3,b_3)$. 

For E-CCA and nCCA, we generate an independent validation set of $X_v$ and $Y_v$ with the same sample size as the training set. The validation set is used to select the tuning parameters. Specifically, let $\lambda_1,...,\lambda_m$ be candidates of tuning parameters, and $(\hat a_{1,1}', \hat b_{1,1}'),...,(\hat a_{1,m}',\hat b_{1,m}')$ be the canonical vectors obtained by using parameters $\lambda_1,...,\lambda_m$, respectively. Then we compute Corr$(Y_v^\top a_{1,j}',X_v^\top b_{1,j}')$ for $j=1,...,m$, and choose $k= \argmax_{1\leq j\leq m}{\rm Corr}(Y_v^\top a_{1,j}',X_v^\top b_{1,j}')$. The tuning parameter then is chosen to be $\lambda_k$, and the estimated canonical vectors are $\hat a_i = \hat a_{i,k}'$ and $\hat b_i =\hat b_{i,k}'$. In $l_1$-CCA, since the dimension of $y$ is less than the sample size, we do not impose penalty on the canonical vectors $a_i$, and use six candidates of tuning parameters for the penalty term on $b_i$. For E-CCA and nCCA, we also use six candidates of tuning parameters. We use default settings for sCCA. After obtaining estimated canonical vectors, we compare the $l_2$ errors $(\sum_{i=1}^3\|\hat a_i - a_i\|_2^2)^{1/2}$ and $(\sum_{i=1}^3\|\hat b_i - b_i\|_2^2)^{1/2}$ for all four methods.

Note in \eqref{eqnumcov}, the parameter $\sigma^2$ controls the correlation between $x$ and $y$. Roughly speaking, a larger $\sigma^2$ leads to a smaller correlation between $x$ and $y$. Therefore, we choose $\sigma^2 = 0.1k$, for $k\in\{3,...,24\}$ (when $(n,p,d)=(1500,2500,10)$ in Case 1, we choose $\sigma^2 = 0.1k$, for $k\in\{3,...,7\}$, because the computation time becomes much larger) to see the change of $l_2$ errors when the correlation of $x$ and $y$ changes. For each $k$, we run $N=50$ replicates (when $(n,p,d)=(1500,2500,10)$ in Case 1, we set $N=25$). For $j$-th replicate, we compute the estimated canonical vectors $\hat a_{i,j}$ and $\hat b_{i,j}$, and use 
\begin{align*}
     \left(\hat{{\rm E}}\sum_{i=1}^3\|\hat a_i - a_i\|_2^2\right)^{1/2} = \left(\frac{1}{N}\sum_{j=1}^N \sum_{i=1}^3\|\hat a_{i,j} - a_i\|_2^2\right)^{1/2}, \left(\hat{{\rm E}}\|\hat b_i - b_i\|_2^2\right)^{1/2}=\left(\frac{1}{N}\sum_{j=1}^N \sum_{i=1}^3\|\hat b_{i,j} - b_i\|_2^2\right)^{1/2}
\end{align*}
to approximate the root mean squared prediction error (RMSE) $({\rm E}\sum_{i=1}^3\|\hat a_i - a_i\|_2^2)^{1/2}, ({\rm E}\sum_{i=1}^3\|\hat b_i - b_i\|_2^2)^{1/2},$ respectively. We also collect the computation time of these four methods.

We consider two cases, where the matrix $B_*$ is different. In both cases, we use the sample size $n=500$. The matrix $A_1 = (\alpha_{jk})_{jk}$ is randomly generated by
\begin{align*}
    \alpha_{jk} \left\{ \begin{array}{cc}
    \sim {\rm Unif}(0,2) & \mbox{ with probability 0.3,}\\
     = 0 & \mbox{ with probability 0.7,}
    \end{array}\right.
\end{align*}
where $ {\rm Unif}(0,2)$ is the uniform distribution on the interval $[0,2]$. 

\textbf{Case 1:} In \eqref{eqnumcov}, we choose $B_* = (B_1,B_2)^\top $, where $B_1=(B^{(1)}_{ij})\in \RR^{d\times (d+1)}$ with $B^{(1)}_{ii}=1$, $B^{(1)}_{i,i-1}=0.4, B^{(1)}_{i,i+1}=0.2, B^{(1)}_{i,i-2}=0.1$ and all other elements zero, and $B_2\in \RR^{d\times (p-d-1)}$ is a zero matrix. The results of approximated root mean squared prediction errors $\left(\hat{{\rm E}}\sum_{i=1}^3\|\hat a_i - a_i\|_2^2\right)^{1/2}$, $\left(\hat{{\rm E}}\sum_{i=1}^3\|\hat b_i - b_i\|_2^2\right)^{1/2}$ and $\left(\hat{{\rm E}}\sum_{i=1}^3\|\hat a_i - a_i\|_2^2\right)^{1/2}+\left(\hat{{\rm E}}\sum_{i=1}^3\|\hat b_i - b_i\|_2^2\right)^{1/2}$, and the computation time for one replicate are shown in Figure \ref{fig:case1ab}. 

\textbf{Case 2:}
We choose $B_* = (B_1,B_2)^\top $ in \eqref{eqnumcov}, where $B_2=(B^{(2)}_{ij})\in \RR^{d\times d}$ with $B^{(2)}_{i,d-i+1}=2$, $B^{(2)}_{i,d-i+2}=B^{(2)}_{i,d-i}=1$ and all other elements zero, and $B_1\in \RR^{d\times (p-d)}$ is a zero matrix. The results of approximated root mean squared prediction errors $\left(\hat{{\rm E}}\sum_{i=1}^3\|\hat a_i - a_i\|_2^2\right)^{1/2}$, $\left(\hat{{\rm E}}\sum_{i=1}^3\|\hat b_i - b_i\|_2^2\right)^{1/2}$ and $\left(\hat{{\rm E}}\sum_{i=1}^3\|\hat a_i - a_i\|_2^2\right)^{1/2}+\left(\hat{{\rm E}}\sum_{i=1}^3\|\hat b_i - b_i\|_2^2\right)^{1/2}$, and the computation time for one replicate are shown in Figure \ref{fig:case2ab}. Note that since all methods perform poorly when $(n,p,d)=(1500,2500,10)$ in Case 2, we omit the results for that case.

\begin{figure}[!ht]
    \centering
    \begin{subfigure}[b]{0.24\textwidth}
        \centering
        \includegraphics[width=\textwidth]{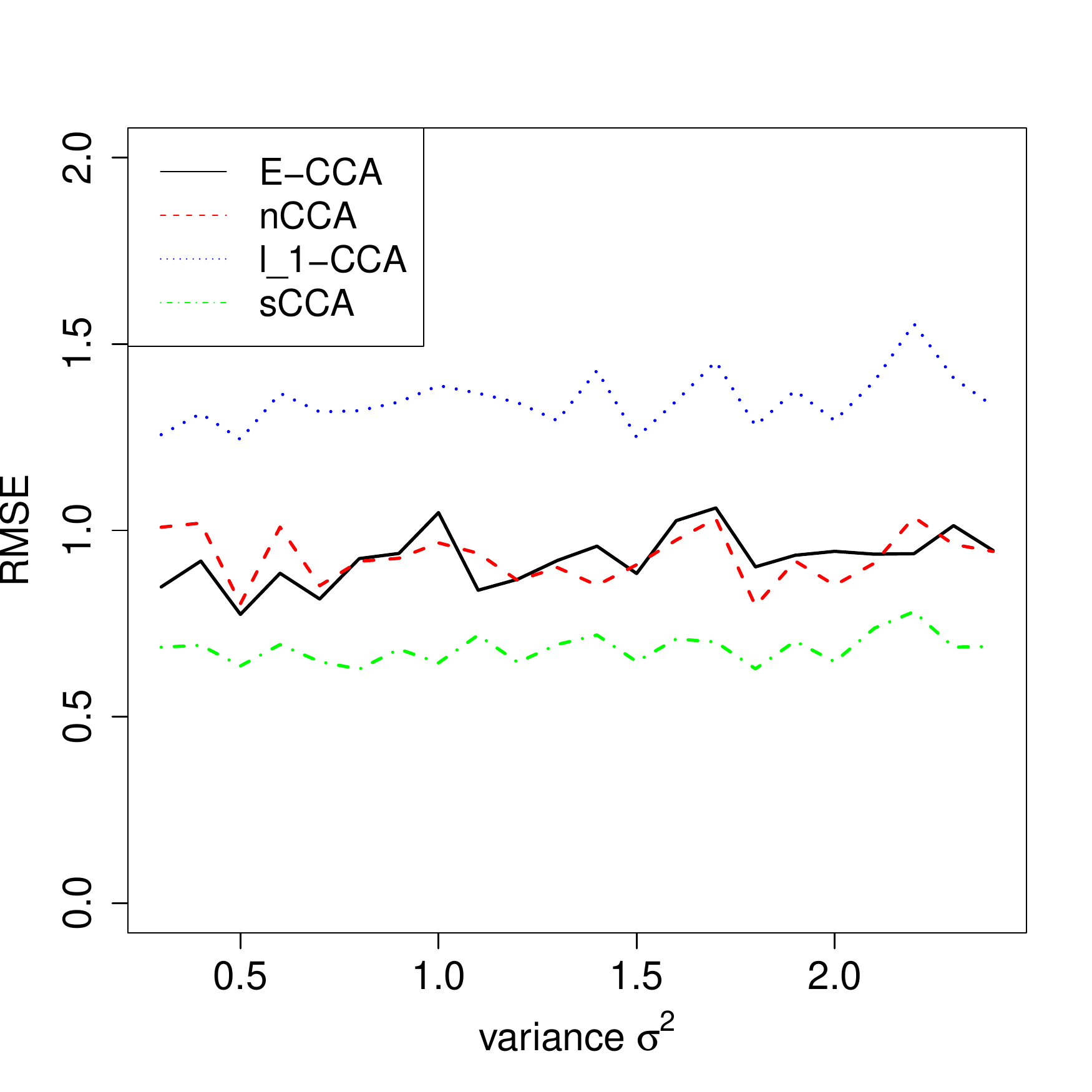}
        \caption{$(\hat{{\rm E}}\sum_{i=1}^3\|\hat a_i - a_i\|_2^2)^{1/2}$.}
    \end{subfigure}
    \begin{subfigure}[b]{0.24\textwidth}
        \centering
        \includegraphics[width=\textwidth]{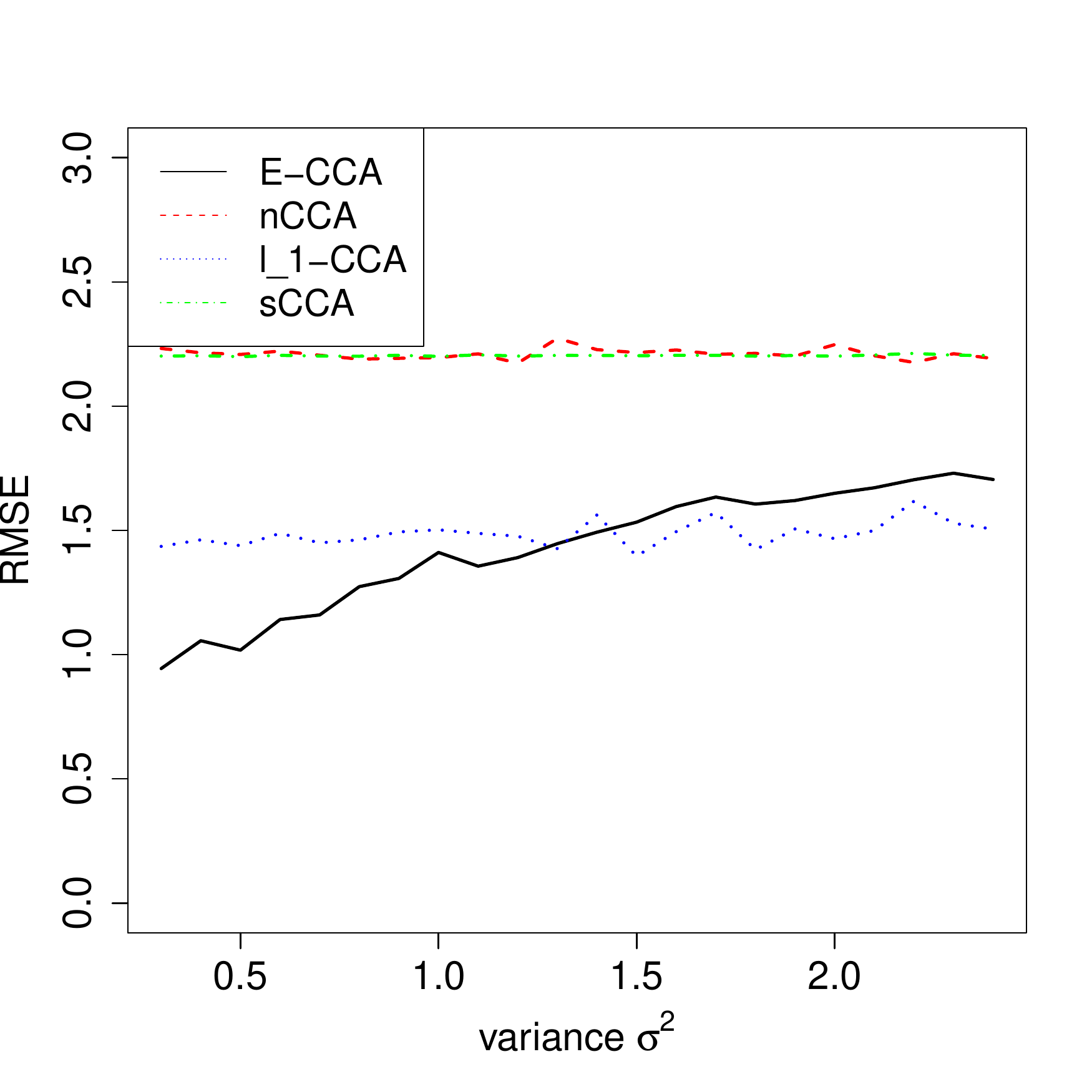}
        \caption{$(\hat{{\rm E}}\sum_{i=1}^3\|\hat b_i - b_i\|_2^2)^{1/2}$.}
    \end{subfigure}
    \begin{subfigure}[b]{0.24\textwidth}
        \centering
        \includegraphics[width=\textwidth]{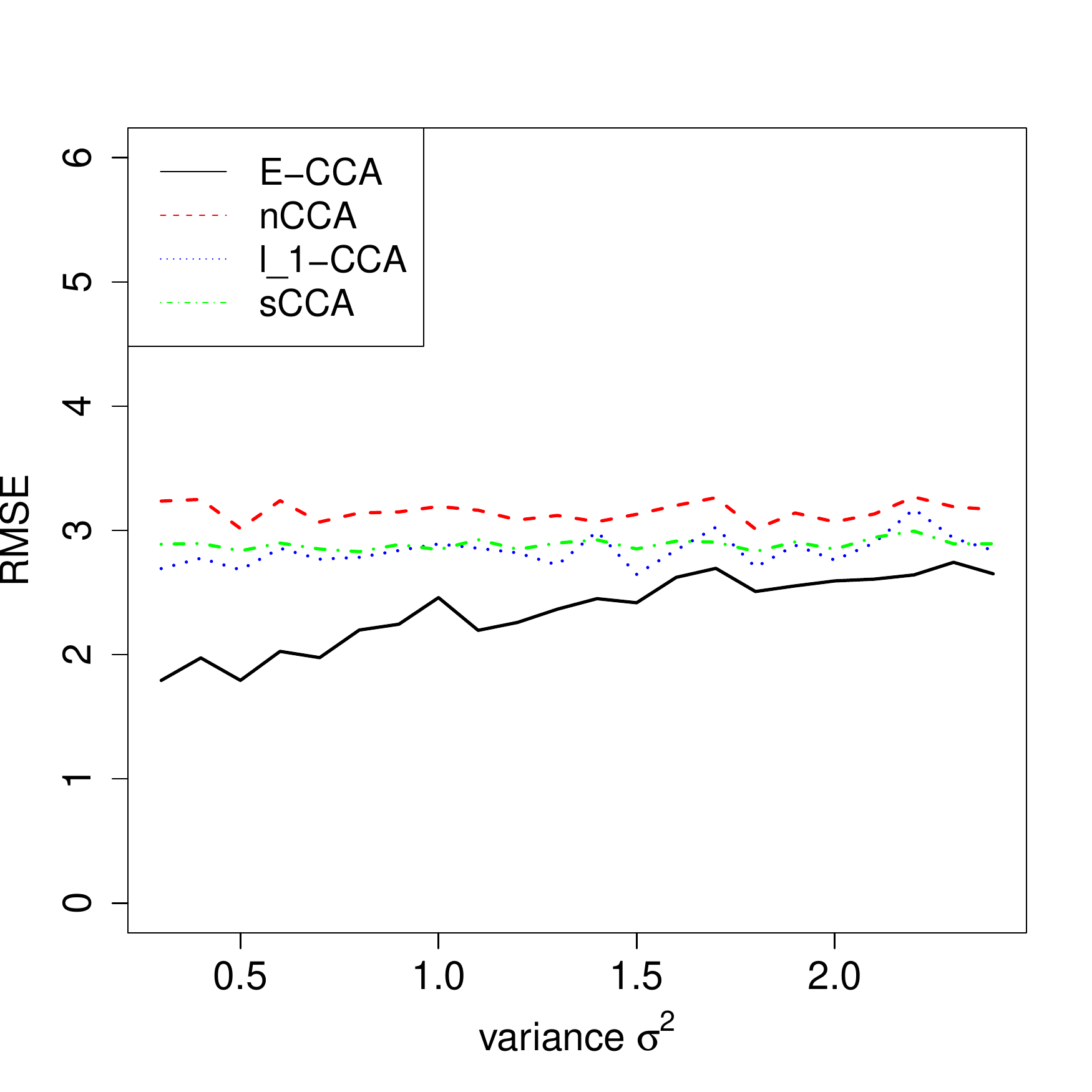}
        \caption{Total error.}
    \end{subfigure}
    \begin{subfigure}[b]{0.24\textwidth}
        \centering
        \includegraphics[width=\textwidth]{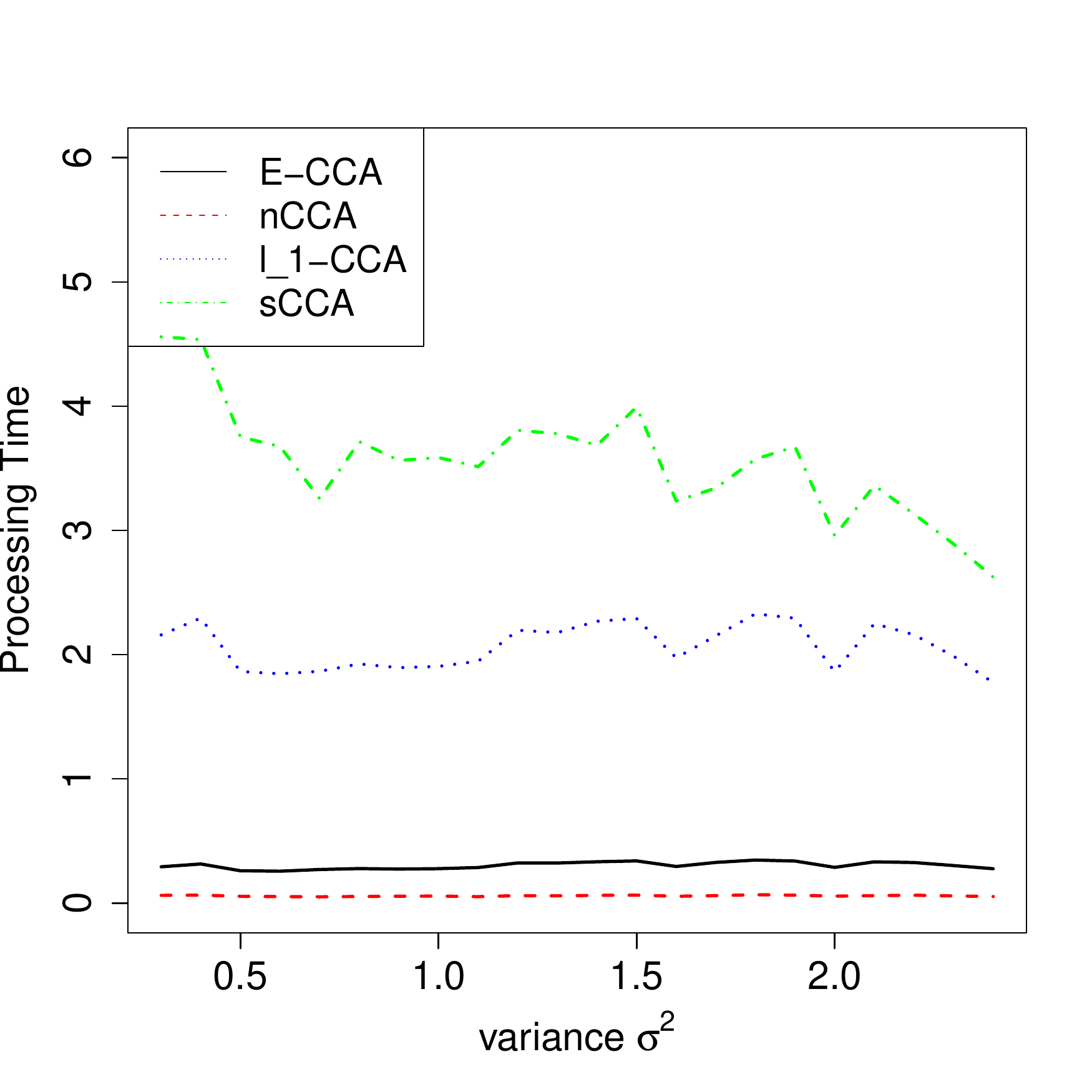}
        \caption{Processing time.}
    \end{subfigure}
    \begin{subfigure}[b]{0.24\textwidth}
        \centering
        \includegraphics[width=\textwidth]{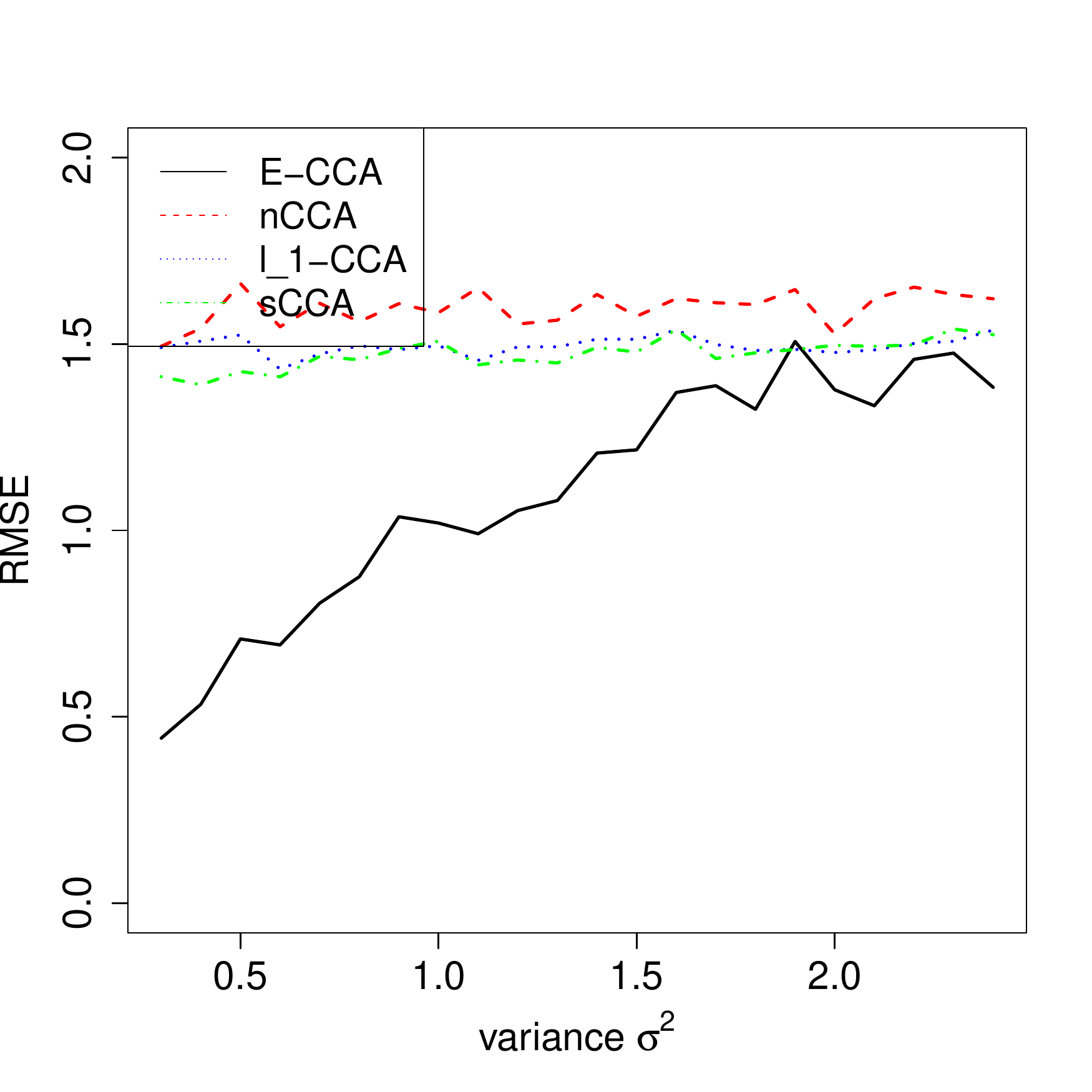}
        \caption{$(\sum_{i=1}^3\hat{{\rm E}}\|\hat a_i - a_i\|_2^2)^{1/2}$.}
    \end{subfigure}
        \begin{subfigure}[b]{0.24\textwidth}
        \centering
        \includegraphics[width=\textwidth]{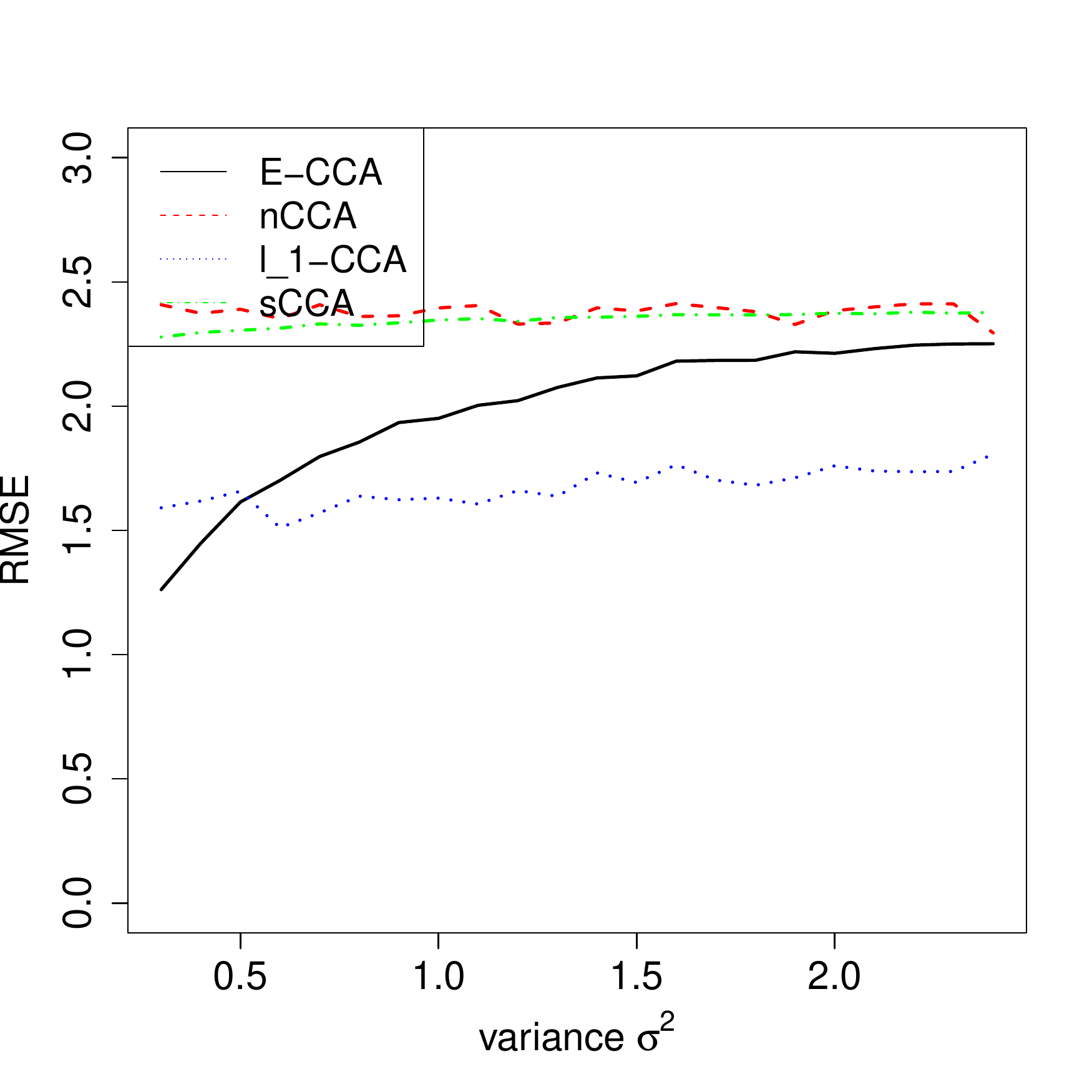}
        \caption{$(\sum_{i=1}^3\hat{{\rm E}}\|\hat b_i - b_i\|_2^2)^{1/2}$.}
    \end{subfigure}
    \begin{subfigure}[b]{0.24\textwidth}
        \centering
        \includegraphics[width=\textwidth]{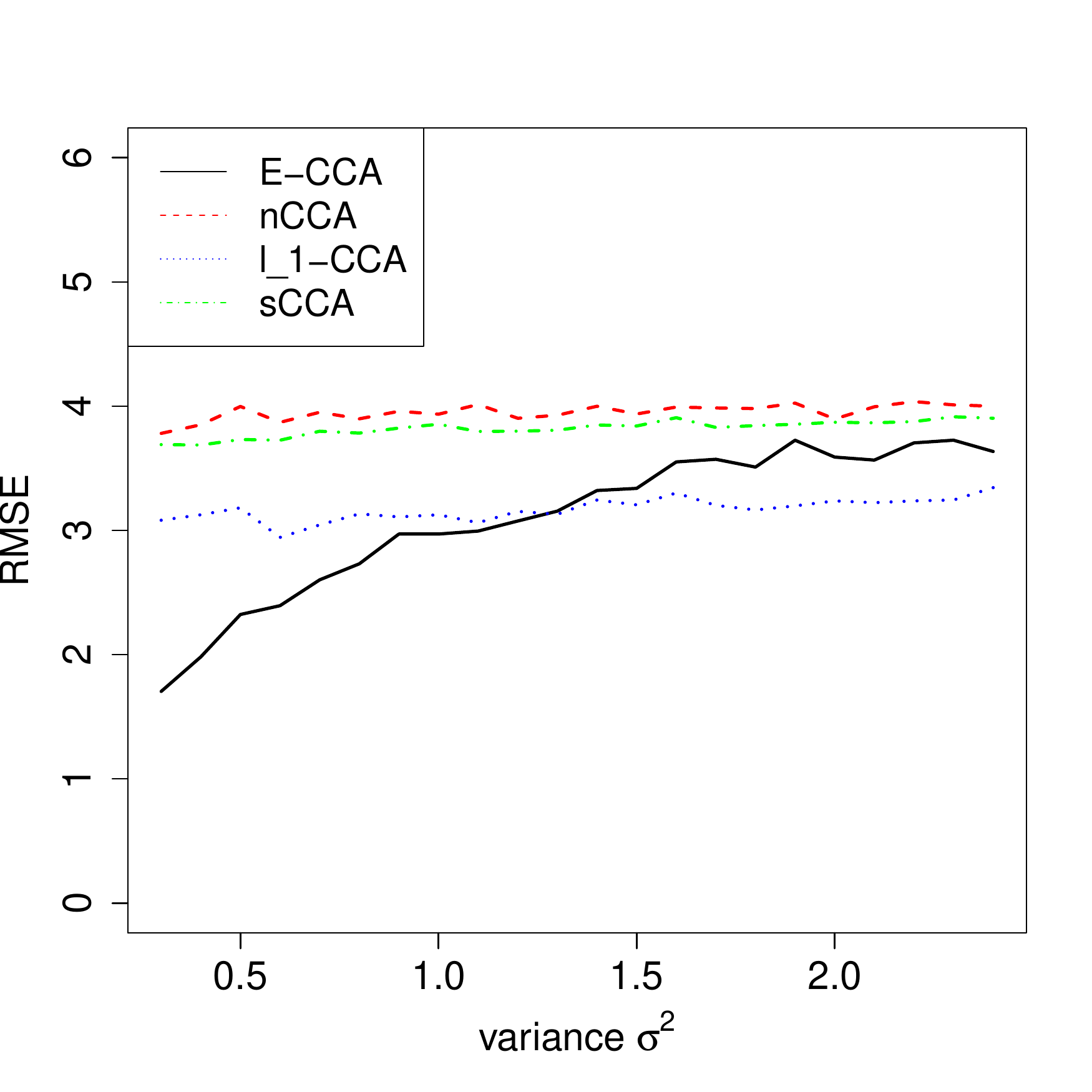}
        \caption{Total error.}
    \end{subfigure}
    \begin{subfigure}[b]{0.24\textwidth}
        \centering
        \includegraphics[width=\textwidth]{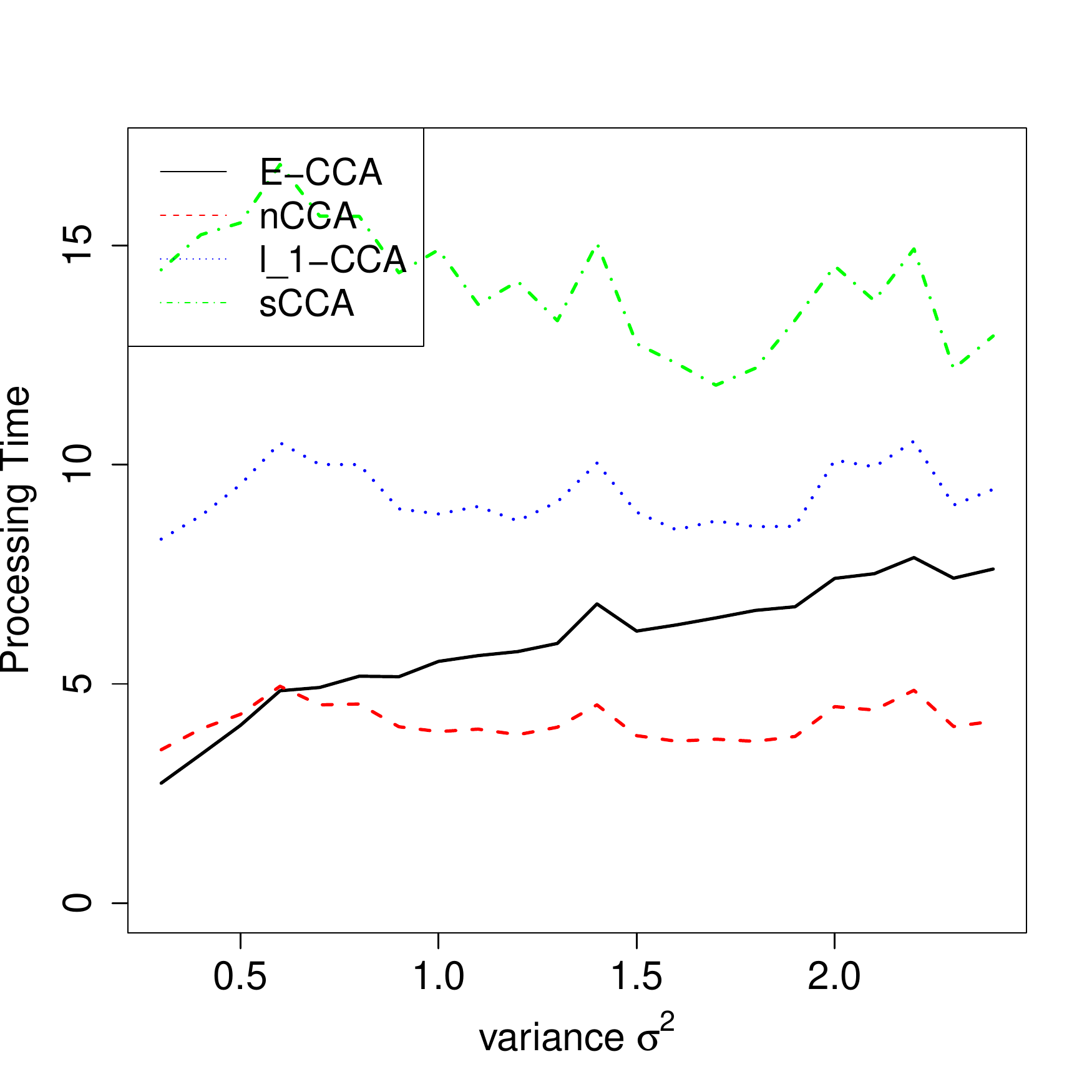}
        \caption{Processing time.}
    \end{subfigure}
    
    \begin{subfigure}[b]{0.24\textwidth}
        \centering
        \includegraphics[width=\textwidth]{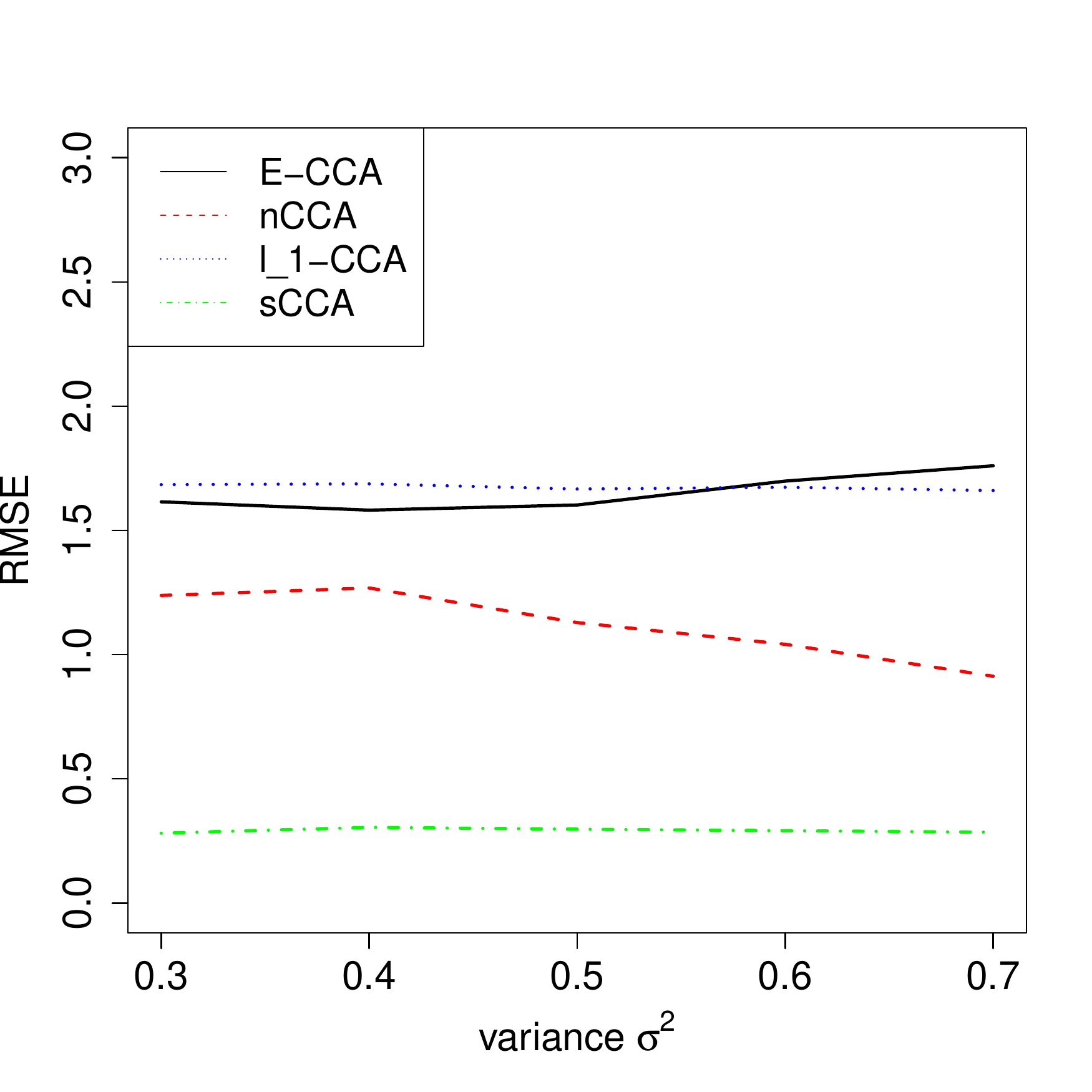}
        \caption{$(\sum_{i=1}^3\hat{{\rm E}}\|\hat a_i - a_i\|_2^2)^{1/2}$.}
    \end{subfigure}
    \begin{subfigure}[b]{0.24\textwidth}
        \centering
        \includegraphics[width=\textwidth]{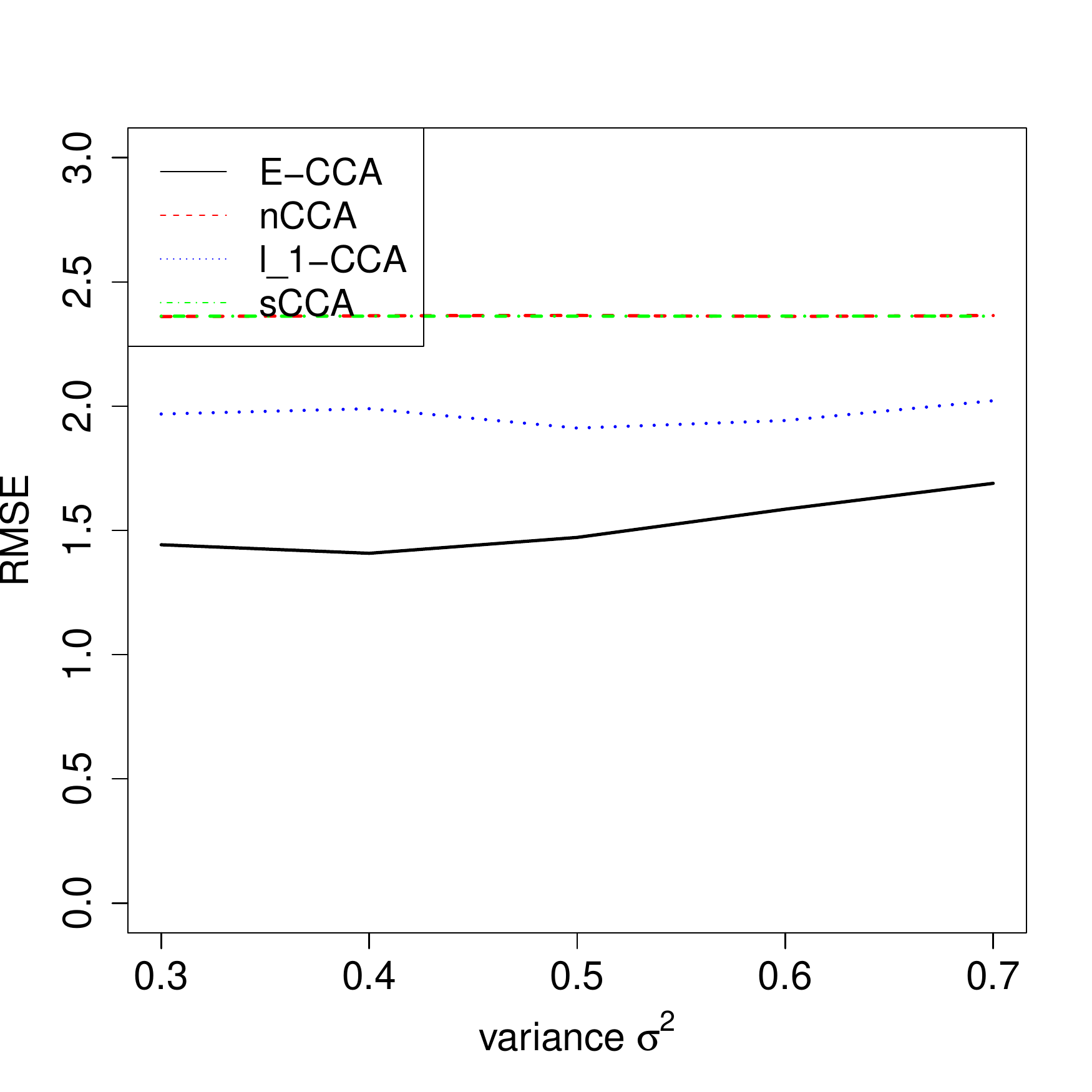}
        \caption{$(\sum_{i=1}^3\hat{{\rm E}}\|\hat b_i - b_i\|_2^2)^{1/2}$.}
    \end{subfigure}
    \begin{subfigure}[b]{0.24\textwidth}
        \centering
        \includegraphics[width=\textwidth]{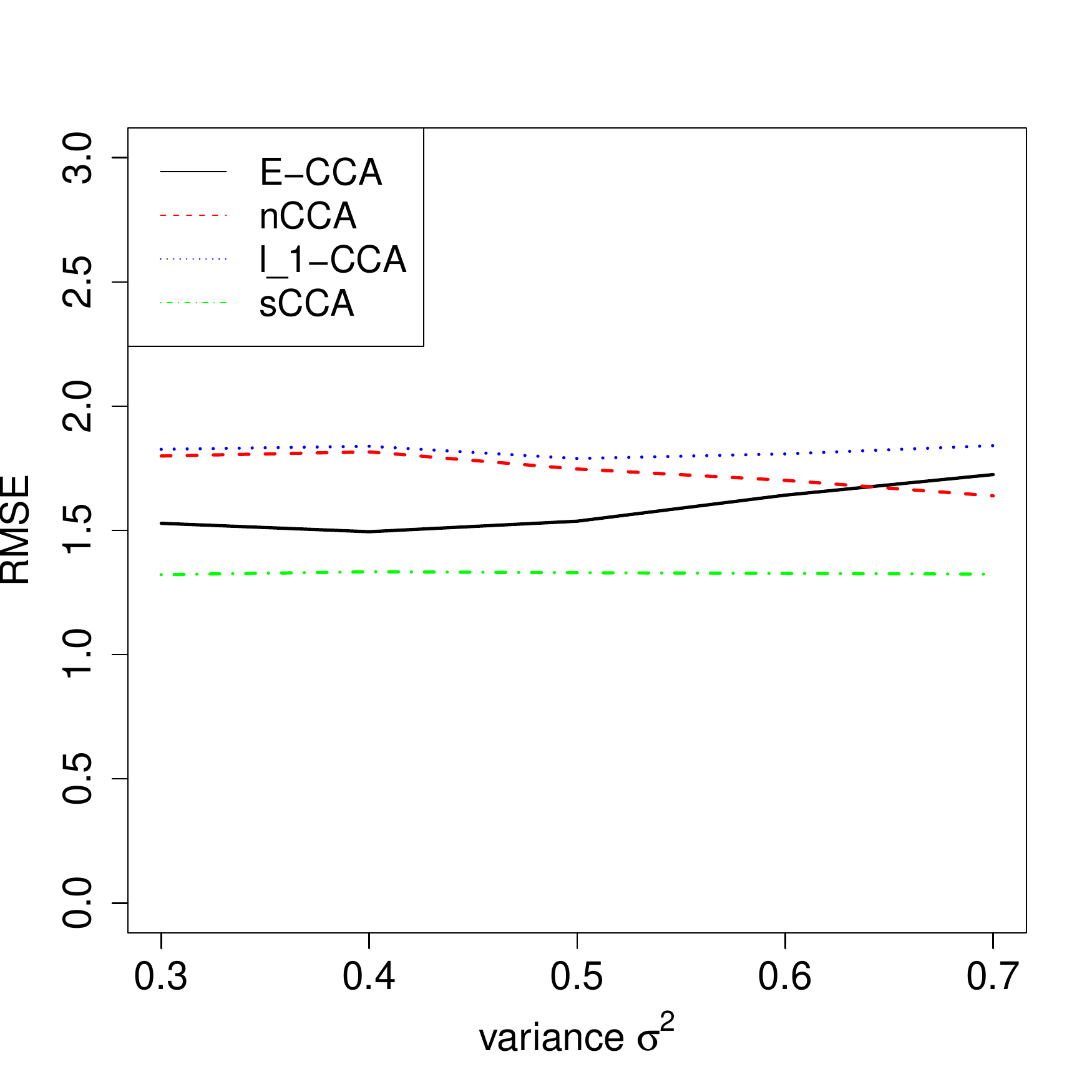}
        \caption{Total error.}
    \end{subfigure}
    \begin{subfigure}[b]{0.24\textwidth}
        \centering
        \includegraphics[width=\textwidth]{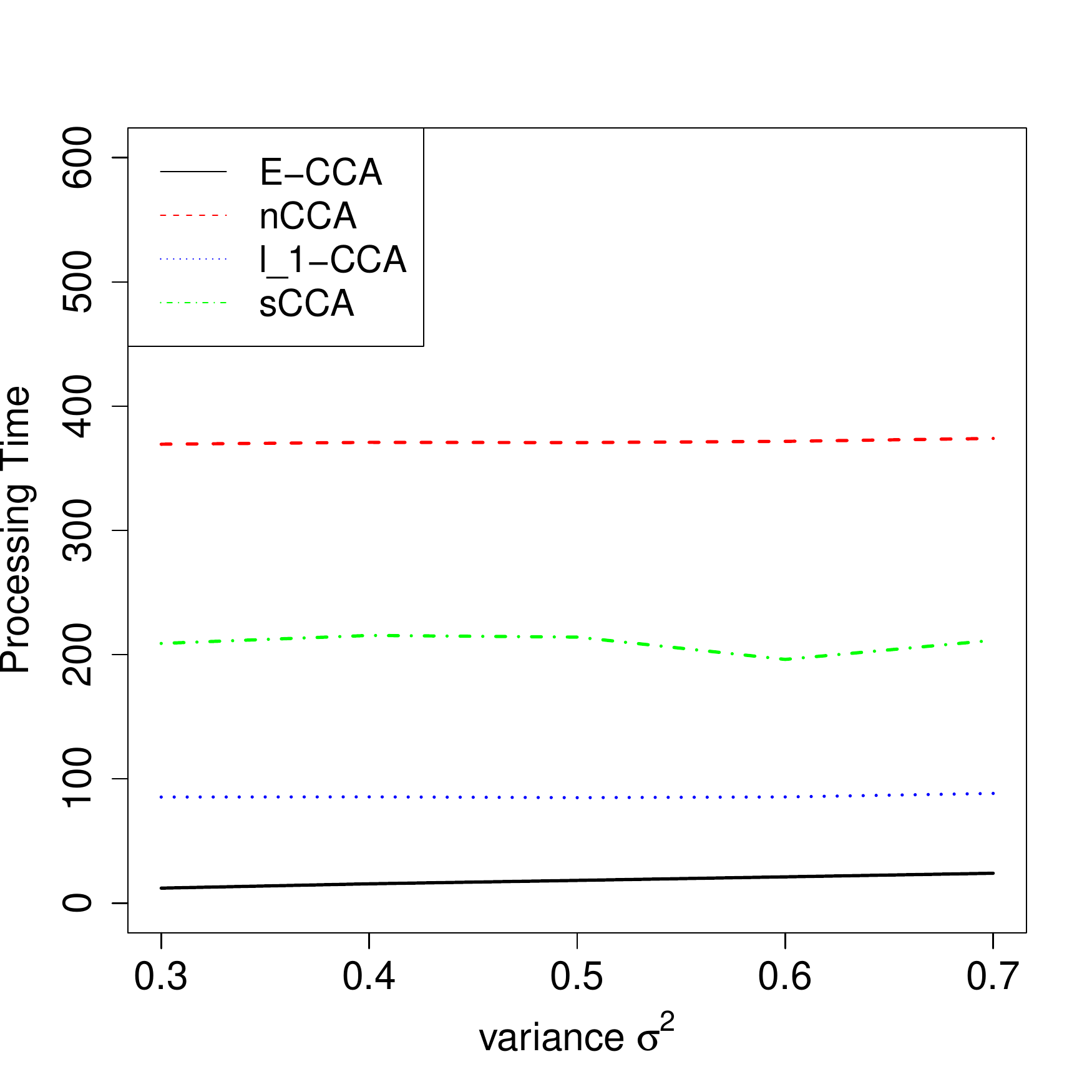}
        \caption{Processing time.}
    \end{subfigure}
    \caption{The average of approximated root mean squared prediction errors and processing time for Case 1. The processing time is in seconds. In subfigures (c) and (g), Total error = $(\hat{{\rm E}}\sum_{i=1}^3\|\hat a_i - a_i\|_2^2)^{1/2} +(\hat{{\rm E}}\sum_{i=1}^3\|\hat b_i - b_i\|_2^2)^{1/2}$. \textbf{Row 1:} $(p,d)=(100,5)$. \textbf{Row 2:} $(p,d)=(500,3)$.  \textbf{Row 3:} $(n,p,d)=(1500,2500,10)$. Note in Row 3, the variance is from 0.3 to 0.7. }
    \label{fig:case1ab}
\end{figure}

\begin{figure}[!ht]
    \centering
    \begin{subfigure}[b]{0.24\textwidth}
        \centering
        \includegraphics[width=\textwidth]{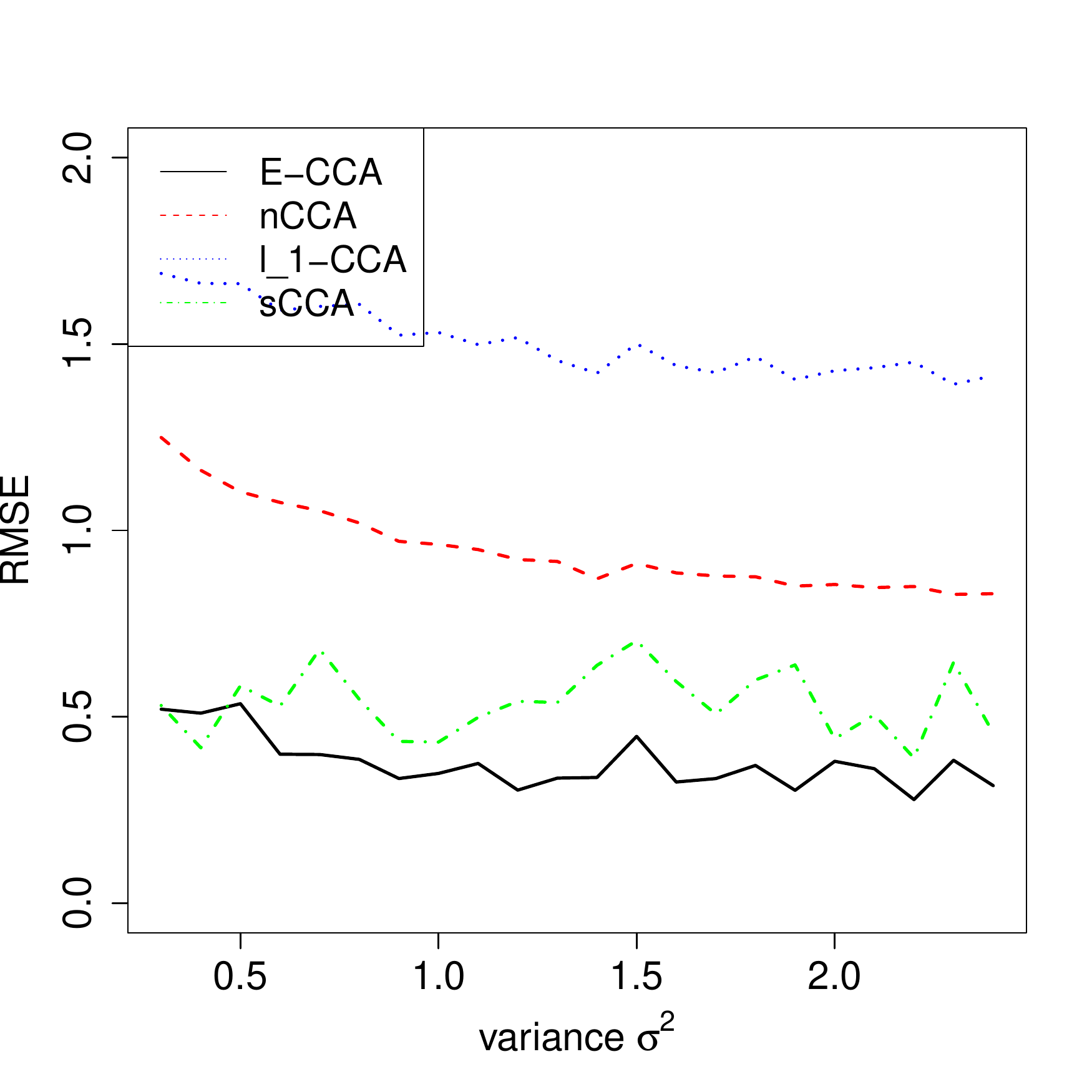}
        \caption{$(\hat{{\rm E}}\sum_{i=1}^3\|\hat a_i - a_i\|_2^2)^{1/2}$.}
    \end{subfigure}
    \begin{subfigure}[b]{0.24\textwidth}
        \centering
        \includegraphics[width=\textwidth]{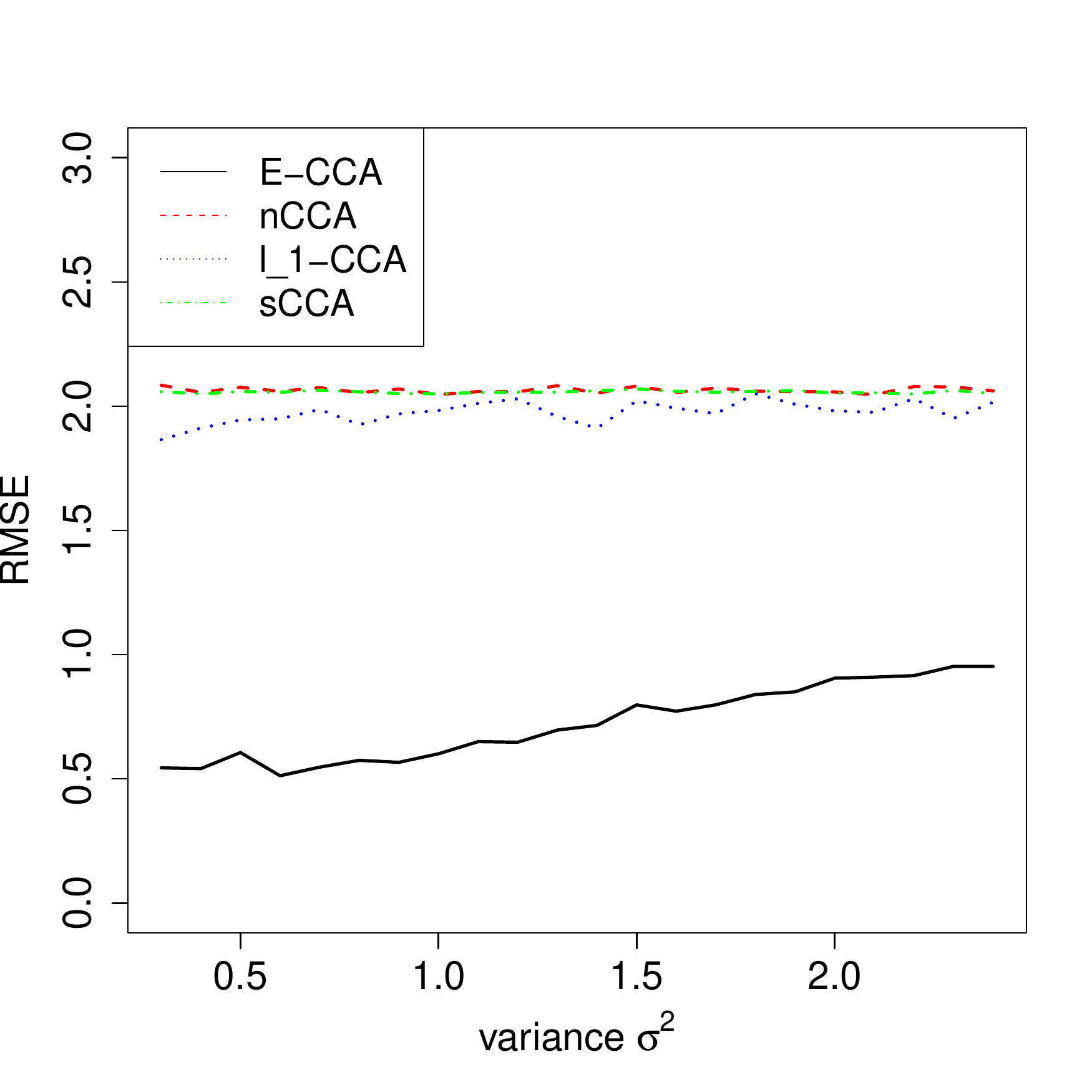}
        \caption{$(\hat{{\rm E}}\sum_{i=1}^3\|\hat b_i - b_i\|_2^2)^{1/2}$.}
    \end{subfigure}
        \begin{subfigure}[b]{0.24\textwidth}
        \centering
        \includegraphics[width=\textwidth]{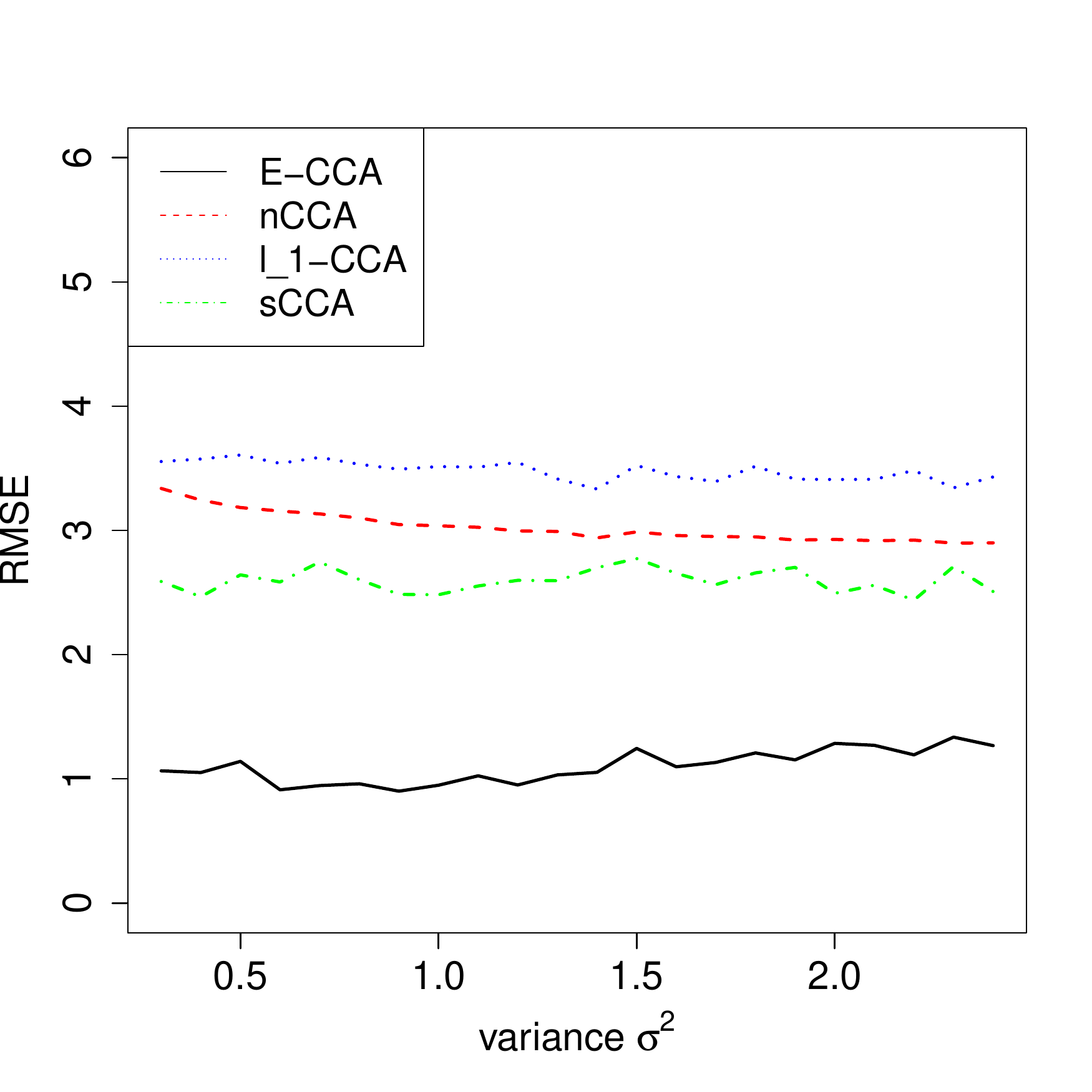}
        \caption{Total error.}
    \end{subfigure}
    \begin{subfigure}[b]{0.24\textwidth}
        \centering
        \includegraphics[width=\textwidth]{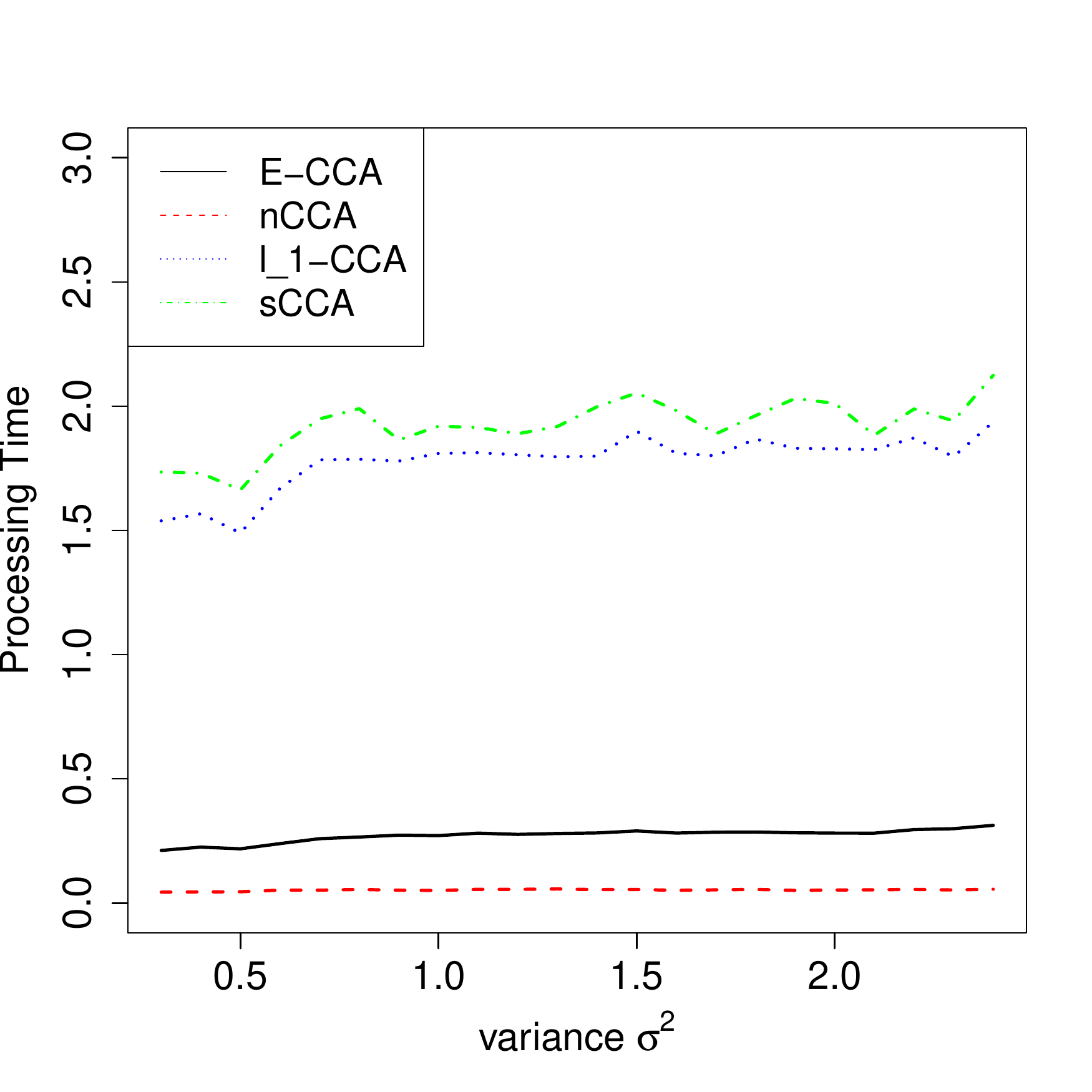}
        \caption{Processing time.}
    \end{subfigure}
    \begin{subfigure}[b]{0.24\textwidth}
        \centering
        \includegraphics[width=\textwidth]{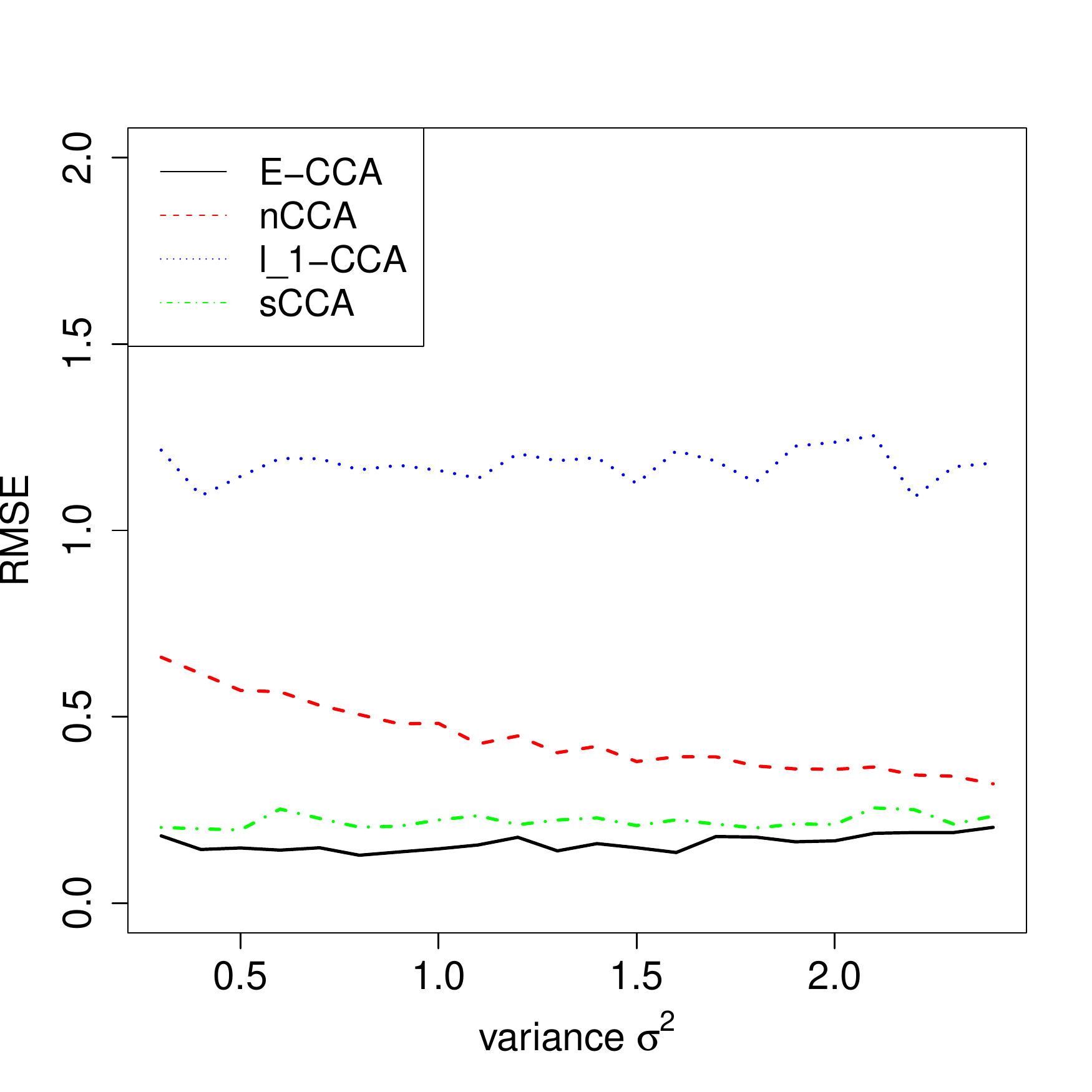}
        \caption{$(\hat{{\rm E}}\sum_{i=1}^3\|\hat a_i - a_i\|_2^2)^{1/2}$.}
    \end{subfigure}
        \begin{subfigure}[b]{0.24\textwidth}
        \centering
        \includegraphics[width=\textwidth]{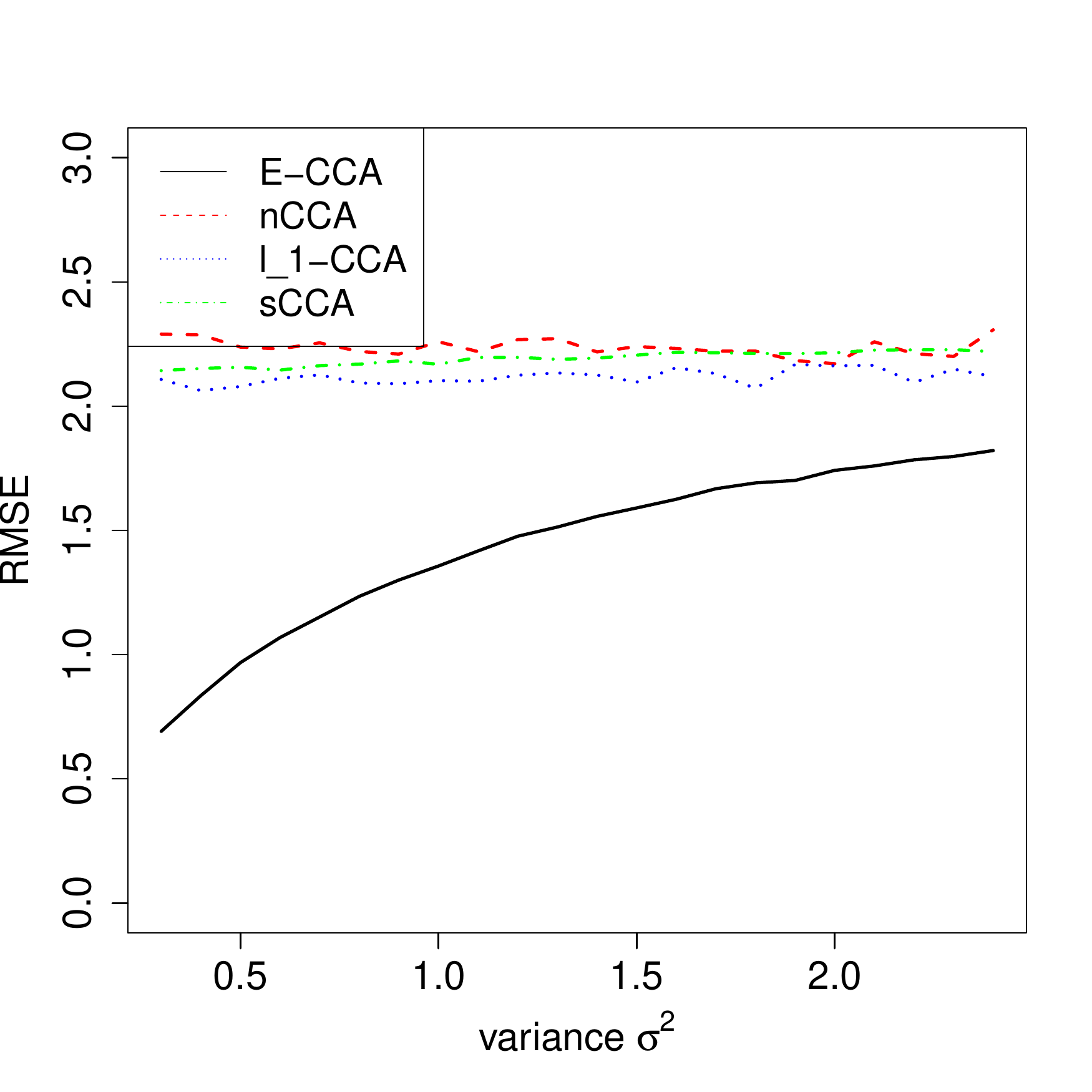}
        \caption{$(\hat{{\rm E}}\sum_{i=1}^3\|\hat b_i - b_i\|_2^2)^{1/2}$.}
    \end{subfigure}
    \begin{subfigure}[b]{0.24\textwidth}
        \centering
        \includegraphics[width=\textwidth]{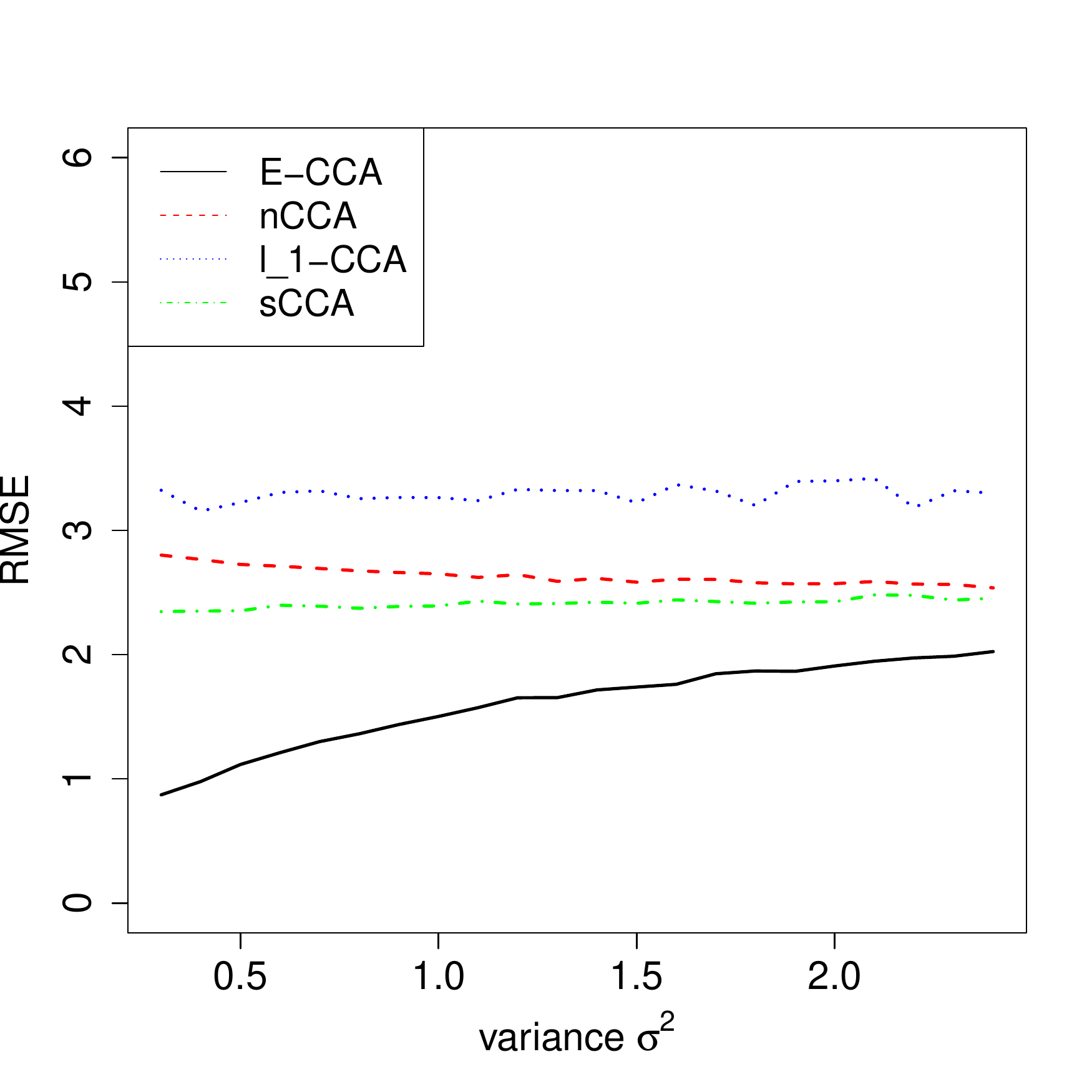}
        \caption{Total error.}
    \end{subfigure}
    \begin{subfigure}[b]{0.24\textwidth}
        \centering
        \includegraphics[width=\textwidth]{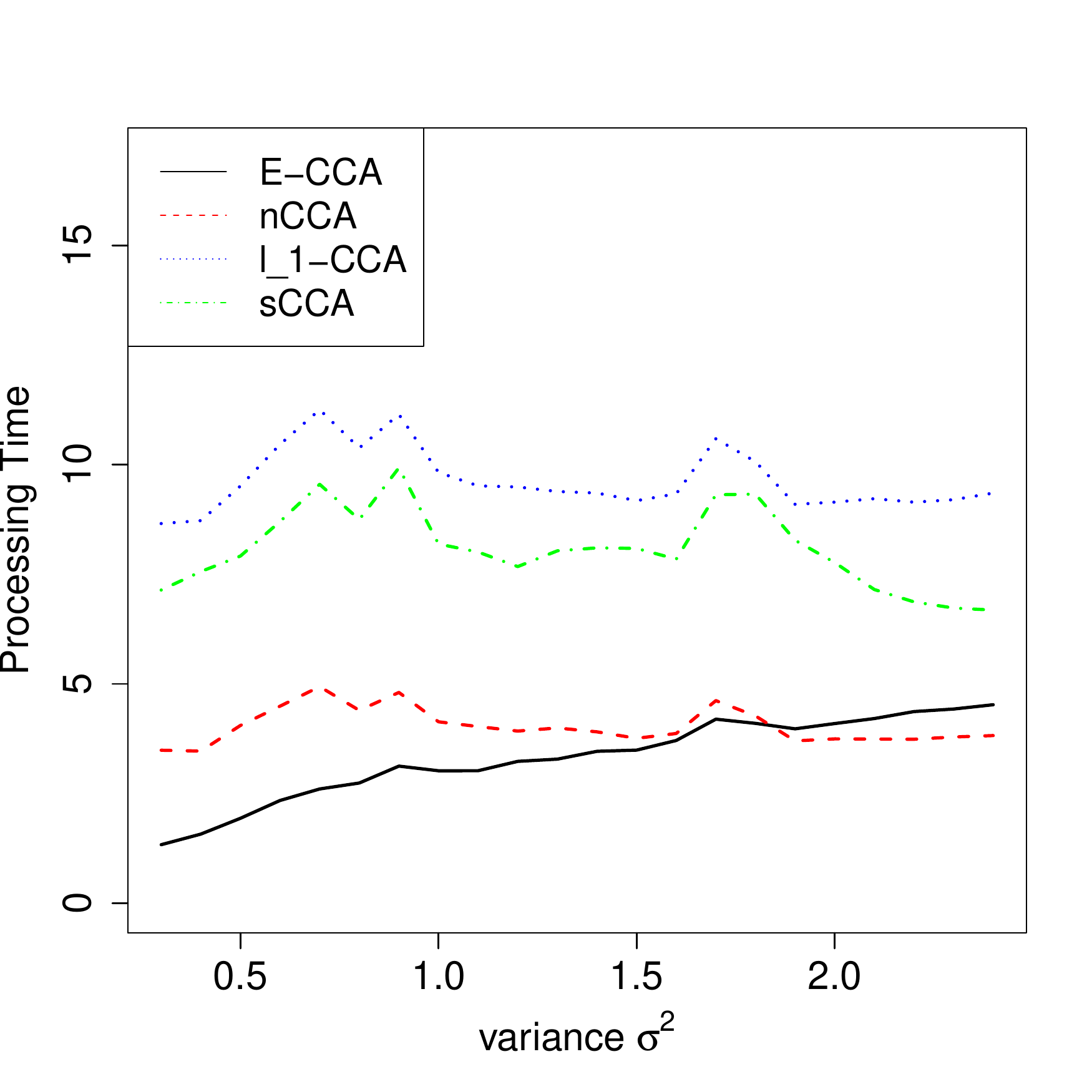}
        \caption{Processing time.}
    \end{subfigure}
    \caption{The average of approximated root mean squared prediction errors and processing time for Case 2. The processing time is in seconds.  In subfigures (c) and (g), Total error = $(\hat{{\rm E}}\sum_{i=1}^3\|\hat a_i - a_i\|_2^2)^{1/2} +(\hat{{\rm E}}\sum_{i=1}^3\|\hat b_i - b_i\|_2^2)^{1/2}$. \textbf{Row 1:} $(p,d)=(100,5)$. \textbf{Row 2:} $(p,d)=(500,3)$. 
    }
    \label{fig:case2ab}
\end{figure}

It can be seen that sCCA performs well in most cases when estimating the canonical vectors of first variables $a_1,a_2$, and $a_3$.  $l_1$-CCA cannot provide a consistent estimator of the canonical vectors of first variables. nCCA does not perform well in Case 2. However, when we turn to look at the estimation of the canonical vectors of second variables $b_1,b_2$, and $b_3$, we can see that nCCA, $l_1$-CCA and sCCA cannot provide a consistent estimator in most cases. This indicates that these methods are not appropriate when the dimensions of $x$ and $y$ are quite different, because these methods do not utilize the low dimensional structure of $y$. E-CCA works well on the estimation of $b_1,b_2$, and $b_3$, and has the smallest total prediction error among all methods in most cases. We can also see the total prediction error of E-CCA increases as $\sigma^2$ increases, which is natural because the accuracy of the estimation of coefficients using \eqref{regumodel} is influenced by the variance $\sigma^2$. As for the computation time, it can be seen that only the naive nCCA can be faster than our algorithm when $p$ is also relatively small. Nevertheless, as we have explained before, nCCA does not provide a sparse canonical vector, which is not desired in many cases. In addition, nCCA becomes less efficient when $p$ is large. $l_1$-CCA and sCCA have much more computation time than E-CCA. For example, E-CCA only takes about $7\%$ to $15\%$ time of $l_1$-CCA and sCCA when $p=100$. E-CCA is more efficient if the difference between $p$ and $d$ is large. In Figure 1 (k)(l), it can be seen that our method achieves a smaller total error than nCCA and $l_1$-CCA, and a larger error than sCCA. However, in order to achieve this smaller error, sCCA takes 10 times as much computation time as E-CCA. For one iteration (that is, one replicate), nCCA takes about 370 seconds, $l_1$-CCA takes about 85 seconds, sCCA takes about 210 seconds, while E-CCA only takes around 20 seconds. It takes about 1.5 hour for sCCA to finish 25 replicates in the simulation, while E-CCA only takes 8 minutes. This implies that E-CCA is much more efficient than $l_1$-CCA and sCCA. Note that here we do not apply parallel computing to E-CCA in these numeric examples. If parallel computing is available, E-CCA can be more efficient.

\section{Real data examples}\label{seccasestudy}

\subsection{Analysis of GTEx thyroid histology images}\label{seccasestudy1}

The GTEx project offers an opportunity to explore the relationship between imaging and gene expression, while also considering the effect of a clinically-relevant trait. We obtained the original GTEx thyroid histology images (see Figure \ref{fig:realdata2} for example) from the Biospecimen Research Database (\url{https://brd.nci.nih.gov/brd/image-search/searchhome}). These image files are in Aperio SVS format, a single-file pyramidal tiled TIFF. The RBioFormats R package (\url{https://git- hub.com/aoles/RBioFormats}), which interfaces the OME Bio-Formats Java library (\url{https://www.openmicroscopy.org/bio-formats}), was used to convert the files to JPEG format \cite{paulzhou2020}.  These images were further processed using the Bioconductor package EBImage \cite{pau2010ebimage}. Following the method proposed by \cite{barry2018histopathological} to segment individual tissue pieces, the average intensity across color channels was calculated, and adaptive thresholding was performed to distinguish tissue from background. A total of 108 independent Haralick image features were extracted from each tissue piece by calculating 13 base Haralick features for each of the three RGB color channels and across three Haralick scales by sampling every 1, 10, or 100 pixels \cite{paulzhou2020}. The features were log2-transformed and normalized to ensure feature comparability across samples. 
\begin{figure}[!ht]
    \centering
    \includegraphics[width=6.5in]{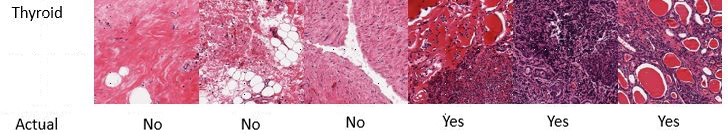}
    \caption{Examples of Hashimoto's thyroiditis negative/positive GTEx samples.}
    \label{fig:realdata2}
\end{figure}

To obtain a trait with clinical relevance, we also downloaded the thyroiditis Hashimoto pathology data from the GTEx Portal (\url{https://www.gtexportal.org/home/histologyPage}). Sex and age are also provided. The phenotype Hashimoto's thyroiditis was the presence (coded 1) or absence (0) of a particular pathology.

For thyroid, a subset of these subjects (570) also had gene expression data from RNA-Seq. The v8 release is available on the GTEx Portal (\url{https://www.gtexportal.org/home/datasets}). Gene read counts were normalized between samples using TMM, genes were selected based on expression thresholds explained in their paper \cite{aguet2019gtex}. In this example, we collect both processed image feature matrix $(Y)$ and gene expression data $(X)$ on these overlapped 570 subjects, with 37 cases of Hashimoto's thyroiditis. 

We applied eigenvector-based sparse CCA, other three methods in Section \ref{secnumeric}, and rgCCA to study the correlation between the processed image feature matrix and gene expression data. Since eigenvector-based sparse CCA works well for the low dimensional data $Y$, we first applied principal component analysis (PCA) to reduce the dimension of $Y$. We used the first half of principal components, denoted by $U=(u_1,...,u_d)$, and performed eigenvector-based sparse canonical correlation analysis on the transformed variables $UY$ and $X$. We randomly split the data into a training set $(Y^{train},X^{train})$ and test set $(Y^{test},X^{test})$ with ratio $5:1$ for 500 times. For each run, we obtained the first pair of estimated canonical vectors $\hat a_1$ and $\hat b_1$ by the methods mentioned in Section \ref{secnumeric}, and compared the correlations on the test set ${\rm Corr}((Y^{test})^\top \hat a_1,(X^{test})^\top \hat b_1)$. We found that nCCA is very sensitive to the value of the nugget parameter, so we did not include it in the comparison. For E-CCA, we further randomly split the the training set $(Y^{train},X^{train})$ to new training set $(Y^{newtrain},X^{newtrain})$ and validation set $(Y^{val},X^{val})$ with ratio 44:5, and select the parameter $\lambda_1$ that maximizes the canonical correlation on the validation set. The results obtained by E-CCA, $l_1$-CCA, sCCA and rgCCA are shown in Figure \ref{fig:real2results}.

\begin{figure}[!ht]
    \centering
    \includegraphics[width=0.5\textwidth]{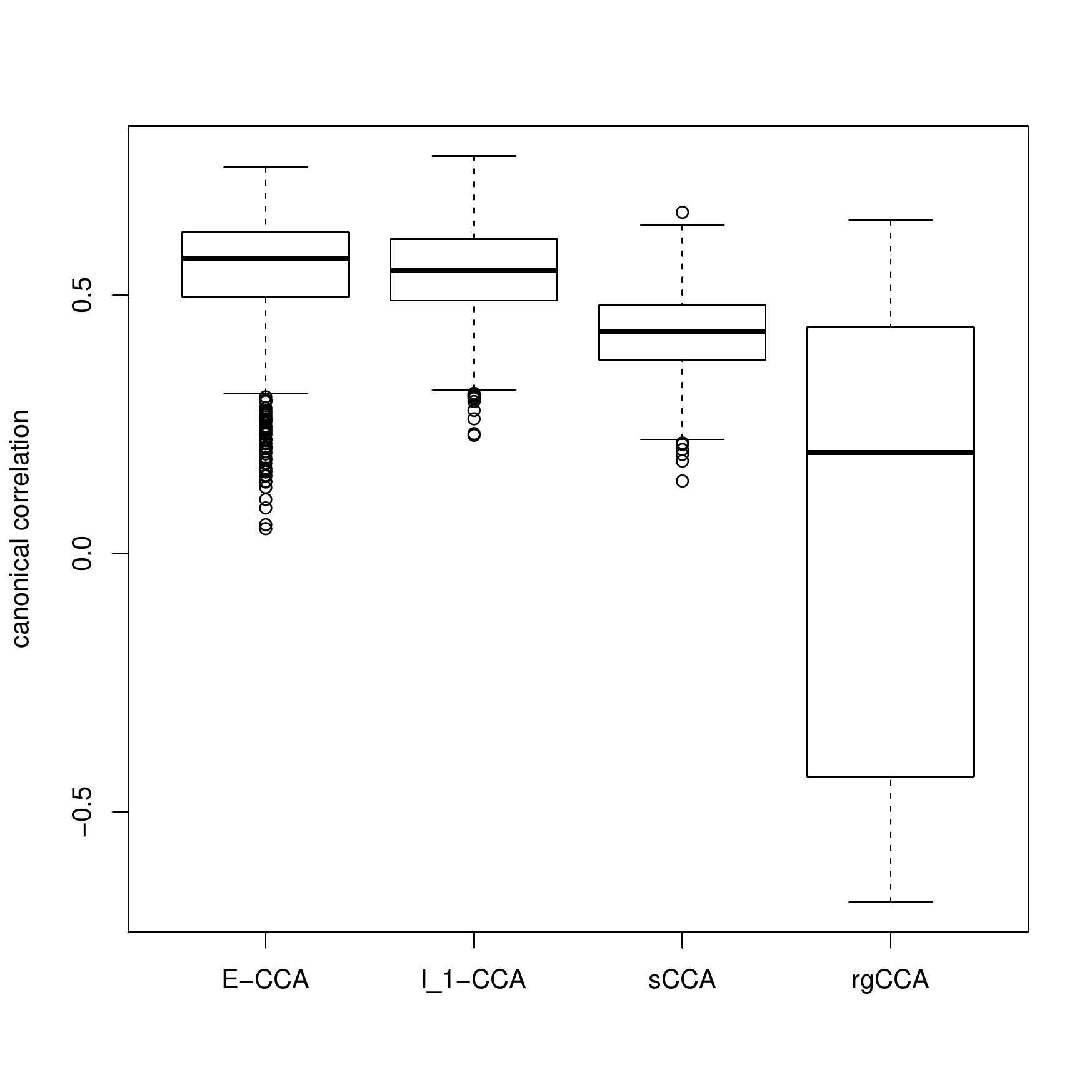}
    \caption{Boxplots of the GTEx thyroid cross-validated canonical correlations of processed image feature matrix and gene expression data.}
    \label{fig:real2results}
\end{figure}

From Figure \ref{fig:real2results}, we can see that rgCCA does not provide reliable estimation of the canonical variables in this study, and sCCA provides a smaller correlation between the estimated canonical variables on the test data. E-CCA is slightly better than $l_1$-CCA. In some cases, E-CCA provides relatively small correlations between the estimated canonical variables on the test data. This may be because we fixed the number of principal components. Therefore, a further study on adaptively choosing number of principal components is needed.

In order to explore the effect of a clinical phenotype on E-CCA, we performed E-CCA separately on the set of individuals without Hashimoto's thyroiditis (median E-CCA of 0.578, nearly the same as for the full dataset), and for individuals with Hashimoto's thyroiditis (median E-CCA of 0.375). The dramatic change in estimated correlation by case/control status provides a window into potential additional uses of sparse CCA methods, e.g. by using the contrast in canonical correlation by phenotype to improve omics-based phenotype prediction.

\subsection{Analysis of SNP genotype data and RNA-seq gene expression data}
We tested eigenvector-based sparse CCA, other three methods in Section \ref{secnumeric}, and rgCCA using data from the GTEx V8 release (\url{https://www.gtexportal.org/home/}), including genotype data from a selected set of SNPs and RNA-seq gene expression data from $n=208$ liver tissue samples. SNPs were coded from $0$-$2$ as the number of minor alleles, and RNA-seq expression data were normalized using simple scaling.  Among the problems that arise in such datasets is the powerful mapping of sets of SNPs that are collectively associated with expression traits. Here we use CCA to demonstrate a proof of principle for finding such collective association in a biological pathway. We selected SNP sets for each gene by grouping SNPs located within 5kb of a gene’s transcription start site (TSS).  Then we grouped genes into gene sets based on the canonical pathways listed in the Molecular Signatures Database (MSigDB) v7.0 (\url{https://www.gsea-msigdb.org/gsea/msigdb/index.jsp}).  These gene sets are canonical representations of a biological process compiled by domain experts.  We analyzed 2,072 pathways with a size between 5-200 genes. So for each pathway, we have a genotype matrix $X$ with $p$ SNPs and $n$ samples and expression matrix $Y$ with $d$ genes and $n$ samples, with $p>n>d$.  For each method, a permutation-based $p$-value was calculated after performing 1,000 permutations.

Here we focus on two pathways of potential biological relevance in the liver, with strong eQTL evidence.  One is the keratinization pathway (\url{https://www.reactome.org/content/detail/R-HSA-6805567}), which included 72 genes and 3,005 SNPs from our dataset. The $p$-values were $0.006, 0.10, 0.68, 0.14$, and $0.81$ for E-CCA, nCCA, sCCA, $l_1$-CCA, and rgCCA respectively.  Three genes,  KRT13, KRT4, and KRT5, showed values of $|\hat{a}|$ that are much larger than those of the remaining genes, while the values of $\hat{b}$ are spread more uniformly across the SNPs. Keratins are important for the mechanical stability and integrity of epithelial cells and liver tissues.  They play a role in protecting liver cells from apoptosis, against stress, and from injury, and defects may predispose to liver diseases \cite{moll2008human}.  
Figure \ref{fig:realdata3} shows heatmap plots and Manhattan-style line plots showing the absolute values of $\hat{a}$ and $\hat{b}$, in which SNPs (rows) and genes (columns) are ordered by genomic position. 

\begin{figure}[!h]
    \centering
    \includegraphics[width=0.7\textwidth, height = 3.5in]{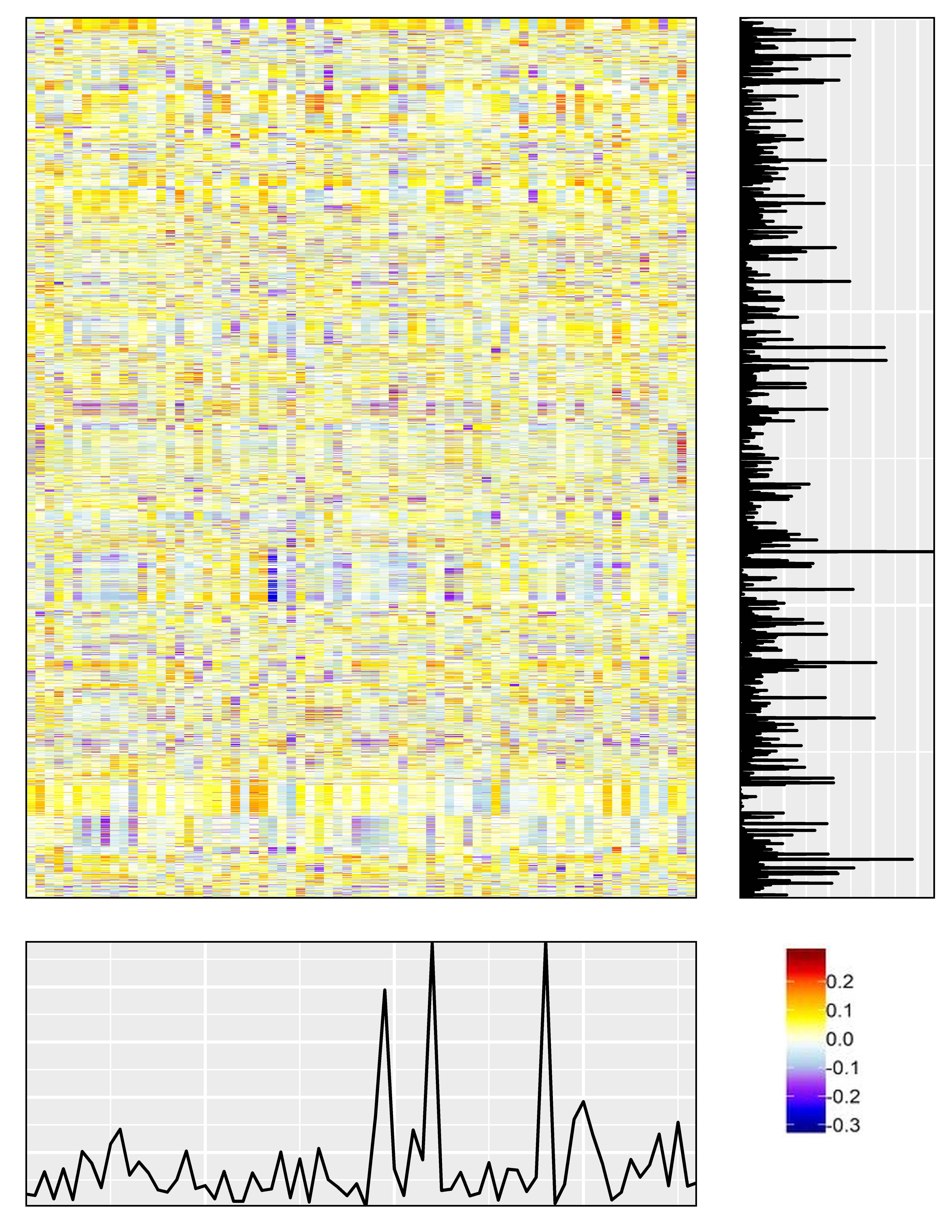}
    \caption{The heatmap plots and Manhattan-style line plots showing the absolute values of $\hat{a}$ and $\hat{b}$, in which SNPs (rows) and genes (columns) are ordered by genomic position.}
    \label{fig:realdata3}
\end{figure}

A smaller pathway is the synthesis of ketone bodies (\url{https://www.reactome.org/content/detail/R-HSA-77111}), with 8 genes and 265 SNPs.  The $p$-values were 0.003, 0.16, 0.52, 0.35, and 0.66 for E-CCA, nCCA, sCCA, $l_1$-CCA, and rgCCA respectively.  Again three genes, ACSS3, BDH2, and BDH1 showed $|\hat{a}|$ of greater magnitude than the others. Ketone bodies are metabolites derived from fatty and amino acids and are mainly produced in the liver.  Both in the biosynthesis of ketone bodies (ketogenesis) and in ketone body utilization (ketolysis), inborn errors of metabolism are known, resulting in various metabolic diseases \cite{sass2012inborn}. 
\subsection{Analysis of human gut microbiome data }

We applied eigenvector-based sparse CCA to a microbiome study conducted at University of Pennsylvania \cite{chen2012structure}. The study profiled 16S rRNA in the human gut and measured components of nutrient intake using a food frequency questionnaire for 99 healthy people. Microbiome OTUs were consolidated at the genus level, with $d=40$ relatively common genera considered (i.e., $Y$ was a $40\times 99$ OTU abundance matrix). Following \cite{chen2012structure}, the daily intake for $p=214$ nutrients was calculated for each person, and regressed upon energy consumption, and the residuals used as a processed nutrient intake $214\times 99$ matrix $X$.

ssCCA \cite{chen2012structure} identified 24 nutrients and 14 genera whose linear combinations gave a cross-validated canonical correlation of 0.42 between gut bacterial abundance and nutrients. Eigenvector-based sparse CCA reached a canonical correlation of 0.60. To test the canonical correlation between gut bacterial abundance and nutrients, we permuted columns of the nutrient matrix 1,000 times, and calculated the canonical correlation between them using the four CCA methods described in Section \ref{secnumeric} and rgCCA \cite{tenenhaus2011regularized,rgccapackage}. These correlations constitute a null distribution for each method, to which we compared the respective observed canonical correlation. The E-CCA method was significant at the 0.05 level, with $p$-value 0.025. Of the remaining methods, only sCCA and nCCA (with a large nugget parameter) also provided significant $p$-values. However, results from nCCA appeared highly sensitive to the nugget parameter, and range of choices for nugget parameters produced nonsignificant $p$-values. $l_1$-CCA and rgCCA did not appear to provide insightful results for this dataset. The heatmap of the covariance matrix ($XY^\top $) is shown in Figure \ref{fig:realdata}. The marginal plots of absolute values of $\hat a_1$ and $\hat b_1$ provide insights for the relative weighting of OTUs and nutritional components toward the overall canonical correlation, i.e. larger values correspond to greater weight for that component.  

\begin{figure}[!ht]
    \centering
    \includegraphics[width=0.7\textwidth]{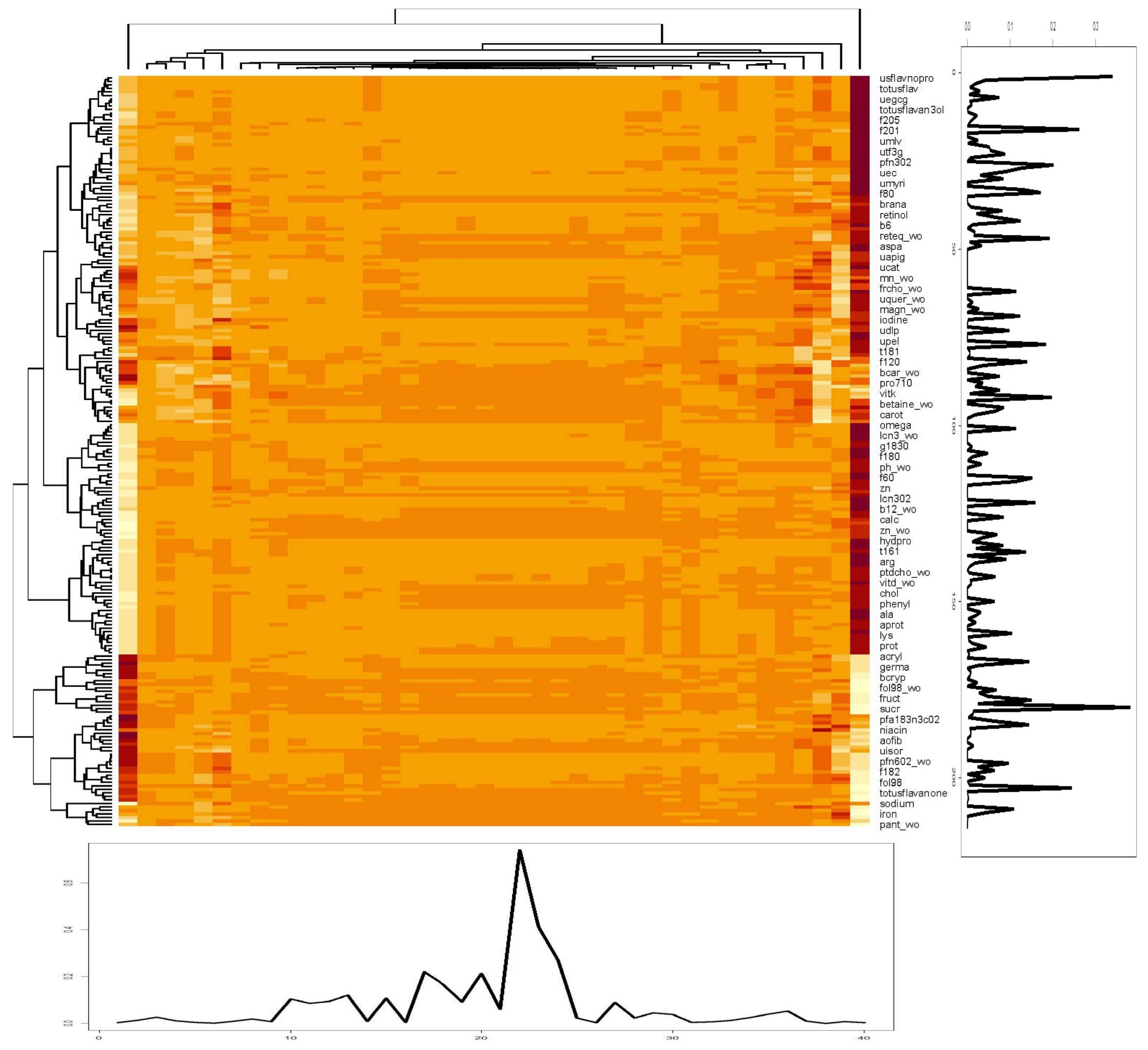}
    \caption{The heatmap of the covariance matrix between gut bacterial abundance and nutrients.}
    \label{fig:realdata}
\end{figure}

\section{Conclusions and discussion}\label{secconclu}
In this work, we proposed eigenvector-based sparse canonical correlation analysis, which can be applied to data where the dimensions of two variables are very different. Our method can provide $K$ pairs of canonical vectors simultaneously for $K>1$, and can be implemented by an efficient algorithm based on Lasso. The implementation is straightforward. The computation time is small, and can be further significantly decreased if parallel computing is available. We show the consistency of the estimated canonical pairs in the case that the dimension of one variable can increase as an exponential rate in comparison to the sample size. The dimension of the other variable should be smaller than the sample size. We also present numerical studies to show the efficiency of our algorithm and real data analysis to validate our methodology. 

As pointed by a reviewer, CCA and partial least squares regression (PLS) are very relevant. Both methods are used to find relationship between two random variables, while CCA maximizes the correlation and PLS maximizes the covariance. The relationship and comparison between CCA and PLS have been studied in several literature \cite{sun2009equivalence,grellmann2015comparison,cserhati1998comparison}, while the relationship between sparse CCA and PLS is not clear. Furthermore, based on the similarity of CCA and PLS, we believe that our methods can be also generalized or combined with other PLS methods, like sparse Multi-Block Partial Least Squares \cite{li2012identifying}. These generalization and combination will be studied in the future. 

We consider the \textit{unbalanced} case, where the dimensions of two variables are much different. In practice, there are also some cases that are \textit{balanced}, i.e., the dimensions of two variables are comparable but both much larger than the sample size. One straightforward potential extension is to apply two sets of Lasso problems. Specifically, we might set $Y$ as the dependent variable and $X$ as the independent variable in the first set of Lasso problems, and set $X$ as the dependent variable and $Y$ as the independent variable in the second set of Lasso problems. However, the number of Lasso optimizations is very large, which leads to the inefficiency of the algorithm. One possible remedy is to apply principal components analysis to reduce the dimension of one variable. This approach has shown its potential in the real data analysis. However, the theoretical justification is currently lacking. Eigenvector-based sparse canonical correlation analysis also shows great potential for prediction problems, since it can provides $K$ pairs of canonical vectors efficiently and simultaneously for $K>1$. These possible extensions of eigenvector-based sparse canonical correlation analysis to the balanced case will be pursued in the future work.

\section{Proof of Theorem \ref{thmonelasso}}\label{apppfonelasso}
Let $A=(a_{ij})_{ij}\in \RR^{m\times n}$ and $\|A\|_p$ be the $p$-norm of a matrix $A$. With an abuse of notation, we use $\|\cdot\|_p$ for both $p$-norm of a matrix and $l_p$ norm of a vector. Let $\|A\|_{\max} = \max |a_{ij}|$. In the special cases $p=1,2,\infty$,
\begin{align*}
    \|A\|_1  & = \max_{1\leq j \leq n}\sum_{i=1}^m|a_{ij}|,\quad \|A\|_2 = \sigma_{\max}(A),\quad
    \|A\|_\infty  = \max_{1\leq i \leq m}\sum_{j=1}^n|a_{ij}|,
\end{align*}
where $\sigma_{\max}(A)$ is the largest singular value of matrix $A$. These matrix norms are equivalent, which are implied by the following inequality,
\begin{align*}
    \frac{1}{\sqrt{n}}\|A\|_\infty \leq \|A\|_2 \leq \sqrt{m}\|A\|_\infty,
    \frac{1}{\sqrt{m}}\|A\|_1  \leq \|A\|_2 \leq \sqrt{n}\|A\|_1,
    \|A\|_{\max} \leq \|A\|_2 \leq \sqrt{mn}\|A\|_{\max}.
\end{align*}

We first present some lemmas used in this proof. Lemma \ref{lemb3ning} states the consistency of $\hat \beta_k$ obtained by \eqref{sapregumodel}. Lemma \ref{lembernstein} is the Bernstein inequality. Lemma \ref{lemconcenEqq} is the concentration inequality for sub-Gaussian random vectors. Lemma \ref{lemmaLinearSystems} describes the accuracy of solving linear systems; see Theorem 2.7.3 in \cite{van1983matrix}. Lemma \ref{lemperteigenvec} states the eigenvector sensitivity for a pertubation of a matrix, which is a slight recasting of Theorem 4.11 in \cite{stewart1973error}; also see Corollary 7.2.6 in \cite{van1983matrix}.

\begin{lemma}\label{lemb3ning}
Suppose the conditions of Theorem 2.1 hold. Then with probability at least $1-C_1 p^{-1}d$,
\begin{align*}
    \max_{1\leq k \leq d}\|\hat \beta_k - \beta_k^*\|_1\leq C_2s^*\sqrt{\frac{\log p }{n}}, \max_{1\leq k \leq d}\|\hat \beta_k - \beta_k^*\|_2\leq C_3\sqrt{\frac{s^*\log p }{n}}.
\end{align*}
In addition, $\max_{1\leq k \leq d}(\hat \beta_k - \beta_k^*)^\top H_X(\hat \beta_k - \beta_k^*)\lesssim s^*\log p/n$, where $H_X= n^{-1}\sum_{i=1}^n X_iX_i^\top $.
\end{lemma}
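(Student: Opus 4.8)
The plan is to treat Lemma~\ref{lemb3ning} as a uniform (over $k$) Lasso oracle inequality applied to the $d$ independent regression problems in \eqref{sapregumodel}. For each fixed $k$, the $k$-th row $\beta_k^*$ of $B_*$ is the target of a standard Lasso regression of the $k$-th coordinate of $Y$ on $X$, with noise given by the $k$-th coordinate of $\epsilon_y$ (call the corresponding $n$-vector $\epsilon^{(k)}$); I would invoke a standard Lasso consistency bound (for instance the one in \cite{ning2014general}) after verifying its two hypotheses under Assumptions~\ref{onelassoassumcov}--\ref{subGassum}, and then close the proof with a union bound over $k\in\{1,\dots,d\}$.

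The first ingredient is a restricted eigenvalue (RE) condition for the sample Gram matrix $H_X = n^{-1}\sum_i X_iX_i^\top$. At the population level, Assumption~\ref{onelassoassumcov} gives $\lambda_{\min}(\Sigma_{xx})\ge K_1>0$, so $\Sigma_{xx}$ satisfies RE trivially. To transfer this to $H_X$, I would use the sub-Gaussianity of $x$ (Assumption~\ref{subGassum}) together with the concentration inequality in Lemma~\ref{lemconcenEqq} (or a coordinatewise application of Bernstein's inequality, Lemma~\ref{lembernstein}) to show $\|H_X-\Sigma_{xx}\|_{\max}\lesssim \sqrt{\log p/n}$ on a high-probability event. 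On the $\ell_1$-cone that arises in the Lasso analysis one has $\|u\|_1^2\lesssim s^*\|u\|_2^2$, so $|u^\top(H_X-\Sigma_{xx})u|\le \|H_X-\Sigma_{xx}\|_{\max}\|u\|_1^2 \lesssim s^*\sqrt{\log p/n}\,\|u\|_2^2$; the assumption $n^{-1/2}s^*\log p=o(1)$ then forces this perturbation to be negligible, so $H_X$ inherits RE with a constant of order $K_1$. The second ingredient is control of the gradient term $\|n^{-1}X^\top\epsilon^{(k)}\|_\infty$: since $\epsilon_y$ is independent of $x$ and both are sub-Gaussian, each coordinate $n^{-1}\sum_i X_{il}\epsilon^{(k)}_i$ is a centered average of sub-exponential products, so Lemma~\ref{lembernstein} with a union bound over $l\in\{1,\dots,p\}$ gives $\|n^{-1}X^\top\epsilon^{(k)}\|_\infty\lesssim\sqrt{\log p/n}$ with probability at least $1-Cp^{-1}$. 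This is precisely what legitimizes the choice $\lambda_1\asymp\sqrt{n\log p}$, the factor $n$ rather than $1/n$ reflecting the unnormalized loss in \eqref{sapregumodel}.

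With RE and the noise bound in hand, the standard basic-inequality argument for the Lasso yields, on the intersection of the two events, $\|\hat\beta_k-\beta_k^*\|_2\lesssim\sqrt{s^*}\,\lambda_1/n\asymp\sqrt{s^*\log p/n}$, $\|\hat\beta_k-\beta_k^*\|_1\lesssim s^*\lambda_1/n\asymp s^*\sqrt{\log p/n}$, and the prediction-error bound $(\hat\beta_k-\beta_k^*)^\top H_X(\hat\beta_k-\beta_k^*)\lesssim s^*\lambda_1^2/n^2\asymp s^*\log p/n$. Finally, since each coordinate $k$ contributes a failure probability of order $p^{-1}$ from the RE and Bernstein steps, a union bound over the $d$ regressions delivers all three bounds simultaneously with probability at least $1-C_1dp^{-1}$, as claimed.

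The main obstacle is the RE transfer from $\Sigma_{xx}$ to $H_X$: one must restrict attention to the approximately-sparse cone, use the \emph{coordinatewise} (rather than vector) sub-Gaussian definition carefully to obtain the max-norm concentration of $H_X$, and track constants so that the per-coordinate failure probability stays at $O(1/p)$; this is exactly where the rate condition $n^{-1/2}s^*\log p=o(1)$ is consumed, and it is what allows the union bound over the $d$ regressions to produce the stated $O(d/p)$ exceptional probability.
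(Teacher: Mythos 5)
Your proposal is correct and takes essentially the same route as the paper: the paper's proof simply invokes Lemma B.3 of \cite{ning2014general} for each fixed $k$, obtaining all three bounds ($\ell_1$, $\ell_2$, and the $H_X$-prediction error) with probability at least $1-C_1p^{-1}$, and then concludes by the union bound over the $d$ regressions, exactly as you do. The only difference is that you spell out the verification of the cited lemma's hypotheses (the restricted-eigenvalue transfer from $\Sigma_{xx}$ to $H_X$ and the gradient bound justifying $\lambda_1\asymp\sqrt{n\log p}$), which the paper leaves implicit in the citation.
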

\begin{remark}
In \cite{ning2014general}, the model is 
\begin{align*}
  \min_{\beta} \frac{1}{2n}\sum_{i=1}^n  (y_{i} - \beta^\top X_i)^2 + \lambda_1 \|\beta\|_1
\end{align*}
instead of \eqref{sapregumodel}. This is the reason that $\lambda_1\asymp \sqrt{n\log p}$ in Theorem \ref{thmonelasso}, not $\sqrt{\log p/n}$ as in \cite{ning2014general}.
\end{remark}
\begin{proof}[Proof of Lemma \ref{lemb3ning}]
By Lemma B.3 of \cite{ning2014general}, we have for a fixed $k$, with probability at least $1-C_1 p^{-1}$, 
\begin{align*}
    \|\hat \beta_k - \beta_k^*\|_1\leq C_2s^*\sqrt{\frac{\log p }{n}},
    \|\hat \beta_k - \beta_k^*\|_2\leq C_3\sqrt{\frac{s^*\log p }{n}},
    (\hat \beta_k - \beta_k^*)^\top H_X(\hat \beta_k - \beta_k^*)\lesssim \frac{s^*\log p}{n}.
\end{align*}
Then the results follow the union bound inequality.
\end{proof}

\begin{lemma}\label{lembernstein}
Let $X_i$'s be independent mean zero sub-Gaussian variables. There exists a constant $C>0$ such that for any $t>0$, 
\begin{align*}
    {\rm Pr}\left( \frac{1}{n} \left|\sum_{i=1}^n X_i \right|\geq t\right) \leq 2\exp(-Cnt^2).
\end{align*}
\end{lemma}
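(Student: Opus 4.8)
The plan is to prove this by the standard Chernoff (exponential Markov) bounding method, the only nontrivial ingredient being the passage from the Orlicz-type definition of sub-Gaussianity to a clean moment generating function (MGF) bound. Throughout I take the sub-Gaussian parameters $K$ and $\sigma$ to be common to all the $X_i$, which is the intended reading in our applications where the $X_i$ are coordinates of a fixed sub-Gaussian vector; if they differ, one replaces them by the worst case, which is assumed uniformly bounded.

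First I would show that for each mean-zero $X_i$ the definition $K^2({\rm E} e^{X_i^2/K^2} - 1) \le \sigma^2$ implies an MGF bound of the form ${\rm E} e^{\lambda X_i} \le e^{c\lambda^2}$ for all $\lambda \in \RR$, where $c$ depends only on $K$ and $\sigma$. Expanding $e^{X_i^2/K^2}$ as a power series and comparing term by term with the bound $1 + \sigma^2/K^2$ yields the polynomial moment bounds ${\rm E} X_i^{2k} \le (1+\sigma^2/K^2)\,k!\,K^{2k}$. Expanding $e^{\lambda X_i}$ and using ${\rm E} X_i = 0$ to discard the linear term, I would bound the remaining moments and sum the resulting series to obtain ${\rm E} e^{\lambda X_i}\le e^{c\lambda^2}$. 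The mean-zero condition is essential here: without it, the MGF need not behave quadratically near $\lambda = 0$, and the crude estimate ${\rm E} e^{\lambda X_i}\le (1+\sigma^2/K^2)e^{K^2\lambda^2/4}$ (which follows at once from $\lambda x \le K^2\lambda^2/4 + x^2/K^2$) would contribute a term linear in $n$ that does not vanish and would destroy the bound.

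Given the per-variable MGF bound, independence gives ${\rm E} e^{\lambda\sum_{i=1}^n X_i} = \prod_{i=1}^n {\rm E} e^{\lambda X_i} \le e^{cn\lambda^2}$. Markov's inequality applied to the exponential then yields, for any $\lambda > 0$,
\begin{align*}
{\rm Pr}\left(\sum_{i=1}^n X_i \ge nt\right) \le e^{-\lambda nt}\,{\rm E} e^{\lambda\sum_{i=1}^n X_i} \le e^{-\lambda nt + cn\lambda^2}.
\end{align*}
Optimizing over $\lambda$ with the choice $\lambda = t/(2c)$ gives ${\rm Pr}\left(\sum_{i=1}^n X_i \ge nt\right) \le e^{-nt^2/(4c)}$. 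Applying the identical argument to $-X_i$ (which is again mean-zero sub-Gaussian with the same parameters) and combining the two one-sided bounds by a union bound yields ${\rm Pr}\left(\frac1n\left|\sum_{i=1}^n X_i\right| \ge t\right) \le 2e^{-nt^2/(4c)}$, which is the claim with $C = 1/(4c)$.

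The main obstacle is the first step: extracting a genuinely quadratic MGF bound from the square-exponential Orlicz condition while correctly exploiting the centering. The bookkeeping for the odd moments and the verification that the series converges to something of order $c\lambda^2$, uniformly in $\lambda$, is the only place where care is needed; everything downstream is the routine Chernoff argument together with a union bound. Since this is a standard equivalence among sub-Gaussian characterizations, it may be invoked from the high-dimensional probability literature rather than rederived if a reference is preferred.
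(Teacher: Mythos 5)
Your proposal is correct, but it does not parallel anything in the paper, because the paper gives no proof of this lemma at all: it simply names the statement ``the Bernstein inequality'' and treats it as a standard fact from the concentration literature (on par with its citation-only treatment of Lemmas 4 and 5). Your Chernoff/MGF derivation is therefore a genuinely fuller treatment, and it supplies exactly the step the paper leaves implicit: that the paper's particular Orlicz-form definition of sub-Gaussianity, $K^2({\rm E}\,e^{X_i^2/K^2}-1)\leq \sigma^2$, \emph{together with} the mean-zero hypothesis, yields a uniformly quadratic bound ${\rm E}\,e^{\lambda X_i}\leq e^{c\lambda^2}$ for all $\lambda\in\RR$; from there the product-of-MGFs, Markov, optimization at $\lambda = t/(2c)$, and two-sided union bound are routine and your constants check out ($C = 1/(4c)$). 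Your remark about why centering is indispensable is also on point: the crude bound ${\rm E}\,e^{\lambda X_i}\leq (1+\sigma^2/K^2)e^{K^2\lambda^2/4}$ used alone would put an additive $n\log(1+\sigma^2/K^2)$ in the exponent and void the claim for small $t$ (though that crude bound is still usable for $|\lambda|\gtrsim 1/K$, where the multiplicative constant can be absorbed into $e^{\sigma^2\lambda^2}$, giving an alternative case-split proof). What the paper's approach buys is brevity; what yours buys is self-containedness and a check that the lemma really follows from the definition the paper actually adopts --- worth having, since the stated inequality is in truth a Hoeffding-type sub-Gaussian bound rather than a Bernstein inequality in the sub-exponential sense, so a reader cannot match it to a textbook ``Bernstein'' statement verbatim. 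Your assumption of uniform sub-Gaussian parameters across $i$ is the intended reading and is harmless.
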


\begin{lemma}\label{lemconcenEqq}
Let $Q_i\in \mathbb{R}^d$ be sub-Gaussian random vectors for $i\in\{1,...,n\}$. We have 
\begin{align*}
    {\rm Pr}(\|H_Q - {\rm E}(QQ^\top )\|_{\max}\geq t)\leq 2q^2\exp(-C_1 nt^2),
\end{align*}
for some constants $C_1,C_2>0$, where $H_Q= n^{-1}\sum_{i=1}^n Q_iQ_i^\top $. 
\end{lemma}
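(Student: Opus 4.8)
The plan is to reduce the matrix max-norm deviation to a union bound over the entries of $H_Q - {\rm E}(QQ^\top)$, to control each entry by a Bernstein-type tail bound, and to exploit the fact that products of sub-Gaussian coordinates are sub-exponential. Writing $Q_i = (Q_{i1},\dots,Q_{iq})^\top$, the $(j,k)$ entry of the deviation is an average of independent, mean-zero summands,
\begin{align*}
    (H_Q - {\rm E}(QQ^\top))_{jk} = \frac{1}{n}\sum_{i=1}^n Z_i^{(jk)}, \qquad Z_i^{(jk)} := Q_{ij}Q_{ik} - {\rm E}(Q_{ij}Q_{ik}).
\end{align*}
So the event $\{\|H_Q - {\rm E}(QQ^\top)\|_{\max}\ge t\}$ is contained in the union over $j,k$ of the events $\{|n^{-1}\sum_i Z_i^{(jk)}|\ge t\}$, of which there are at most $q^2$.

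The key structural step is to show each $Z_i^{(jk)}$ is sub-exponential with parameters controlled uniformly in $j,k$. Here I would use the paper's own definition of sub-Gaussianity: the bound $K^2({\rm E}e^{Q_{ij}^2/K^2}-1)\le\sigma^2$ is exactly a control on the exponential moment ${\rm E}\exp(Q_{ij}^2/K^2)$, i.e. it states that each squared coordinate $Q_{ij}^2$ is sub-exponential. Combining this with the elementary inequality $|Q_{ij}Q_{ik}|\le (Q_{ij}^2+Q_{ik}^2)/2$ yields a uniform sub-exponential moment bound for the products $Q_{ij}Q_{ik}$, and hence for the centered variables $Z_i^{(jk)}$, with constants depending only on $K$ and $\sigma$ and not on $j,k$. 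Applying a Bernstein inequality for independent centered sub-exponential summands to each fixed entry then gives a tail of the form
\begin{align*}
    {\rm Pr}\left(\left|\frac{1}{n}\sum_{i=1}^n Z_i^{(jk)}\right|\ge t\right) \le 2\exp\left(-Cn\min\{t^2,t\}\right).
\end{align*}
In the regime relevant to Theorem \ref{thmonelasso}, where this lemma is applied with $t\asymp\sqrt{\log p/n}=o(1)$, the quadratic term dominates, so $\min\{t^2,t\}=t^2$ and the bound reads $2\exp(-C_1 n t^2)$. A union bound over the (at most) $q^2$ entries then produces the claimed
\begin{align*}
    {\rm Pr}\left(\|H_Q - {\rm E}(QQ^\top)\|_{\max}\ge t\right) \le 2q^2\exp(-C_1 n t^2).
\end{align*}

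The main obstacle is that Lemma \ref{lembernstein}, as stated, only covers sub-Gaussian summands, whereas the products $Z_i^{(jk)}$ are genuinely sub-exponential rather than sub-Gaussian; the delicate point is therefore to justify the sub-exponential Bernstein estimate and to verify that in the small-$t$ regime the $\min\{t^2,t\}$ truly reduces to $t^2$, so that the Gaussian-type tail $\exp(-C_1nt^2)$ is recovered. The secondary technical care is in establishing the sub-exponential moment bounds for $Q_{ij}Q_{ik}$ \emph{uniformly} over all index pairs, so that the constant $C_1$ can be taken independent of $j,k$ before the union bound is applied.
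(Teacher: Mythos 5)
Your proposal is correct and, at the structural level, it is the same argument the paper intends: a union bound over the at most $q^2$ entries of $H_Q-{\rm E}(QQ^\top)$, followed by a Bernstein-type tail bound on each centered entry. The difference lies in the second step, and it is in your favor. The paper's entire proof reads ``The results follow the union bound inequality and Lemma \ref{lembernstein},'' i.e.\ it applies its sub-Gaussian Bernstein inequality directly to the summands $Q_{ij}Q_{ik}-{\rm E}(Q_{ij}Q_{ik})$. As you point out, these summands are products of sub-Gaussian coordinates and hence are only sub-exponential in general, so Lemma \ref{lembernstein} does not literally apply; the paper silently passes over this. Your route --- using $|Q_{ij}Q_{ik}|\le (Q_{ij}^2+Q_{ik}^2)/2$ together with the paper's exponential-moment definition of sub-Gaussianity to get sub-exponential control uniform in $(j,k)$, then invoking the sub-exponential Bernstein inequality with tail $2\exp(-Cn\min\{t^2,t\})$ --- is the correct repair. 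The price, which you also correctly flag, is that the advertised Gaussian-type tail $2q^2\exp(-C_1nt^2)$ is recovered only for $t\lesssim 1$; for large $t$ the exponent must degrade to linear in $t$, so the lemma as stated for all $t>0$ really needs either the $\min\{t^2,t\}$ exponent or a restriction on $t$. This is immaterial downstream, since Theorem \ref{thmonelasso} only invokes the lemma with $t\asymp\sqrt{\log(d+p)/n}=o(1)$. In short: same skeleton as the paper, but your version makes rigorous a step that the paper's one-line proof glosses over.
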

\begin{proof}
The results follow the union bound inequality and Lemma \ref{lembernstein}.
\end{proof}

\begin{lemma}\label{lemmaLinearSystems}
Let $A,\tilde{A}\in \RR^{d\times d}$, and $b,\tilde{b}\in \RR^d$. Suppose ${Ax}={b}$ and $\tilde{{A}}\tilde{{x}}=\tilde{{b}}$ with $\|\tilde{{A}}-{A}\|_2\leq \delta \|{A}\|_2$, $\|\tilde{{b}}-{b}\|_2\leq \delta \|{b}\|_2$, and $\kappa({A})=r/\delta<1/\delta$ for some $\delta>0$. Then, $\tilde{{A}}$ is non-singular,
\begin{align*}
\frac{\|\tilde{{x}}\|_2}{\|{x}\|_2}\leq \frac{1+r}{1-r},
\frac{\|\tilde{{x}}-{x}\|_2}{\|{x}\|_2}\leq \frac{2\delta}{1-r}\kappa({A}),
\end{align*}
where $\kappa({A})=\|{A}\|_2\|{A}^{-1}\|_2$.
\end{lemma}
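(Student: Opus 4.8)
The plan is to treat this as a standard first-order perturbation analysis of a linear system, organized around the single quantity $E := \tilde{A} - A$ and the observation that the hypothesis $\kappa(A) = r/\delta$ forces $r = \delta\kappa(A) < 1$. First I would factor $\tilde{A} = A\,(I + A^{-1}E)$ and bound the perturbation in the normalized form $\|A^{-1}E\|_2 \leq \|A^{-1}\|_2\,\|E\|_2 \leq \|A^{-1}\|_2\,\delta\|A\|_2 = \delta\kappa(A) = r < 1$. Since $\|A^{-1}E\|_2 < 1$, the Neumann series shows $I + A^{-1}E$ is invertible, so $\tilde{A}$ is non-singular; the same bound gives $\|(I + A^{-1}E)^{-1}\|_2 \leq (1-r)^{-1}$ and hence $\|\tilde{A}^{-1}\|_2 \leq \|A^{-1}\|_2/(1-r)$.

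The central step is to extract an exact identity for the error $\Delta x := \tilde{x} - x$. Subtracting $Ax = b$ from $\tilde{A}\tilde{x} = \tilde{b}$ and regrouping as $\tilde{A}(\tilde{x}-x) + (\tilde{A}-A)x = \tilde{b}-b$ yields $\tilde{A}\,\Delta x = (\tilde{b}-b) - Ex$, so that $\Delta x = \tilde{A}^{-1}\big((\tilde{b}-b) - Ex\big)$. Taking norms and inserting the bound on $\|\tilde{A}^{-1}\|_2$,
\begin{align*}
\|\Delta x\|_2 \leq \|\tilde{A}^{-1}\|_2\left(\|\tilde{b}-b\|_2 + \|E\|_2\|x\|_2\right) \leq \frac{\|A^{-1}\|_2}{1-r}\left(\delta\|b\|_2 + \delta\|A\|_2\|x\|_2\right).
\end{align*}
Controlling the first summand via $\|b\|_2 = \|Ax\|_2 \leq \|A\|_2\|x\|_2$, both terms are bounded by $\delta\|A\|_2\|x\|_2$, which collapses the right-hand side to $2\delta\,\kappa(A)\,\|x\|_2/(1-r)$. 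Dividing by $\|x\|_2$ gives the second displayed inequality of the lemma.

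Finally, the bound on $\|\tilde{x}\|_2/\|x\|_2$ follows from the triangle inequality $\|\tilde{x}\|_2 \leq \|x\|_2 + \|\Delta x\|_2$ together with the identity $\delta\kappa(A) = r$, which turns the error bound into $\|\tilde{x}\|_2/\|x\|_2 \leq 1 + 2r/(1-r) = (1+r)/(1-r)$. I do not expect a genuine obstacle here, since the argument is essentially algebraic; the only points requiring care are the bookkeeping of constants—verifying that $\delta\kappa(A)$ equals exactly $r$ so the two perturbation terms combine into the clean factor $2$—and invoking the Neumann estimate $\|(I+M)^{-1}\|_2 \leq (1-\|M\|_2)^{-1}$ strictly under the already-verified condition $\|A^{-1}E\|_2 < 1$, which is precisely where the hypothesis $\kappa(A) < 1/\delta$ is used.
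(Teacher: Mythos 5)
Your proof is correct. Note that the paper itself gives no proof of this lemma at all---it is stated as a known result, cited as Theorem 2.7.3 of Golub and Van Loan's \emph{Matrix Computations}---so your argument is compared here against the textbook's standard proof rather than anything in the paper. Your route is essentially that standard argument, with one small organizational difference: Golub and Van Loan keep $A^{-1}$ throughout, first proving the bound $\|\tilde{x}\|_2/\|x\|_2\leq(1+r)/(1-r)$ and then feeding it into the identity $A(\tilde{x}-x)=(\tilde{b}-b)-E\tilde{x}$, whereas you invert $\tilde{A}$ via the Neumann bound $\|\tilde{A}^{-1}\|_2\leq\|A^{-1}\|_2/(1-r)$, obtain the error bound first from $\tilde{A}(\tilde{x}-x)=(\tilde{b}-b)-Ex$, and deduce the norm ratio afterwards by the triangle inequality. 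Both orderings use exactly the same ingredients (Neumann series for nonsingularity, $\|b\|_2\leq\|A\|_2\|x\|_2$ to merge the two perturbation terms into the factor $2$), and your bookkeeping of $r=\delta\kappa(A)<1$ is exactly right; the only implicit assumption, shared with the original statement, is $x\neq 0$ so that the ratios make sense.
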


\begin{lemma}\label{lemperteigenvec}
Let $A,E \in \RR^{d\times d}$ and $Q = [q_1,Q_2] \in \RR^{d\times d}$ is orthogonal, where $q_1\in \RR^d$. Let
\begin{align*}
    Q^\top AQ = \left[
    \begin{array}{cc}
        \lambda &  v^\top \\
        0 & T_{22}
    \end{array}\right], 
    Q^\top EQ = \left[
    \begin{array}{cc}
        \epsilon &  r^\top \\
        \delta & E_{22}
    \end{array}\right].
\end{align*}
If $\sigma = \sigma_{\min}(T_{22} - \lambda I)>0$ and 
\begin{align*}
    \|E\|_2 \left(1 + \frac{5\|v\|_2}{\sigma}\right)\leq\frac{\sigma}{5},
\end{align*}
then there exists $u\in \RR^{d-1}$ with
\begin{align*}
    \|u\|_2\leq 4\frac{\|\delta\|_2}{\sigma}
\end{align*}
such that $\tilde{q}_1 = (q_1 +Q_2u)/\sqrt{1+u^\top u}$ is a unit 2-norm eigenvector for $A+E$.
\end{lemma}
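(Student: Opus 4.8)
The plan is to reduce the assertion to the solvability of a scalar-shifted algebraic Riccati equation for the correction vector $u$, and to solve that equation by a Banach fixed-point argument on a ball of radius $4\|\delta\|_2/\sigma$.

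First I would work in the orthonormal basis furnished by $Q$. Adding the two displayed block decompositions gives
\[
    Q^\top(A+E)Q = \begin{bmatrix} \lambda+\epsilon & (v+r)^\top \\ \delta & T_{22}+E_{22}\end{bmatrix}.
\]
A vector with $Q$-coordinates proportional to $(1,u^\top)^\top$ corresponds to $q_1+Q_2u$, so the requirement that $\tilde q_1=(q_1+Q_2u)/\sqrt{1+u^\top u}$ be an eigenvector of $A+E$ is that $(Q^\top(A+E)Q)(1,u^\top)^\top$ be a scalar multiple of $(1,u^\top)^\top$; equivalently, conjugating by $W=\begin{bmatrix}1&0\\ u&I\end{bmatrix}$, that the $(2,1)$ block of $W^{-1}(Q^\top(A+E)Q)W$ vanish. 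A direct block multiplication turns this into the quadratic equation
\[
    \delta+(T_{22}+E_{22})u=\bigl(\lambda+\epsilon+(v+r)^\top u\bigr)u.
\]
Setting $T:=T_{22}-\lambda I$, $F:=E_{22}-\epsilon I$, $g:=v+r$ and $h:=\delta$, this rearranges to $Tu=-Fu+(g^\top u)u-h$. Since $\sigma_{\min}(T)=\sigma>0$, $T$ is invertible with $\|T^{-1}\|_2=1/\sigma$, so I define $\Phi(u):=T^{-1}\bigl[-Fu+(g^\top u)u-h\bigr]$; a fixed point of $\Phi$ is exactly a solution of the Riccati equation, and then $\tilde q_1$ is the desired eigenvector.

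Next I would show that $\Phi$ maps the ball $B_\rho=\{u:\|u\|_2\le\rho\}$ with $\rho=4\|\delta\|_2/\sigma$ into itself and is a contraction there. Because $Q$ is orthogonal, each block of $Q^\top EQ$ has spectral norm at most $\|E\|_2$, so $\|h\|_2\le\|E\|_2$, $|\epsilon|\le\|E\|_2$, $\|r\|_2\le\|E\|_2$, whence $\|F\|_2\le 2\|E\|_2$ and $\|g\|_2\le\|v\|_2+\|E\|_2$. The elementary estimate $\|\Phi(u)\|_2\le\sigma^{-1}\bigl(\|F\|_2\|u\|_2+\|g\|_2\|u\|_2^2+\|h\|_2\bigr)$ handles the self-map property, and the algebraic splitting $(g^\top u_1)u_1-(g^\top u_2)u_2=(g^\top u_1)(u_1-u_2)+\bigl(g^\top(u_1-u_2)\bigr)u_2$ gives the Lipschitz bound $\|\Phi(u_1)-\Phi(u_2)\|_2\le\sigma^{-1}\bigl(\|F\|_2+2\|g\|_2\rho\bigr)\|u_1-u_2\|_2$ on $B_\rho$.

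The crux, and the only genuinely delicate point, is to confirm that the stated hypothesis $\|E\|_2(1+5\|v\|_2/\sigma)\le\sigma/5$ is exactly what forces both the self-map inequality $\sigma^{-1}(\|F\|_2\rho+\|g\|_2\rho^2+\|h\|_2)\le\rho$ and the contraction inequality $\sigma^{-1}(\|F\|_2+2\|g\|_2\rho)<1$ for the precise radius $\rho=4\|\delta\|_2/\sigma$; the constants $4$ and $5$ are tuned for this purpose, and it is the quadratic term $(g^\top u)u$ that couples the admissible radius to the smallness condition, so the two inequalities cannot be decoupled. Once both hold, Banach's fixed-point theorem yields a unique $u\in B_\rho$, giving $\|u\|_2\le\rho=4\|\delta\|_2/\sigma$, and normalizing $q_1+Q_2u$ produces the required unit-norm eigenvector $\tilde q_1$. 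This is essentially the one-dimensional invariant-subspace specialization of Theorem~4.11 in \cite{stewart1973error}, whose constant bookkeeping I would invoke to discharge the final numerical verification if I preferred not to redo it.
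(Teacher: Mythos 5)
Your proposal is correct, and in substance it travels the same road as the paper: the paper offers no proof of this lemma at all, but simply cites Theorem 4.11 of \cite{stewart1973error} and Corollary 7.2.6 of \cite{van1983matrix}, and your reduction to a quadratic (Riccati) equation for $u$ solved by a contraction on the ball of radius $4\|\delta\|_2/\sigma$ is exactly the standard proof behind that citation. The block algebra is right: $(1,u^\top)^\top$ is an eigenvector of $Q^\top(A+E)Q$ (with eigenvalue $\lambda+\epsilon+(v+r)^\top u$) precisely when $\delta+(T_{22}+E_{22})u=(\lambda+\epsilon+(v+r)^\top u)u$, and orthogonality of $Q$ gives both the block bounds $\|\delta\|_2,\,|\epsilon|,\,\|r\|_2,\,\|E_{22}\|_2\le\|E\|_2$ and the normalization $\|q_1+Q_2u\|_2=\sqrt{1+u^\top u}$. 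The one step you deferred, the constant bookkeeping, does go through, but with a wrinkle worth making explicit because the naive estimate fails: writing $\alpha=\|E\|_2/\sigma$ and $\beta=\|v\|_2/\sigma$, the hypothesis reads $\alpha(1+5\beta)\le 1/5$, and since $\|\delta\|_2\le\|E\|_2$ one has $\rho=4\|\delta\|_2/\sigma\le 4\alpha$; your Lipschitz bound is then $\sigma^{-1}(\|F\|_2+2\|g\|_2\rho)\le 2\alpha+8\alpha\beta+8\alpha^2$, and if you use the decoupled worst cases $\alpha\le 1/5$ and $\alpha\beta\le 1/25$ separately this only gives $26/25>1$. You must use the hypothesis jointly, substituting $\alpha\beta\le 1/25-\alpha/5$, which yields $2\alpha+8\alpha\beta+8\alpha^2\le 8/25+(2/5)\alpha+8\alpha^2\le 18/25<1$; the same substitution bounds the self-map constant by $1/4+2\alpha+4\alpha\beta+4\alpha^2\le 1/4+4/25+(6/5)\alpha+4\alpha^2\le 81/100<1$. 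So Banach's fixed-point theorem applies exactly as you describe (this coupling is what you correctly flagged as the delicate point), and your proof is a legitimate self-contained replacement for the paper's external citation.
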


Now we are ready to prove Theorem 1. We first show that $(YY^\top )^{-1}\hat BX(\hat BX)^\top $ is close to $\Sigma_{yy}^{-1}B_*\Sigma_{xx}B_*^\top $, then we apply Lemma \ref{lemperteigenvec} to show the consistency of canonical vectors. Without loss of generality, let $k=1$. If \eqref{thm1ineqrate} holds for $k=1$, then the results of Theorem 1 follow the union bound inequality.

By the triangle inequality, the $2$-norm of $(YY^\top )^{-1}\hat BX(\hat BX)^\top  - \Sigma_{yy}^{-1}B_*\Sigma_{xx}B_*^\top $ can be bounded by
\begin{align}\label{thmonelatotal}
    & \|(YY^\top )^{-1}\hat BX(\hat BX)^\top  - \Sigma_{yy}^{-1}B_*\Sigma_{xx}B_*^\top \|_2\nonumber\\
    \leq & \bigg\|(\frac{1}{n}YY^\top )^{-1}\frac{1}{n}\hat BX(\hat BX)^\top  - \Sigma_{yy}^{-1}\frac{1}{n}\hat BX(\hat BX)^\top  + \Sigma_{yy}^{-1}\frac{1}{n}\hat BX(\hat BX)^\top  - \Sigma_{yy}^{-1}\frac{1}{n}B_*X(B_*X)^\top  \nonumber\\
    & + \Sigma_{yy}^{-1}\frac{1}{n}B_*X(B_*X)^\top  - \Sigma_{yy}^{-1}B_*\Sigma_{xx}B_*^\top \bigg\|_2\nonumber\\
    \leq & \bigg\|(\frac{1}{n}YY^\top )^{-1}\frac{1}{n}\hat BX(\hat BX)^\top  - \Sigma_{yy}^{-1}\frac{1}{n}\hat BX(\hat BX)^\top \bigg\|_2 + \bigg\|\Sigma_{yy}^{-1}\frac{1}{n}\hat BX(\hat BX)^\top  - \Sigma_{yy}^{-1}\frac{1}{n}B_*X(B_*X)^\top \bigg\|_2 \nonumber\\
    & + \bigg\|\Sigma_{yy}^{-1}\frac{1}{n}B_*X(B_*X)^\top  - \Sigma_{yy}^{-1}B_*\Sigma_{xx}B_*^\top \bigg\|_2 \nonumber\\
    = & I_1 + I_2 + I_3.
\end{align}
We consider $I_2$ first. By Assumption \ref{onelassoassumcov}, we have
\begin{align}\label{pfthmonelassoi1}
    I_2 \leq & \bigg\|\Sigma_{yy}^{-1}\bigg\|_2 \bigg\|\frac{1}{n}\hat BX(\hat BX)^\top  - \frac{1}{n}B_*X(B_*X)^\top \bigg\|_2 \nonumber\\
    \leq & \frac{1}{K_1} \|B_*H_XB_*^\top  - \hat BH_XB_*^\top  + \hat BH_XB_*^\top  - \hat BH_X\hat B^\top \|_2\nonumber\\
    = & \frac{1}{K_1} \|(B_* - \hat B)H_X(B_* + \hat B)^\top \|_2\nonumber\\
    \leq & \frac{1}{K_1}\sqrt{\|(B_* - \hat B)H_X(B_* - \hat B)^\top \|_2\| (B_* + \hat B)H_X(B_* + \hat B)^\top \|_2},
\end{align}
where $H_X = \frac{1}{n}\sum_{i=1}^n X_iX_i^\top $, and the third inequality is because of the Cauchy-Schwarz inequality.

The first term in the right-hand side of \eqref{pfthmonelassoi1} $\|(B_* - \hat B)H_X(B_* - \hat B)^\top \|_2$ can be bounded by
\begin{align}\label{pfthmonelassoi1fir}
    \|(B_* - \hat B)H_X(B_* - \hat B)^\top \|_2 \leq & {\rm tr}((B_* - \hat B)H_X(B_* - \hat B)^\top )\nonumber\\
    \leq & d\max_{k}(\hat \beta_k - \beta_k^*)^\top H_X(\hat \beta_k - \beta_k^*)\nonumber\\
    \lesssim & ds^*\log p/n,
\end{align}
where tr$(A)$ is the trace of a matrix $A$, and the last inequality is by Lemma \ref{lemb3ning}. The second term in \eqref{pfthmonelassoi1} $(B_* + \hat B)H_X(B_* + \hat B)^\top $ can be bounded by
\begin{align}\label{pfthmonelassoi1sec}
    \|(B_* + \hat B)H_X(B_* + \hat B)^\top \|_2 = & \|(\hat B - B_* + 2B_*)H_X(\hat B - B_* + 2B_*)^\top \|_2\nonumber\\
    \leq & 2\|(\hat B - B_*)H_X(B_* - \hat B)^\top  + 4B_*H_X B_*^\top \|_2\nonumber\\
    \leq & 2\|(\hat B - B_*)H_X(B_* - \hat B)^\top \|_2 + 8\|B_*H_X B_*^\top \|_2\nonumber\\
    \lesssim & ds^*\log p/n + \|B_*H_X B_*^\top \|_2,
\end{align}
where the first inequality is by the Cauchy-Schwarz inequality, the second inequality is by the triangle inequality, and the third inequality is by \eqref{pfthmonelassoi1fir}.

Now consider bounding $\|B_*H_X B_*^\top \|_2$. By the triangle inequality, we have $\|B_*H_X B_*^\top \|_2\leq \|B_*\Sigma_{xx} B_*^\top \|_2 + \|B_*\Sigma_{xx} B_*^\top  - B_*H_X B_*^\top \|_2$. Therefore, we need to show that $\|B_*\Sigma_{xx} B_*^\top  - B_*H_X B_*^\top \|_2$ is small, which can be shown directly by Lemma \ref{lemconcenEqq}. To see this, note that $B_*X_i$ is still a sub-Gaussian random vector. Let $t = C_2 \sqrt{\log (d + p)/n}$ for some constant $C_2>0$ in Lemma \ref{lemconcenEqq}. By Lemma \ref{lemconcenEqq}, with probability at least $1 - d^2/p$, we have
\begin{align*}
    \|B_*\Sigma_{xx} B_*^\top  - B_*H_X B_*^\top \|_{\max} \lesssim \sqrt{\frac{\log (d+p)}{n}},
\end{align*}
which implies
\begin{align}\label{thmonelaasigaHcl}
    \|B_*\Sigma_{xx} B_*^\top  - B_*H_X B_*^\top \|_2 \leq d\|B_*\Sigma_{xx} B_*^\top  - B_*H_X B_*^\top \|_{\max} \lesssim d\sqrt{\frac{\log (d + p)}{n}}.
\end{align}
By Assumption \ref{onelassoassumcov}, Proposition \ref{propSigmaybound} and \eqref{covarlr}, we have
\begin{align*}
    \|B_*\Sigma_{xx} B_*^\top \|_2 = & \|\Sigma_{yy} - \Sigma_{\epsilon_y\epsilon_y}\|_2
    \leq \|\Sigma_{yy}\|_2 + \|\Sigma_{\epsilon_y\epsilon_y}\|_2\leq C_3,
\end{align*}
for some constant $C_3>0$. Therefore, with probability at least $1-d^2/p$, the right hand side of \eqref{pfthmonelassoi1sec} can be further bounded by
\begin{align}\label{pfthmonelassoi1sec2}
    \|(B_* + \hat B)H_X(B_* + \hat B)^\top \|_2 \lesssim & ds^*\log p/n + d\sqrt{\frac{\log (d+p)}{n}} + C_3.
\end{align}
Plugging \eqref{pfthmonelassoi1fir} and \eqref{pfthmonelassoi1sec2} into \eqref{pfthmonelassoi1}, we have 
\begin{align}\label{thmonelaI2bound}
    I_2\lesssim \sqrt{ds^*\log p/n}.
\end{align}
The first term $I_1$ in \eqref{thmonelatotal} can be bounded by
\begin{align}\label{thmonelaI1bound1}
    I_1 \leq & \bigg\|\bigg(\frac{1}{n}YY^\top \bigg)^{-1} - \Sigma_{yy}^{-1}\bigg\|_2\bigg\|\frac{1}{n}\hat BX(\hat BX)^\top \bigg\|_2.
\end{align}
By letting $t = C_4 \sqrt{\log (d + p)/n}$ for some constant $C_4>0$ in Lemma \ref{lemconcenEqq}, with probability at least $1 - d^2/p$, we have
\begin{align*}
    & \bigg\|\frac{1}{n}YY^\top  - \Sigma_{yy}\bigg\|_2 \leq d\bigg\|\frac{1}{n}YY^\top  - \Sigma_{yy}\bigg\|_{\max} \lesssim d\sqrt{\log (d+ p)/n}.
\end{align*}
For any unit vector $u$, by Lemma \ref{lemmaLinearSystems} and noting that Proposition \ref{propSigmaybound} implies $\kappa(\Sigma_{yy})\leq C_5$, we have 
\begin{align*}
    \bigg\|\bigg(\frac{1}{n}YY^\top \bigg)^{-1}u - \Sigma_{yy}^{-1}u\bigg\|_2 \lesssim d\sqrt{\log (d + p)/n},
\end{align*}
which implies
\begin{align}\label{thmonelaI1fir}
    \bigg\|\bigg(\frac{1}{n}YY^\top \bigg)^{-1} - \Sigma_{yy}^{-1}\bigg\|_2 \lesssim d\sqrt{\log (d + p)/n}.
\end{align}
The second term in the right-hand side of \eqref{thmonelaI1bound1} can be bounded by
\begin{align*}
   \bigg\|\frac{1}{n}\hat BX(\hat BX)^\top \bigg\|_2 = &\bigg\|\frac{1}{n}\hat BX(\hat BX)^\top  - \frac{1}{n} B_*X( B_*X)^\top  + \frac{1}{n} B_*X( B_*X)^\top \bigg\|_2 \nonumber\\
   \leq & \bigg\|\frac{1}{n}\hat BX(\hat BX)^\top  - \frac{1}{n} B_*X( B_*X)^\top \bigg\|_2  + \bigg\|\frac{1}{n} B_*X( B_*X)^\top \bigg\|_2,
\end{align*}
which can be bounded by a constant using the similar approach as in bounding $I_2$. Together with \eqref{thmonelaI1bound1} and \eqref{thmonelaI1fir}, we have 
\begin{align}\label{thmonelaI1bound}
    I_1\lesssim d\sqrt{\log (d + p)/n}.
\end{align}
By Proposition \ref{propSigmaybound} and \eqref{thmonelaasigaHcl}, it can be verified that the term $I_3$ can be bounded by 
\begin{align}\label{thmonelaI3bound}
    I_3 = & \bigg\|\Sigma_{yy}^{-1}\frac{1}{n}B_*X(B_*X)^\top  - \Sigma_{yy}^{-1}B_*\Sigma_{xx}B_*^\top \bigg\|_2\nonumber\\
    \leq &  \bigg\|\Sigma_{yy}^{-1}\bigg\|_2\bigg\|\frac{1}{n}B_*X(B_*X)^\top  - B_*\Sigma_{xx}B_*^\top \bigg\|_2\nonumber\\
    \lesssim & d\sqrt{\log (d + p)/n}.
\end{align}
Plugging \eqref{thmonelaI2bound}, \eqref{thmonelaI1bound} and \eqref{thmonelaI3bound} in \eqref{thmonelatotal}, we have 
\begin{align}\label{pfE}
    \|(YY^\top )^{-1}\hat BX(\hat BX)^\top  - \Sigma_{yy}^{-1}B_*\Sigma_{xx}B_*^\top \|_2 & \lesssim d\sqrt{\log (d + p)/n} + \sqrt{ds^*\log p/n}\nonumber\\
    & \lesssim d\sqrt{\log p/n} + \sqrt{ds^*\log p/n},
\end{align}
where the last inequality is because $d<p$.

To apply Lemma \ref{lemperteigenvec}, we need to show that $\|\delta\|_2$ in Lemma \ref{lemperteigenvec} is small. Let $E = (YY^\top )^{-1}\hat BX(\hat BX)^\top  - \Sigma_{yy}^{-1}B_*\Sigma_{xx}B_*^\top $. By \eqref{pfE}, we have
\begin{align*}
    \|\delta\|_2 = & \|Q_2^\top Eq_1\|_2 \leq \|Q_2^\top E\|_2\leq \|Q_2\|_2\|E\|_2 
    \lesssim \sqrt{d^2\log p/n} + \sqrt{ds^*\log p/n}\\
    \lesssim & \sqrt{d(d +s^*)\log p/n},
\end{align*}
where $Q$ is as in Assumption \ref{onelassoassumA}, and the last inequality follows the fact $\sqrt{w_1} + \sqrt{w_2}\leq \sqrt{2w_1 +2w_2}$ for $w_1,w_2>0$. By Assumption \ref{onelassoassumA}, we can apply Lemma \ref{lemperteigenvec}. This yields
\begin{align}\label{pfthmonelaaclose}
    \|a_1 - \hat a_1\|_2\lesssim \sqrt{d(d +s^*)\log p/n}.
\end{align}
For the second estimated canonical vector $\hat b_1$, we have
\begin{align}\label{pfthmonelab1}
    \|b_1 - \hat b_1\|_2 = & \bigg\|\frac{B_*^\top a_1}{\|B_*^\top a_1\|_2} - \frac{\hat B^\top \hat a_1}{\|\hat B^\top \hat a_1\|_2} \bigg\|_2 \lesssim \|B_*^\top a_1 - \hat B^\top \hat a\|_2\nonumber\\
    \leq & \|B_*^\top a_1 - B_*^\top \hat a_1\|_2 + \|B_*^\top \hat a_1- \hat B^\top \hat a_1\|_2\nonumber\\
    \leq & \|B_*^\top \|_2 \|a_1 - \hat a_1\|_2 + \|B_*^\top - \hat B^\top \|_2\|\hat a_1\|_2.
\end{align}
By Assumption \ref{oneassumB} and \eqref{pfthmonelaaclose}, 
\begin{align}\label{pfb1}
    \|B_*^\top \|_2 \|a_1 - \hat a_1\|_2 \lesssim \sqrt{d(d +s^*)\log p/n}.
\end{align}
Note that 
\begin{align}\label{pfthmonelab2}
    \|B_*^\top - \hat B^\top \|_2 \leq \|B_*^\top - \hat B^\top \|_F = \|B_*- \hat B\|_F  \lesssim \sqrt{\frac{ds^*\log p }{n}} ,
\end{align}
where the last inequality is because of Lemma \ref{lemb3ning}. By \eqref{pfthmonelaaclose}, $\|\hat a_1\|_2 \leq \|a_1 - \hat a_1\|_2 + \|a_1\|_2 \lesssim 1$. Therefore, combining \eqref{pfthmonelaaclose}, \eqref{pfthmonelab1}, \eqref{pfb1} and \eqref{pfthmonelab2} yields
\begin{align*}
    \|b - \hat b\|_2 \lesssim \sqrt{d(d +s^*)\log p/n} + \sqrt{\frac{ds^*\log p }{n}} \lesssim \sqrt{d(d +s^*)\log p/n}, 
\end{align*}
with probability at least $1-Cd^2/p$. The results of Theorem 2.1 follow the union bound inequality. Thus, we finish the proof.

\bibliography{refs}

\end{document}